\def\A{\mathcal A}
\def\B{\mathcal B}
\def\L{\mathcal L}
\def\N{\mathbb N}
\def\uu{\mathbf u}
\def\vv{\mathbf v}
\def\xx{\mathbf x}
\def\yy{\mathbf y}
\def\dd{\mathbf d}
\newcommand{\pext}[2]{\#\mbox{Pext}_{#1}\left(#2\right)}
\newcommand{\CR}[2]{\mathcal{CR}_{#1}\left(#2\right)}
\newcommand{\RR}[2]{\mathcal{R}_{#1}\left(#2\right)}
\newtheorem{thm}{Theorem}
\newtheorem{coro}{Corollary}
\newtheorem{lem}{Lemma}
\newtheorem{prop}{Proposition}
\newtheorem{defi}{Definition}
\begin{document}
	
\setcounter{page}{1}
\publyear{22}
\papernumber{2102}
\volume{185}
\issue{1}

  \finalVersionForARXIV

	\title{On Morphisms Preserving Palindromic Richness}
	
	\author{Francesco Dolce, Edita Pelantov\'a\thanks{Address for correspondence: Czech Technical University in Prague,
                           160 00 Prague 6 - Dejvice,  Czech Republic. \newline \newline
                  \vspace*{-6mm}{\scriptsize{Received March 2021; \ accepted Febryary 2022.}}}
     \\
	Czech Technical University in Prague\\
    160 00 Prague 6 - Dejvice, Czech Republic\\
   {dolcefra@fit.cvut.cz, edita.pelantova@fjfi.cvut.cz}
    }
	
	\maketitle

\runninghead{F. Dolce and E. Pelantov\'a}{On Morphisms Preserving Palindromic Richness}

	\begin{abstract}
		It is known that each word of length $n$ contains at most $n+1$ distinct palindromes.
		A finite \emph{rich word} is a word with maximal number of palindromic factors.
		The definition of palindromic richness can be naturally extended to infinite words.
		Sturmian words and Rote complementary symmetric sequences form two classes of binary rich words, while episturmian words and words coding symmetric $d$-interval exchange transformations give us other examples on larger alphabets.
		In this paper we look for morphisms of the free monoid, which allow us to construct new rich words from already known rich words.
		We focus on morphisms in Class $P_{ret}$.
		This class contains morphisms injective on the alphabet and satisfying a particular palindromicity property: for every morphism $\varphi$ in the class there exists a palindrome $w$ such that $\varphi(a)w$ is a first complete return word to $w$ for each letter $a$.
		We characterize $P_{ret}$ morphisms which preserve richness over a binary alphabet.
		We also study marked $P_{ret}$ morphisms acting on alphabets with more letters.
		In particular we show that every Arnoux-Rauzy morphism is conjugated to a morphism in Class $P_{ret}$ and that it preserves richness.
	\end{abstract}

\begin{keywords}
Palindromic richness, morphisms, bispecial factors, Class $P_{ret}$, Class $P$.
	\end{keywords}

	\section{Introduction}
	
	Palindromes are words that coincide when read left to right and right to left.
	In natural languages we can find several known, and sometimes amusing, examples of palindromes.
	In the theory of formal languages, questions related to palindromes are usually elegantly formulated but difficult to solve.
	An example of this is the famous HKS conjecture, whose first vague formulation appears in 1995 in~\cite{HoKnSi95}, and that still remains unsolved in its general case.
	According to this conjecture, if the language of a fixed point of a primitive morphism contains infinite many palindromes, then the language  coincides with the language of a fixed point of a morphism from Class $P$ (see Definition~\ref{def:classP} for the definition of Class $P$ and ~\cite{LaPe16} for a more detailed explanation of the HKS conjecture).
	Another recent conjecture is the one about palindromic length of aperiodic words, defined in~\cite{FrPuZa13}, which still resists many attempts to prove it.
	
	This article is devoted to words rich in palindromes.
	The definition of richness is a natural consequence of a simple observation, made by Droubay, Justin and Pirillo in~\cite{DrJuPi01}, that each word of length $n$ contains at most $n+1$ distinct palindromes, counting also the empty word $\varepsilon$ which is considered to be a palindrome of length zero.
	A word of length $n$ is said to be \emph{rich} in palindromes if $n+1$ distinct palindromes occur in it.
	For example, ${\tt ananas}$ (Czech and Italian for "pineapple") is rich in taste but also in palindromes since it contains the 7 palindromes: $\varepsilon, {\tt a}, {\tt n}, {\tt s}, {\tt ana}, {\tt nan}, {\tt anana}$.
	Another rich word is ${\tt pizza}$, since it contains the palindromes $\varepsilon, {\tt a}, {\tt i}, {\tt p}, {\tt z}, {\tt zz}$.
	A counterexample is given by the word ${\tt hawaiianpizza}$ of length $13$ that only contains $12$ palindromes, namely $\varepsilon, {\tt a}, {\tt h}, {\tt i}, {\tt n}, {\tt p}, {\tt w}, {\tt z}, {\tt ii}, {\tt zz}, {\tt awa}, {\tt aiia}$, and thus is not rich in palindromes (nor in taste, according to the first author).
	
	Every factor of a rich word is rich as well.
	Therefore the set of all rich words over a given finite alphabet is a so-called \emph{factorial language}.
	Its factor complexity is known to be subexponential (see ~\cite{Ru17}) and superpolynomial (see~\cite{GuShSh16}).
	Nevertheless the gap between the best known upper and lower bounds on number of rich words of length $n$ over a fixed alphabet is still huge.
	An efficient algorithm to enumerate rich words over a fixed alphabet up to a given length is provided in~\cite{RuSh}.
	To improve the lower bound requires to construct more rich words.
	In this paper we look for rewriting rules (formally morphisms of the free monoid), which allow to construct new rich words from already known rich words.
	An example of such a rewriting rule is the so-called \emph{Fibonacci morphism}: ${\tt 0}$ rewrites to ${\tt 01}$ and ${\tt 1}$ rewrites to ${\tt 0}$.
	Applying the Fibonacci morphism to a rich word, say ${\tt 0001110}$, we get a new rich word, i.e., ${\tt 01010100001}$.
	
	Because of the result of J. Vesti on extendability of rich words (see~\cite{Ve14}), it is enough to apply rewriting rules only to infinite words, i.e., to sequences of the form $\uu = u_0 u_1 u_2 \cdots$, where $u_i$ is an element of a finite alphabet $\A$ for each $i \in\N$.
	The definition of palindromic richness can be naturally extended to infinite words.
	We say that $\uu$ is rich in palindromes if all its finite prefixes are rich in palindromes.
	The most studied class of binary words, namely Sturmian words, consists of words rich in palindromes.
	Therefore rich words can be considered as one of the possible generalizations of Sturmian words.
	The Rote complementary symmetric sequences form another class of binary rich words.
	Except for these two classes, only several singular examples of binary rich words are known (e.g., the period doubling word).
	On $d$-ary alphabets, with $d \geq 3$, two disjoint classes of rich words were found: the episturmian words and the words coding $d$-interval exchange transformation with the symmetric permutation of intervals.
	On ternary alphabets a further class of rich words is described in~\cite{BaPeSt09}.
	
	Here we study morphisms in Class $P_{ret}$.
	A morphism $\varphi$ is in this class if it is injective and if there exists a palindrome $w$ such that $\varphi(a)w$ is a first complete return word to $w$ for each letter $a$.
	This class was introduced in~\cite{BaPeSt11}.
	Harju, Vesti and Zamboni used these morphisms in~\cite{HaVeZa16} to prove a weak version of the HKS conjecture for rich words.
	The palindromic richness of an infinite word can be equivalently described by a property of its bispecial factors, therefore we concentrate on bispecial factors in images of rich words under a morphism in Class $P_{ret}$.
	This enables us to characterize $P_{ret}$ morphisms which preserve richness on a binary alphabet.
	On alphabets with more letters we obtain only partial results for the so-called marked morphisms.
	In particular we show that Arnoux-Rauzy morphisms are conjugated to morphisms in Class $P_{ret}$ and preserve the set of all rich words.
	
	The article is organized as follows.
	In Section~\ref{sec:prel} necessary definitions and tools for the study of richness are presented.
	In Section~\ref{ClassP} we define Class $P_{ret}$ morphisms and list some of their properties, including their relation to morphisms in Class~$P$.
	Marked morphisms in Class $P_{ret}$ are studied in Section~\ref{sec:marked}.
	Section~\ref{sec:main} is devoted to the proofs of our main results on morphisms preserving richness.
	These results are applied in Section~\ref{sec:derived} where we show how to find new rich words.
	Finally, some suggestions for further research are presented in the Section~\ref{sec:conclusions}.

		\section{Preliminaries}	\label{sec:prel}
		
	\subsection{Words}
	
	Let $\A$ be a finite alphabet and $\A^*$ the free monoid over $\A$.
	The elements of $\A$ are called \emph{letters} and the elements of $\A^*$ are called (finite) \emph{words}.
	The \emph{length} of a word $u = a_0 a_1 \cdots a_{n-1}$, with $a_i \in \A$, is the number $|u| = n$.
	The identity of the monoid is called the \emph{empty word} and is denoted by $\varepsilon$.
	Note that $\varepsilon$ is the only word of length $0$.
	If a word $u \in \A^*$ can be written as $u = xwy$, with $x, w, y \in \A^*$, we say that $w$ is a \emph{factor} of $u$, $x$ is a \emph{prefix} of $u$ and $y$ is a \emph{suffix} of $u$.
	If $u = a_0 a_1 \cdots a_n$ and $w = a_i a_{i+1} \cdots a_j$ we say that $i$ is an \emph{occurrence} of $w$ in $u$.
	If $z = xy \in \A^*$, then we denote $x^{-1} z = y$ and $z y^{-1} = x$.
	Given a non-empty word $u = a_0 a_1 \cdots a_{n-1}$, with $a_i \in \A$, we define its \emph{mirror image} $\overline{u}$ as $\overline{u} = a_{n-1} a_{n-2} \cdots a_0$.
	The mirror image of the empty word  is $\overline{\varepsilon} = \varepsilon$.
	A word $u$ is called a \emph{palindrome} (or \emph{palindromic}) if $u = \overline{u}$.
	
	A (right) infinite word is an infinite sequence of letters $\uu = a_0 a_1 a_2 \cdots$.
	We denote by $\A^\N$ the set of infinite words over $\A$.
	In this paper we use bolder letters to denote infinite words.
	We can extend in a natural way to infinite words the notions of factor, prefix, suffix and occurrence.
	An infinite word $\uu$ is called \emph{ultimately periodic} if $\uu = u v^\omega = u v v v \cdots$ for some  $u, v \in \A^*, v \ne \varepsilon$.
	If, moreover, $u$ can be chosen to be empty, the word $\uu$ is said to be (purely) \emph{periodic}.
	A word that is not ultimately periodic is called \emph{aperiodic}.
	A word $\uu$ is called \emph{recurrent} if every factor of $\uu$ appears at least twice (thus an infinite number of times).
	It is \emph{uniformly recurrent} if these factors appear with bounded gaps, i.e., if for every factor $u$ of $\uu$ there exists an integer $n \ge 0$ such that all factors of length $n$ of $\uu$ are of the form $x u y$ for certain $x, y \in \A^*$.
	
\medskip
	The \emph{language} of an infinite word $\uu$, is the set $\L(\uu) \subset \A^*$ of its finite factors, i.e.,
	$$
	\L(\uu) = \left\{ u \in \A^* \; : \; \uu = x u \yy \; \mbox{with } \;  x \in \A^*, \yy \in \A^\N \right\}.
	$$
	We say that the language $\L(\uu)$ is periodic (resp. aperiodic, recurrent, uniformly recurrent) if the word $\uu$ is periodic (resp. aperiodic, recurrent, uniformly recurrent).
	A language $\L(\uu)$ is said to be \emph{closed under reversal} if for every $u \in \L(\uu)$ we have $\overline{u} \in \L(\uu)$ as well.
	It is known that if $\L(\uu)$ is closed under reversal, then $\uu$ is recurrent.
	
	The set of \emph{complete return words} to a word $u \in \L(\uu)$ (with respect to the language $\L(\uu)$) is the set $\CR{\uu}{u}$ of words in $\L(\uu)$ having exactly two factors equal to $u$, one as a proper prefix and the other as a proper suffix.
	It is known that $\uu$ is uniformly recurrent if and only if it is recurrent and for every word $u \in \L(\uu)$ the set $\CR{\uu}{u}$ is finite.
	If $uw$ is a complete return word to $u$, then $w$ is called a (right) \emph{return word} to $u$.
	We denote by $\RR{\uu}{u}$ the set of return words to $u$.
	Clearly $\# \CR{\uu}{u} = \# \RR{\uu}{u}$.
	
\medskip
	Let us consider an infinite word $\uu \in \A^\N$ and its language $\L(\uu)$.
	Given a word $u$ we define the following subsets of $\L(\uu)$:
	$$
	L_{\uu}(u) = \left\{ a \in \A \; : \; au \in \L(\uu) \right\},
	\quad
	R_{\uu}(u) = \left\{ b \in \A \; : \; ub \in \L(\uu) \right\},
	$$
	called respectively the set of left extensions and the set of right extensions of $u$ in $\L(\uu)$, and the subset of $\L(\uu) \times \L(\uu)$
	$$
	B_{\uu}(u) = \left\{ (a,b) \in \A \times \A \; : \; aub \in \L(\uu) \right\}
	$$
	called the set of bi-extensions of $u$ in $\L(\uu)$ (see~\cite{acyclic} for a more detailed explanation).
	
	\begin{example}
		\label{ex:Fibo}
		Let us consider an infinite word $\uu \in \{ {\tt 0, 1} \}^\N$ such that its language $\L(\uu)$ contains the following words of length $2$: ${\tt 00}, {\tt 01}, {\tt 10}$ (but not ${\tt 11}$); and the following words of length $3$: ${\tt 001}, {\tt 010}, {\tt 100}, {\tt 101}$ (but not ${\tt 000}$).
		Then one has
		$$
		\begin{array}{ll}
			L_{\bf u}(\varepsilon) = R_{\bf u}(\varepsilon) = \left\{ {\tt 0,1} \right\},
			&
			B_{\bf u}(\varepsilon) = \left\{ ({\tt 0}, {\tt 0}), ({\tt 0}, {\tt 1}), ({\tt 1}, {\tt 0}) \right\}, \\
			L_{\bf u}({\tt 0}) = R_{\bf u}({\tt 0}) = \left\{ {\tt 0,1} \right\},
			&
			B_{\bf u}({\tt 0}) = \left\{ ({\tt 0}, {\tt 1}), ({\tt 1}, {\tt 0}), ({\tt 1}, {\tt 1}) \right\}, \\
			L_{\bf u}({\tt 1}) = R_{\bf u}({\tt 1}) = \left\{ {\tt 0} \right\},
			&
			B_{\bf u}({\tt 1}) = \left\{ ({\tt 0}, {\tt 0}) \right\}.
		\end{array}
		$$
	\end{example}
	
	A word $u \in \L(\uu)$ is \emph{left special} (resp. \emph{right special}) if $\# L_{\bf u}(u) \ge 2$ (resp $\# R_{\bf u}(u) \ge 2$).
	A \emph{bispecial word} is a word that is both left special and right special.
	We denote by $\B(\uu)$ the set of all bispecial factors of $\uu$.
	We also define the \emph{bilateral order} of a word $u$ as
	$$
	b_{\uu}(u) = \# B_{\uu}(u) - \# L_{\uu}(u) -\# R_{\uu}(u) +1.
	$$
	
	\begin{example}
		Let $\uu$ be the infinite word defined in Example~\ref{ex:Fibo}.
		Both the empty word $\varepsilon$ and the factor ${\tt 0}$ are in $\B(\uu)$, while the factor ${\tt 1}$ is neither left special nor right special.
		Moreover, one can check that $b_{\uu}(\varepsilon) = b_{\uu}({\tt 0}) = b_{\uu}({\tt 1}) = 0$.
	\end{example}
	
	An infinite word over an alphabet $\A$ is \emph{episturmian} if its language is closed under reversal and contains for each $n$ at most one word of length $n$ which is right special.
	An episturmian word is \emph{strict episturmian}, or \emph{Arnoux-Rauzy}, if it has exactly one right special word for each length and moreover each right special factor $u$ is such that $\# R_{\bf u}(u) = \# \A$.
	Note that when $\# \A = 2$ every episturmian word is also a strict episturmian word.
	Episturmian words over a binary alphabet are called \emph{Sturmian} words.

	\subsection{Richness}
	
	In~\cite{DrJuPi01} it is proved that a finite word $w$ has at most $|w| + 1$ factors that are palindromes.
	The \emph{defect} $D(w)$ of a finite word $w$ is defined as the difference between $|w| + 1$ and the actual number of palindromes contained in $w$.
	Finite words with zero defects are called \emph{rich words}.
	If $u$ is a factor of $w$, then $D(u) \le D(w)$.
	In particular, factors of a rich word are rich as well.
	Droubay, Justin and Pirillo demonstrated that a finite word $u$ is rich if and only if every prefix $v$ of $u$ has the so-called Property Ju: the longest palindromic suffix of $v$ is unioccurrent in $v$.
	A direct consequence of Property Ju is the following lemma (see~\cite[Lemma 4.4]{BaPeSt11} for a proof).
	
	\begin{lemma}[\cite{BaPeSt11}]
		\label{alternate}
		Let $u, q \in \A^*$ with $q \ne \overline{q}$.
		Assume that $u$ is rich.
		\begin{enumerate}
			\item If both $q$ and $\overline{q}$ occur in $u$, then occurrences of $q$ and $\overline{q}$ in $u$ alternate, i.e., every complete return word to $q$ in $u$ contains $\overline{q}$ and every complete return word to $\overline{q}$ in $u$ contains $q$.
			\item If $p$ is a factor of $u$ such that $q$ occurs in $p$ exactly once as a prefix of $p$ and $\overline{q}$ occurs in $p$ exactly once as a suffix of $p$, then $p$ is a palindrome.
		\end{enumerate}
	\end{lemma}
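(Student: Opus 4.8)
The plan is to use the characterization of rich words via Property Ju as the central tool, since both parts of the lemma are ultimately statements about how longest palindromic suffixes behave in a rich word. Throughout, the key structural fact is that in a rich word, the longest palindromic suffix of every prefix occurs exactly once in that prefix (unioccurrence).

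For part (1), I would argue by symmetry that it suffices to show every complete return word to $q$ contains $\overline{q}$. So suppose for contradiction that there is a complete return word $r$ to $q$ in $u$ that does not contain $\overline{q}$; that is, $r = qsq$ with $q$ occurring only at the two ends and $\overline{q}$ not occurring in $r$ at all. The idea is to locate the occurrence of $r$ inside $u$ and examine a suitable prefix $v$ of $u$ ending at the second occurrence of $q$. One then studies the longest palindromic suffix $p$ of $v$. Since $v$ ends in $q$ but $q \ne \overline{q}$, the word $q$ itself is not a palindrome, so $p$ must be strictly longer than $q$ (or otherwise constrained), and the palindrome $p$ reads the same forwards and backwards; reading its suffix $q$ backwards forces $\overline{q}$ to appear as a prefix of $p$, hence inside $v$, and one must track where this forced $\overline{q}$ sits relative to the two copies of $q$. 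The contradiction with unioccurrence should come from showing that $p$ would then occur at least twice in $v$, violating Property Ju. The delicate point is bookkeeping the relative positions: ensuring that the palindromic suffix $p$ genuinely reaches back far enough to produce a \emph{second} copy of the relevant palindrome rather than merely exhibiting $\overline{q}$, which is actually the conclusion we want.

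For part (2), the hypotheses are cleaner and I expect a direct argument. We have a factor $p$ of $u$ in which $q$ occurs exactly once (as a prefix) and $\overline{q}$ occurs exactly once (as a suffix). I would take a prefix $v$ of $u$ whose suffix is exactly this occurrence of $p$, and consider the longest palindromic suffix $s$ of $v$. By Property Ju, $s$ is unioccurrent in $v$. Since $p$ ends in $\overline{q}$ and $s$ is a suffix of $v$ ending where $p$ ends, I want to show $s$ has length at least $|p|$; if $s$ were shorter than $p$, then because $s$ is a palindrome ending in $\overline{q}$ its reversal forces $q$ to begin $s$, giving an occurrence of $q$ strictly to the right of the prefix position, contradicting that $q$ occurs in $p$ only as a prefix. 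Once $|s| \ge |p|$, the palindrome $s$ contains $p$ as a suffix, so $s$ begins with $\overline{p}$, which starts with $q$ (since $p$ starts with $q$ so $\overline p$ ends with $\overline q$—here I must orient the reversal carefully); matching this against the unique occurrence of $q$ as the prefix of $p$ should pin down $|s| = |p|$ exactly, whence $p = s$ is a palindrome.

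The main obstacle in both parts is the same: converting the qualitative statement "a palindrome read backwards forces the mirror factor to appear" into a precise positional claim that either yields the desired alternation (part 1) or forces equality $p=s$ (part 2), all while respecting the uniqueness/unioccurrence constraints. I would handle this by carefully fixing one concrete prefix $v$ of $u$ for each claim, writing out the occurrences of $q$, $\overline q$, and the longest palindromic suffix as intervals of positions, and deriving the contradiction (or the equality) from overlaps of these intervals. Since this is the content of \cite[Lemma 4.4]{BaPeSt11}, I would also cross-check the argument against that reference to confirm the orientation conventions for mirror images match those fixed in the preliminaries.
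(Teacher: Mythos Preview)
The paper does not give its own proof of this lemma: it is quoted from \cite{BaPeSt11} and the reader is referred to \cite[Lemma~4.4]{BaPeSt11} for the argument. So there is no in-paper proof to compare your proposal against.

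That said, your approach via Property~Ju (unioccurrence of the longest palindromic suffix of each prefix) is exactly the standard one used in \cite{BaPeSt11}, and your outline for part~(2) is essentially complete: if the longest palindromic suffix $s$ of the chosen prefix $v$ satisfies $|s|<|q|$, then $s$ is a suffix of $\overline{q}$ and hence a prefix of $q$, so $s$ occurs twice in $v$ (once at the start of $p$, once at the end), contradicting unioccurrence; the case $|q|\le |s|<|p|$ is as you describe; and for $|s|\ge |p|$ you get $\overline{p}$ as a prefix of $s$, hence $q$ as a suffix of $\overline{p}$ sits inside $s$, and comparing with the unique occurrence of $q$ in $p$ forces $|s|=|p|$. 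You should spell out the short $|s|<|q|$ subcase explicitly rather than implicitly assuming $|s|\ge |q|$.

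For part~(1) your sketch has a genuine loose end. If the longest palindromic suffix $p$ of the prefix $v$ ending at the second copy of $q$ satisfies $|p|\le |r|$ (where $r$ is the offending complete return word), then $\overline{q}$, being a prefix of $p$, lands inside $r$ and you are done; and if $|p|<|q|$ unioccurrence fails as above. The remaining case $|p|>|r|$ is the one you flag as ``delicate'', and your current text does not close it: the mirror of the \emph{first} $q$ inside $p$ may fall outside $r$, so you do not directly produce $\overline{q}$ in $r$. One clean way to finish is to use unioccurrence again: since $p$ has $q$ as a suffix and $|p|>|r|$, the prefix of $v$ ending at the \emph{first} copy of $q$ also has $p$ as a suffix (because $p$ starts with $\overline q$ and ends with $q$, and the two copies of $q$ are consecutive), giving a second occurrence of $p$ in $v$. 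You should make this step explicit; as written, the contradiction in this case is asserted rather than derived.
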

	
	The following result gives us a more precise characterization of rich words in case of a binary alphabet.
	
	\begin{lemma}[\cite{PeSt17}]
		\label{Montreal}
		Let $u$ be a finite word over $\{ {\tt 0, 1} \}$.
		The word $u$ is not rich if and only if there exists a non-palindromic word $q$ such that ${\tt 0} q {\tt 0}, {\tt 1} q {\tt 1}, {\tt 0} \overline{q} {\tt 1}$ and ${\tt 1} \overline{q} {\tt 0}$ occur in $u$.
	\end{lemma}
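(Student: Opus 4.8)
The plan is to prove the two implications separately, in both cases leaning on the characterisation of richness via Property Ju (every prefix has a unioccurrent longest palindromic suffix) together with Lemma~\ref{alternate}. Throughout I keep in mind that a palindrome carries its own internal reversal-closure: if a word $x$ occurs inside a palindrome, then so does $\overline{x}$.

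For the sufficiency direction (existence of $q$ $\Rightarrow$ $u$ not rich) I would argue by contradiction: assume the four factors ${\tt 0}q{\tt 0},\ {\tt 1}q{\tt 1},\ {\tt 0}\overline{q}{\tt 1},\ {\tt 1}\overline{q}{\tt 0}$ occur and that $u$ is rich. Since $q\ne\overline{q}$ is non-palindromic and both $q$ and $\overline{q}$ occur in $u$, Lemma~\ref{alternate}(1) applies, so occurrences of $q$ and $\overline{q}$ alternate. For two consecutive occurrences (one of $q$, one of $\overline{q}$) the factor $p$ they delimit has $q$ as its unique prefix-occurrence and $\overline{q}$ as its unique suffix-occurrence, so Lemma~\ref{alternate}(2) forces every such block $p$ to be a palindrome; comparing the letter just after the initial $q$ with the letter just before the terminal $\overline{q}$ then shows these inner neighbours coincide. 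I would next feed in the prescribed left/right neighbours recorded by the four factors so as to make some palindrome recur with its longest palindromic suffix no longer unioccurrent, contradicting Property Ju.

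For the necessity direction (not rich $\Rightarrow$ such a $q$ exists) I would locate the first failure of richness: a shortest prefix $v=v'a$ of $u$ that is not rich, with $v'$ rich and $a\in\{{\tt 0,1}\}$. The breakdown of Property Ju means the longest palindromic suffix $s$ of $v$ is not unioccurrent, hence occurs at least twice in $v$. Because $s$ is itself a palindrome, the factors occurring inside $s$ are closed under reversal, and because $s$ is the \emph{longest} palindromic suffix, extending its suffix-occurrence to the left cannot stay palindromic — which in the binary setting forces the two contexts of $s$ to differ by exactly the letters ${\tt 0}$ and ${\tt 1}$. From the two differently-contextualised occurrences I would read off a non-palindromic $q$ with ${\tt 0}q{\tt 0}$ and ${\tt 1}q{\tt 1}$ both present, and then obtain ${\tt 0}\overline{q}{\tt 1}$ and ${\tt 1}\overline{q}{\tt 0}$ from the internal reversal-closure of $s$ (or of a slightly larger palindromic factor spanning both occurrences).

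The hardest part, and where I would spend the most care, is the final step of the sufficiency argument. The alternation-plus-palindromic-block information by itself only links \emph{neighbouring} occurrences and is in fact consistent with a perfectly rich-looking alternating pattern of contexts, so one must combine the four prescribed configurations \emph{globally} rather than locally — and, crucially, handle the case where the occurrences of $q$ and $\overline{q}$ overlap, in which the block $p$ is shorter than $|q|+|\overline{q}|$ and the elementary neighbour bookkeeping has to be redone — in order to exhibit the offending repeated palindrome. The binary alphabet is essential all along, since it is exactly what promotes ``the two extensions differ'' into ``the extensions are ${\tt 0}$ and ${\tt 1}$''.
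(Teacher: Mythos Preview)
The paper does not prove this lemma: it is quoted from \cite{PeSt17} and invoked as a black box inside the proof of Lemma~\ref{lem:short}. There is therefore no in-paper argument to compare your sketch against.

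On its own merits, your proposal names the difficulties rather than resolving them. In the sufficiency direction, alternation of $q$ and $\overline q$ together with palindromicity of each block between consecutive occurrences yields only the local chain constraint ``right context of this occurrence $=$ left context of the next''; as you yourself concede, this is compatible with rich behaviour, and nothing in the sketch produces a concrete prefix whose longest palindromic suffix repeats. Saying you would ``combine the four configurations globally'' and ``handle the overlapping case'' is a description of the obstacle, not an argument. In the necessity direction the gap is sharper still: the longest palindromic suffix $s$ of the first non-rich prefix is a \emph{palindrome}, so it cannot itself serve as $q$, and you do not propose any candidate for $q$. Invoking ``internal reversal-closure of $s$'' only tells you that $\overline x$ is a factor of $s$ whenever $x$ is; it says nothing about the two \emph{external} letters framing $x$ inside $u$, which is precisely what the four conditions ${\tt 0}q{\tt 0},{\tt 1}q{\tt 1},{\tt 0}\overline q{\tt 1},{\tt 1}\overline q{\tt 0}$ require. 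As written, neither implication is close to complete.
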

	
	The definition of richness can be extended to infinite words.
	The \emph{defect} of an infinite word $\uu$ is defined as the number (finite or infinite)
	$$
	D(\uu) = \sup \left\{ D(w) \; : \; w \mbox{ is a prefix of } \uu \right\}.
	$$
	An infinite word $\uu = a_0 a_1 a_2 \cdots$ is \emph{rich} if for every $n \in \N$ the prefix $a_0 a_1 \cdots a_{n-1}$ of length $n$ contains exactly $n + 1$ different palindromes (see~\cite{DrJuPi01}).
	Infinite words with zero defect correspond exactly to rich words.
	Infinite words with finite (but not necessarily zero) defect are called \emph{almost rich}.
	Such words have been studied in~\cite{GlJuWiZa09}.
	A simple test for richness in periodic words was provided by Brlek, Hamel, Nivat and Reutenauer in~\cite[Theorems 4 and 6]{BrHaNiRe04}.
	
	\begin{thm}[\cite{BrHaNiRe04}]
		\label{periodicRich}
		If $\uu$ is periodic and $\L(\uu)$ contains infinitely many palindromes, then there exist two palindromes $p, q$ such that $\uu = (pq)^{\omega}$.
		
		Moreover, $\uu$ is rich if its prefix of length
		$|pq|+ \max \left\{ \left\lfloor \tfrac{|q|-|p|}{3} \right\rfloor \,, \left\lfloor \tfrac{|p|-|q|}{3} \right\rfloor \right\}$
		is rich.
	\end{thm}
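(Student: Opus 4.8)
The plan is to prove the two assertions separately: first the structural decomposition $\uu=(pq)^{\omega}$, then the quantitative richness test.

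For the decomposition I would start from the elementary observation that over a finite alphabet there are only finitely many factors of each length, so ``$\L(\uu)$ contains infinitely many palindromes'' forces palindromes of unbounded length. From this I deduce that $\L(\uu)$ is closed under reversal: writing $\uu=v^{\omega}$ with $v$ the primitive period read from position $0$, any factor $w$ occurs $|v|$-periodically, hence any palindrome $P\in\L(\uu)$ long enough (say $|P|\ge 2|v|+|w|$, which exists) contains an occurrence of $w$; then $\overline w$ occurs inside $\overline P=P$, so $\overline w\in\L(\uu)$. In particular $\overline v\in\L(\uu)$, and since $|\overline v|=|v|$ the word $\overline v$ is one of the conjugates of $v$, say $\overline v=v_2v_1$ with $v=v_1v_2$ for the corresponding split point. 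The key step is then a one-line rotation computation: from $\overline v=\overline{v_1v_2}=\overline{v_2}\,\overline{v_1}$ and $\overline v=v_2v_1$, comparing the first $|v_2|$ letters gives $\overline{v_2}=v_2$ and then $\overline{v_1}=v_1$. Thus $p:=v_1$ and $q:=v_2$ are palindromes and $\uu=v^{\omega}=(pq)^{\omega}$ (with $q=\varepsilon$ in the degenerate case where $v$ is itself a palindrome).

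For the richness test set $N:=|pq|+\max\left\{\left\lfloor \tfrac{|q|-|p|}{3}\right\rfloor,\left\lfloor \tfrac{|p|-|q|}{3}\right\rfloor\right\}$, which equals $|pq|+\lfloor\tfrac13\,||p|-|q||\rfloor$. I would use the characterisation of Droubay, Justin and Pirillo recalled above: $\uu$ is rich if and only if every prefix has Property Ju, i.e. its longest palindromic suffix is unioccurrent. Since factors of rich words are rich, only the forward implication has content, namely that richness of the prefix of length $N$ already forces Property Ju for every longer prefix. Here I would exploit periodicity: for a prefix $w$ of $(pq)^{\omega}$ with $|w|>N$, the longest palindromic suffixes fall into finitely many families, one centred in each block of the period, and translating by $|pq|$ lets one compare the configuration at $w$ with a shorter prefix already known to satisfy Property Ju. Equivalently, one can phrase the requirement through the alternation property of Lemma~\ref{alternate}: occurrences of a non-palindromic factor and of its reversal must interleave, a condition that is periodic in the prefix length and is therefore decided on a bounded initial window.

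The delicate point, and the main obstacle, is pinning down the exact cut-off $\lfloor\tfrac13||p|-|q||\rfloor$ beyond $|pq|$. I expect to need a Fine--Wilf type estimate: a palindromic factor of $(pq)^{\omega}$ longer than $N$ carries simultaneously the period $|pq|$ and its own reflection symmetry, so it must be the palindromic-closure extension of a strictly shorter palindrome that has already occurred, whence such a long palindrome is automatically a first occurrence as a suffix and cannot create a defect. Quantifying how far a palindrome can first appear while straddling the boundary between a $p$-block and a $q$-block of unequal lengths is exactly what produces the difference $|p|-|q|$ and the division by $3$, and carrying out this straddling estimate sharply is the part that requires the most care.
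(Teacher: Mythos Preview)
The paper does not prove this theorem at all: it is quoted verbatim from Brlek, Hamel, Nivat and Reutenauer~\cite{BrHaNiRe04} (Theorems~4 and~6 there), with no argument given in the present paper. So there is no ``paper's own proof'' to compare your proposal against.

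On the merits of your proposal itself: your argument for the decomposition $\uu=(pq)^{\omega}$ is correct and is essentially the standard one. The passage from ``infinitely many palindromes'' to ``closed under reversal'' via long palindromes containing every factor is fine, and the conjugacy step $\overline{v}=v_2v_1$ with $v=v_1v_2$ yielding $v_1=\overline{v_1}$, $v_2=\overline{v_2}$ is the usual trick.

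For the richness bound, however, what you have written is not a proof but a plan with an explicitly acknowledged gap. You correctly identify that the issue is controlling where new palindromes can first appear once the prefix has exceeded length $|pq|$, and you gesture at a Fine--Wilf/periodicity-plus-symmetry argument, but you do not carry it out, and in particular you give no computation that produces the constant~$\tfrac13$. The actual proof in~\cite{BrHaNiRe04} proceeds by a careful analysis of \emph{central factors} of periodic words and their palindromic structure; the $\tfrac13$ arises from a concrete combinatorial count, not from a soft periodicity observation. Your heuristic that ``a palindrome longer than~$N$ must be the palindromic closure of a shorter one already seen'' is the right intuition, but turning it into a sharp bound requires real work that your proposal does not supply. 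If you want to complete this, you should consult the original paper for the precise lemma on palindromic factors of $(pq)^{\omega}$ that drives the bound.
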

	
	To test richness in general infinite words we introduce the following notion.
	Given an infinite word $\uu \in \A^\N$ and a palindrome $p \in \L(\uu)$, we define the number of palindromic extensions of $p$ in $\uu$ as
	$$
	\pext{\uu}{p} =
	\# \left\{ a \in \A : apa \in \L(\uu) \right\}.
	$$
	
	The following relation between bispecial factors and richness in the language of an infinite word has been proved in~\cite[Theorem 11]{BaPeSt10}.
	
	\begin{thm}[\cite{BaPeSt10}]
		\label{OnlyBSfactors}
		Let $\uu \in \A^\N$ be such that $\L(\uu)$ is closed under reversal.
		Then $\uu$ is rich if and only if every bispecial factor $u \in \L(\uu)$ satisfies
		\begin{equation}
			\label{condition}
			b_{\bf u}(u) =
			\left\{
			\begin{array}{cl}
				\pext{\uu}{u} - 1 & \quad \mbox{if } u \mbox{ is a palindrome,} \\
				0	& \text{\quad otherwise.}
			\end{array}
			\right.
		\end{equation}
	\end{thm}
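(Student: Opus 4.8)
The plan is to translate the prefix-based notion of richness into a statement about bispecial factors by comparing the factor complexity $\mathcal{C}(n)=\#\{w\in\L(\uu):|w|=n\}$ with the palindromic complexity $\mathcal{P}(n)=\#\{w\in\L(\uu):|w|=n,\ w=\overline{w}\}$. Two exact identities drive the argument. First, Cassaigne's bilateral-order formula expresses the second difference of the factor complexity as
$$
\mathcal{C}(n+2)-2\,\mathcal{C}(n+1)+\mathcal{C}(n)=\sum_{u\in\B(\uu),\,|u|=n} b_{\uu}(u),
$$
where only bispecial factors matter, since $b_{\uu}(u)=0$ whenever $u$ is not both left and right special. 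Second, sending a palindrome $apa$ of length $n+2$ to its central palindrome $p$ of length $n$ is a $\pext{\uu}{p}$-to-one map, which gives
$$
\mathcal{P}(n+2)-\mathcal{P}(n)=\sum_{p=\overline{p}\in\L(\uu),\,|p|=n}\bigl(\pext{\uu}{p}-1\bigr).
$$

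The first point I would establish is that the last sum only sees bispecial palindromes. If $p=\overline{p}$ is not bispecial it is not right special, so it has a unique right extension $b$; closure of $\L(\uu)$ under reversal forces the unique left extension to be the same letter $b$, and recurrence (which follows from closure under reversal) supplies a two-sided occurrence $bpb\in\L(\uu)$, whence $\pext{\uu}{p}=1$. Thus non-bispecial palindromes contribute $0$. Setting $T(n)=\bigl(\mathcal{C}(n+1)-\mathcal{C}(n)\bigr)+2-\mathcal{P}(n)-\mathcal{P}(n+1)$, a direct check gives $T(0)=0$, and subtracting the two identities yields the exact increment
$$
T(n+1)-T(n)=\sum_{\substack{u\in\B(\uu),\,|u|=n\\ u\neq\overline{u}}} b_{\uu}(u)\;+\;\sum_{\substack{p\in\B(\uu),\,|p|=n\\ p=\overline{p}}}\bigl(b_{\uu}(p)-\pext{\uu}{p}+1\bigr).
$$
Each bracket in the palindromic sum is precisely the discrepancy of condition~(\ref{condition}) for a palindromic bispecial factor, and the first sum measures the failure of $b_{\uu}(u)=0$ for non-palindromic ones (these occur in pairs $u,\overline{u}$ with equal bilateral order).

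The backward implication is then immediate: if every bispecial factor satisfies~(\ref{condition}), every increment vanishes, so $T\equiv 0$; by the known relation between palindromic and factor complexity of reversal-closed words (the accumulated gap $\sum_n T(n)$ equals twice the defect $D(\uu)$), this forces $D(\uu)=0$, i.e. $\uu$ is rich. The forward implication, where I expect the main obstacle, again produces $T\equiv 0$ from richness, so every increment is zero; to deduce the \emph{pointwise} condition~(\ref{condition}) I must know that each summand above is individually of one sign, so that a vanishing sum forces every term to vanish. The decisive step is to prove, using richness, that the extension graph of each bispecial factor is acyclic: for the bipartite graph of a non-palindromic $u$ this gives $b_{\uu}(u)\le 0$, and for the reversal-symmetric graph of a palindrome $p$ (loops recording the $\pext{\uu}{p}$ diagonal extensions) it gives $b_{\uu}(p)-\pext{\uu}{p}+1\le 0$, so every summand is non-positive; $T\equiv 0$ then forces equality, which is exactly~(\ref{condition}), the connectedness of the graphs coming for free from the equality.

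The hard part is the acyclicity itself, and it genuinely requires richness: for general reversal-closed words the extension graph of a bispecial factor may contain cycles (for instance in the Thue--Morse word, which is reversal-closed but not rich), so no purely symmetry-based argument suffices. I would derive acyclicity from Property~Ju and Lemma~\ref{alternate}: a cycle in the extension graph would produce a non-palindromic factor $q$ together with $\overline{q}$ whose occurrences fail to interleave as required by part~(1) of Lemma~\ref{alternate}, or, via part~(2), would sandwich a factor between a prefix $q$ and suffix $\overline{q}$ forcing a longest palindromic suffix of some prefix of $\uu$ to reoccur, contradicting Property~Ju. Turning this geometric picture into a clean combinatorial contradiction for each type of cycle is the technical core of the proof.
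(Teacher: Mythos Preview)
The paper does not prove this theorem: it is quoted from~\cite{BaPeSt10} and used as a black box, so there is no in-paper proof to compare against. What can be said is that your outline is precisely the strategy of the original source and its companions: reduce richness to the identity $\mathcal{P}(n)+\mathcal{P}(n+1)=\mathcal{C}(n+1)-\mathcal{C}(n)+2$ (this is the characterization in~\cite{BuDLuGlZa09}), rewrite the increment $T(n+1)-T(n)$ via Cassaigne's second-difference formula and the palindromic-extension count, and then argue term by term. Your claim that $\sum_n T(n)=2D(\uu)$ is the Brlek--Reutenauer identity, proved in full generality for reversal-closed languages by the same authors as~\cite{BaPeSt10}; it is correct but you only need the weaker equivalence $\uu$ rich $\Leftrightarrow T\equiv 0$.

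One point to tighten. For a palindromic bispecial factor $p$ the bipartite extension graph is \emph{never} acyclic in the usual sense once there is a non-diagonal extension, since closure under reversal forces the $4$-cycle $(a,b),(b,a)$. What you actually need, and what your parenthetical ``reversal-symmetric graph with loops'' is gesturing at, is acyclicity of the \emph{folded} simple graph on the vertex set $L_{\uu}(p)=R_{\uu}(p)$ whose non-loop edges are the unordered pairs $\{a,b\}$ with $apb\in\L(\uu)$: with $\ell=\#L_{\uu}(p)$ and $m$ the number of such non-loop edges one has $b_{\uu}(p)-\pext{\uu}{p}+1=2(m-\ell+1)$, so the desired inequality is exactly ``this folded graph is a forest''. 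Make this reduction explicit before invoking Lemma~\ref{alternate}, and the forward implication goes through; the backward implication is already clean.
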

	
	Extendability of finite rich words into infinite rich words were studied by Vesti. 
	He proved the following property (see~\cite[Propositions 2.11 and 2.12]{Ve14}).
	
	\begin{prop}[\cite{Ve14}]
		\label{ExtensionToPeriodic}
		Let $u$ be a finite rich word.
		There exist an infinite aperiodic rich word and an infinite periodic rich word such that $u$ is a factor of both of them.
	\end{prop}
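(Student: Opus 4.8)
Both words will be built around a single rich palindrome having $u$ as a factor, so the first step is to produce one explicitly. Pick a letter $c\notin\A$ and set $P=u\,c\,\overline{u}$. Then $\overline{P}=u\,\overline{c}\,\overline{u}=u\,c\,\overline{u}=P$, so $P$ is a palindrome containing $u$ as a prefix. I claim $P$ is rich: its palindromic factors split into those avoiding $c$ and those containing $c$. A palindrome avoiding $c$ lies inside a single copy of $u$ or of $\overline{u}$, and since reversal is a bijection fixing palindromes, these are exactly the $|u|+1$ palindromic factors of $u$. A palindrome containing $c$ is centred at the unique occurrence of $c$ and hence equals $s\,c\,\overline{s}$ for a suffix $s$ of $u$, which gives a further $|u|+1$ of them. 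Thus $P$ has $2(|u|+1)=|P|+1$ palindromic factors and is rich. (Alternatively one may take $P=u^{(+)}$, the right palindromic closure, using the known fact that palindromic closure preserves the defect.)

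\textbf{The periodic word.}
Choose a second new letter $d\notin\A\cup\{c\}$ and set $\uu=(Pd)^\omega$. Since $P$ and $d$ are palindromes, the language of $\uu$ is closed under reversal and contains infinitely many palindromes, so Theorem~\ref{periodicRich} applies with the factorisation $p=P$, $q=d$. As $|d|=1\le|P|$, it then suffices to check that the prefix of $\uu$ of length $|Pd|+\lfloor(|P|-1)/3\rfloor$ is rich. This prefix has the form $P\,d\,P'$ with $P'$ a prefix of $P$, and its palindromic factors are again of two kinds: those avoiding $d$, which are the $|P|+1$ palindromic factors of the rich word $P$; and those containing the unique letter $d$, which are centred at $d$ and, because $P$ is a palindrome, are exactly the words $(\mathrm{suf}_\ell P)\,d\,(\mathrm{pre}_\ell P')$ for $0\le\ell\le|P'|$, giving $|P'|+1$ of them. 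Hence $P\,d\,P'$ has $(|P|+1)+(|P'|+1)=|PdP'|+1$ palindromic factors and is rich, so $\uu$ is a periodic rich word containing $u$.

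\textbf{The aperiodic word.}
Here I would construct a strictly increasing chain of rich palindromes $P=P_0\subsetneq P_1\subsetneq P_2\subsetneq\cdots$, each a prefix of the next, by iterated palindromic closure: given $P_n$ and a letter $\delta_{n+1}$, set $P_{n+1}=(P_n\delta_{n+1})^{(+)}$. Since each $P_n$ is a prefix of $P_{n+1}$, the limit $\uu=\lim_n P_n$ is a well-defined infinite word having $u$ as a prefix, and every finite prefix of $\uu$ is a prefix of some $P_n$. Using that palindromic closure preserves the defect, $P_{n+1}$ is rich as soon as $P_n\delta_{n+1}$ is rich; consequently $\uu$ is rich provided each letter $\delta_{n+1}$ is chosen so that appending it to $P_n$ keeps the word rich. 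One then arranges these choices so that the directive sequence $\delta_1\delta_2\cdots$ is not eventually periodic, which forces $\uu$ to be aperiodic.

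\textbf{Main obstacle.}
The delicate point lies entirely in this last construction: I must guarantee that a rich one-letter right extension of the current rich palindrome always exists (so the chain never stalls) and that, infinitely often, there is genuine freedom in this choice, so that an aperiodic directive can be realised while the alphabet stays finite. That this is subtle is already visible in small examples, where the admissible extension can be \emph{unique} and where a careless choice of $\delta_{n+1}$ destroys richness by making the longest palindromic suffix recur. I would establish existence of an admissible extension through Property~Ju together with Lemma~\ref{alternate}, and verify richness of $\uu$ via its bispecial factors using Theorem~\ref{OnlyBSfactors}; the unary case $u={\tt 0}^n$, where no aperiodic word over a single letter exists, must be handled separately by embedding $u$ into a Sturmian word of suitably large partial quotient.
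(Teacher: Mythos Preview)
This proposition is not proved in the present paper; it is quoted from Vesti~\cite{Ve14} (his Propositions~2.11 and~2.12). The only information the paper supplies about the original argument appears in Section~\ref{sec:derived}: ``The construction of infinite words J.~Vesti provided in the proof of Proposition~\ref{ExtensionToPeriodic} is based on repetitive application of palindromic closure.'' So there is no in-paper proof to compare with beyond that one-line description.

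Your periodic construction is complete and correct, and it is genuinely different from the palindromic-closure route. By introducing two fresh letters $c,d$ you bypass all the delicate combinatorics: the palindrome counts for $P=u\,c\,\overline u$ and for $P\,d\,P'$ reduce to a clean two-case split according to whether a factor meets the unique fresh letter, and your invocation of Theorem~\ref{periodicRich} is sound. The trade-off is that your periodic rich word lives over $\A\cup\{c,d\}$ rather than over $\A$; the statement as written does not forbid this, though Vesti's construction (via palindromic closure) stays over the original alphabet.

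Your aperiodic construction, by contrast, is only an outline, and you acknowledge this. It follows exactly the palindromic-closure strategy the paper attributes to Vesti, but the two points you isolate as the ``main obstacle'' are precisely the content of the proof: (i)~that a rich palindrome $P_n$ always admits a letter $\delta_{n+1}$ from a \emph{fixed finite} alphabet with $P_n\delta_{n+1}$ rich, and (ii)~that infinitely often there are at least two admissible choices, so that an aperiodic directive sequence can be realised. Neither of these follows directly from Property~Ju, Lemma~\ref{alternate}, or Theorem~\ref{OnlyBSfactors}; the last of these would in addition require you first to establish that $\L(\uu)$ is closed under reversal. Note too that the fresh-letter trick which made the periodic case effortless is unavailable here, since using a new letter at every stage produces a word over an infinite alphabet. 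As written, then, the aperiodic half has a genuine gap: the plan matches Vesti's, but the lemmas that make it go through are not supplied.
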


	\subsection{Morphisms}
	
	Let $\A, \B$ be two alphabets.
	A \emph{morphism} $\varphi: \A^* \to \B^*$ is a monoid morphism from $\A^*$ to $\B^*$.
	A morphism $\varphi: \A^* \to \B^*$ is called \emph{non-erasing} if $|\varphi(a)| > 0$ for all letters $a \in \A$.
	
	Let us consider a morphism $\varphi: \A^* \to \A^*$.
	If $a \in \A$ is such that $\varphi(a)$ begins with $a$ and $|\varphi^n(a)|$ tends to infinity with $n$, then the morphism is called a \emph{substitution}.
	The unique word which has all words $\varphi^n(a)$ as prefixes
	is called a \emph{fixed point} of the substitution $\varphi$ and is denoted by $\varphi^\omega(a)$.
	
	A morphism $\varphi: \A^* \to \A^*$ is called \emph{primitive} if there exists an integer $k$ such that for all $a, b \in \A$, the letter $b$ appears as a factor in $\varphi^k(a)$.
	If $\varphi$ is a primitive morphism, then the language $\L(\uu)$ is uniformly recurrent for all fixed points $\uu$ of $\varphi$ (see~\cite[Proposition 13]{PytheasFogg}).
	
	A morphism $\varphi :\A^* \to \A^*$ is \emph{right conjugated} to a morphism $\psi :\A^* \to \A^*$ (or is a \emph{right conjugate} of $\psi$) if there exists a word $x \in \A^*$, called the \emph{conjugate word}, such that $\varphi(a)x = x \psi(a)$ for each $a \in \A$.
	A \emph{left conjugated} morphism is defined symmetrically.
	We say that $\psi$ is the \emph{rightmost conjugate} of $\varphi$, if $\psi$ is right conjugated to $\varphi$ and $\psi$ is the only right conjugated to itself.
	In this case we denote such rightmost conjugate by $\varphi_R$.
	Analogously we define the \emph{leftmost conjugate} of $\varphi$ and denote it by $\varphi_L$.
	
	If there exists no rightmost conjugate (resp. no leftmost conjugate) of $\varphi$, then $\varphi$ is conjugated to itself by a non-empty conjugate word $x$.
	Such a morphism is called \emph{cyclic} and there exists $z \in \A$ such that $\varphi(a) \in z^*$ for each letter $a \in \A$.
	A fixed point of a cyclic morphism if periodic (see~\cite{Lothaire}).
	A morphism that is not cyclic is called \emph{acyclic}.
	Note that an acyclic morphism can have a periodic fixed point (see, e.g., the morphism $\varphi_2$ in Example~\ref{ex:psi1} later).
	
\medskip
	Over an alphabet $\A$ we define the morphisms $\psi_a$ and $\overline{\psi}_a$ for each $a \in \A$ as follows:
	\begin{equation}
		\label{A-R_Morph}
		\psi_a:
		\left\{
		\begin{array}{lll}
			a \mapsto a & \\
			b \mapsto ab &\text{if} & b \ne a
		\end{array}
		\right.
		\quad \text{and} \qquad
		\overline{\psi}_a:
		\left\{
		\begin{array}{lll}
			a \mapsto a & &\\
			b \mapsto ba & \text{if}  & b\ne a
		\end{array}
		\right.
	\end{equation}
	These morphisms together with permutations of letters in $\A$ are called \emph{elementary Arnoux-Rauzy morphisms}.
	All elementary Arnoux-Rauzy morphisms generate a monoid of morphisms called the \emph{Arnoux-Rauzy monoid}.
	A morphism from this monoid is said to be an \emph{Arnoux-Rauzy morphism}.
	Arnoux-Rauzy morphisms are known to send episturmian words to episturmian words (see~\cite{JuPi02}).
	
	\begin{example}
		\label{ex:varphi}
		Let us consider the morphism $\phi : \{ {\tt 0, 1} \}^* \to \{ {\tt 0, 1} \}^*$ defined by
		$$
		\phi:
		\left\{
		\begin{array}{l}
			{\tt 0} \mapsto {\tt 01} \\
			{\tt 1} \mapsto {\tt 0}
		\end{array}
		\right.
		$$
		Such a morphism, called the \emph{Fibonacci morphism}, is an Arnoux-Rauzy morphism since it can be obtained as the composition $\phi = \psi_{\tt 0} \circ \pi$, where $\pi = ({\tt 1, 0})$ is the permutation interchanging the two letters.
		Its fixed point ${\bf f} = \varphi^\omega({\tt 0})$ is called the \emph{Fibonacci word}.
		
\medskip
		Similarly, the morphism $\tau: \{ {\tt 0, 1, 2} \}^* \to \{ {\tt 0, 1, 2} \}^*$ defined by
		$$
		\tau:
		\left\{
		\begin{array}{l}
			{\tt 0} \mapsto {\tt 01} \\
			{\tt 1} \mapsto {\tt 02} \\
			{\tt 2} \mapsto {\tt 0}
		\end{array}
		\right.
		$$
		is an Arnoux-Rauzy morphism, called the \emph{Tribonacci morphism}.
		It can be obtained as the composition $\tau = \psi_{\tt 0} \circ \pi'$, where $\pi' = ({\tt 1, 2, 0})$.
		Its fixed point ${\bf t} = \tau^\omega({\tt 0})$ is called the \emph{Tribonacci word}.
	\end{example}
	
	Fixed points of Arnoux-Rauzy morphisms are known to be rich.
	
	A morphism $\psi$ over the binary alphabet $\A = \{ {\tt 0,1} \}$ is called a \emph{standard Sturmian morphism} if it belongs to the monoid generated by $\pi$ and $\phi$, where $\pi$ is the permutation of the two letters and $\phi$ is the Fibonacci morphism defined in Example~\ref{ex:varphi}.

 \section{Class $P_{ret}$}
	\label{ClassP}
	
	The basic set in which we will look for morphisms preserving richness is the so-called Class $P_{ret}$.
	This class was introduced in~\cite{BaPeSt11} to study relations between rich and almost rich words.
	
	\begin{defi}
		\label{classPret}
		A morphism $\varphi: \A^* \mapsto \B^*$ belongs to Class $P_{ret}$, if there exists a palindrome $w$ such that
		\begin{itemize}
			\item $\varphi(a) w$ is a palindromic complete return word to $w$ for each $a \in \A$,
			\item $\varphi(a) \ne \varphi(b)$ for each pair $a,b \in \A$, with $a \ne b$.
		\end{itemize}
	\end{defi}
	
	Clearly, if $\varphi \in P_{ret}$, then $\varphi$ is non-erasing.
	Indeed a complete return word to $w$ is longer than $w$ itself.
	In this paper we focus on $P_{ret}$ morphisms for which $\A = \B$.
	In Proposition~\ref{pro:unicity} we will prove that the palindrome $w$ in Definition~\ref{classPret} is unique.
	Following~\cite{HaVeZa16}, we call it the \emph{marker} of the morphism $\varphi \in P_{ret}$.
	
	\begin{example}
		\label{ex:psi1}
		One can easily check that the following standard Sturmian morphisms
		$$
		\varphi_1:
		\left\{
		\begin{array}{l}
			{\tt 0} \mapsto {\tt 010} \\
			{\tt 1} \mapsto {\tt 01001}
		\end{array}
		\right.
		\quad \mbox{and} \quad
		\varphi_2:
		\left\{
		\begin{array}{l}
			{\tt 0} \mapsto {\tt 1110} \\
			{\tt 1} \mapsto {\tt 1}
		\end{array}
		\right.
		$$
		belong to Class $ P_{ret}$ with markers respectively $w_1 = {\tt 010010}$ and $w_2 = {\tt 111}$.
	\end{example}
	
	\begin{example}
		\label{ex:sigma}
		Let $p,q, \ell$ be positive integers, with $p \ne q$.
		The morphism
		$$
		\sigma:
		\left\{
		\begin{array}{l}
			{\tt 0} \mapsto {\tt 0}^\ell {\tt 1}^p \\
			{\tt 1} \mapsto {\tt 0}^\ell {\tt 1}^q
		\end{array}
		\right.
		$$
		belongs to Class $P_{ret}$ and the associated marker is $w = {\tt 0}^\ell$.
	\end{example}
	
	\begin{example}
		\label{Ex:AR}
		For each $a \in \A$, the elementary Arnoux-Rauzy morphism $\psi_a$ defined in~(2) belongs to Class $ P_{ret}$ and $w = a$ is its marker.
		A morphism which permutes letters of the alphabet, i.e., a permutation on $\A$, is in Class $P_{ret}$ as well, the associated marker being $w = \varepsilon$.
		The morphism $\overline{\psi}_a$ does not belong to Class $P_{ret}$, but is conjugated to $\psi_a \in P_{ret}$ with conjugate word $a$.
	\end{example}
	
	\begin{coro}
		Each Arnoux-Rauzy morphism is conjugated to a morphism in Class $P_{ret}$.
	\end{coro}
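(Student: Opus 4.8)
The plan is to reduce the statement to two ingredients: that the elementary generators of the Arnoux--Rauzy monoid already lie in (or are conjugate to morphisms in) Class $P_{ret}$, and that Class $P_{ret}$ is stable under composition modulo conjugation. First I would recall that, by definition, every Arnoux--Rauzy morphism is a finite composition $\eta = \mu_1 \circ \cdots \circ \mu_k$ of elementary Arnoux--Rauzy morphisms, each $\mu_i$ being one of $\psi_a$, $\overline{\psi}_a$, or a permutation of $\A$. Example~\ref{Ex:AR} already settles the behaviour of the generators: each $\psi_a$ and each permutation lies in $P_{ret}$ (with markers $a$ and $\varepsilon$ respectively), while $\overline{\psi}_a$ is not in $P_{ret}$ but is conjugate, via the conjugate word $a$, to $\psi_a \in P_{ret}$. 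Hence every generator is conjugate to a morphism of $P_{ret}$.

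Second, I would show that conjugacy is compatible with composition, so that these generator-wise conjugacies can be assembled. The two identities I would establish are: if $\beta$ is right conjugate to $\beta'$ through the word $q$, then $\gamma \circ \beta$ is right conjugate to $\gamma \circ \beta'$ through the word $\gamma(q)$, since $(\gamma\circ\beta)(a)\,\gamma(q) = \gamma(\beta(a)q) = \gamma(q\,\beta'(a)) = \gamma(q)\,(\gamma\circ\beta')(a)$; and, symmetrically, if $\alpha$ is right conjugate to $\alpha'$ through $p$, then $\alpha(u)\,p = p\,\alpha'(u)$ for every word $u$ (by a one-line induction on $|u|$), whence $\alpha\circ\delta$ is right conjugate to $\alpha'\circ\delta$ through $p$. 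The analogous statements for left conjugacy hold by symmetry. Replacing each factor $\overline{\psi}_a$ in $\eta$ by $\psi_a$ one at a time, treating the remaining factors as fixed left/right context, each replacement is a conjugacy; by transitivity $\eta$ is conjugate to a composition $\nu_1 \circ \cdots \circ \nu_k$ in which every $\nu_i \in P_{ret}$.

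Third --- and this is where the real work lies --- I would prove that $P_{ret}$ is closed under composition, i.e. if $\varphi \in P_{ret}$ has marker $v$ and $\phi \in P_{ret}$ has marker $w$, then $\varphi \circ \phi \in P_{ret}$ with marker $m = \varphi(w)\,v$. Injectivity on the alphabet is inherited, because the markers make the decomposition of an image into the blocks $\varphi(b)v$ and $\phi(b)w$ unambiguous. For the palindromicity conditions I would exploit that each $\varphi(b)v$ is a palindrome and $v = \overline{v}$, which rewrites as $\varphi(b)\,v = v\,\overline{\varphi(b)}$; iterating this identity over the letters of a palindrome $P$ and using $\overline{P}=P$ gives $\varphi(P)\,v = v\,\overline{\varphi(P)} = \overline{\varphi(P)\,v}$, so $\varphi(P)\,v$ is again a palindrome. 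Applying this to $P = \phi(a)w$ --- which is a palindromic complete return word to $w$ because $\phi \in P_{ret}$ --- together with the identity $(\varphi\circ\phi)(a)\,m = \varphi(\phi(a)\,w)\,v$ shows that $(\varphi\circ\phi)(a)\,m$ is a palindrome for each $a$.

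The main obstacle is the remaining part of this closure step: verifying that $(\varphi\circ\phi)(a)\,m$ is genuinely a \emph{complete return word} to $m$, i.e. that $m = \varphi(w)\,v$ occurs exactly twice in it, as a proper prefix and a proper suffix, with no spurious internal occurrence. Here I would rely on the recognizability encoded in $P_{ret}$ morphisms --- the marker $v$ delimiting the blocks $\varphi(b)v$ unambiguously, and $w$ doing the same for $\phi$ --- to locate every occurrence of $\varphi(w)\,v$ and match it bijectively with an occurrence of $w$ inside $\phi(a)w$, of which there are exactly two. Granting the closure lemma, a straightforward induction iterating it across $\nu_1 \circ \cdots \circ \nu_k$ produces a single morphism of $P_{ret}$ conjugate to $\eta$, which is exactly the claim.
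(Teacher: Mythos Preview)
Your proposal is correct and follows essentially the same approach as the paper: reduce to the elementary generators via Example~\ref{Ex:AR}, use that conjugacy is compatible with composition, and use that Class~$P_{ret}$ is closed under composition. The only difference is that the paper dispatches the latter two facts by citation (to~\cite[Lemma~2.5]{HaVeZa16} and~\cite{BaPeSt11}, the second also being recorded as Item~(4) of Lemma~\ref{palindromPreserved}), whereas you sketch direct proofs of them.
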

	\begin{proof}
		As proven in~\cite[Lemma 2.5]{HaVeZa16}, if a morphism $\varphi$ is conjugated to $\varphi'$ and $\psi$ is conjugated to $\psi'$, then their composition $\varphi \circ \psi$ is conjugated to $\varphi' \circ \psi'$.
		Using the previous example, every elementary Arnoux-Rauzy morphism is conjugated to a morphism in Class $P_{ret}$.
		As shown in~\cite{BaPeSt11}, the Class $P_{ret}$ is closed under composition and thus every composition of elementary Arnoux-Rauzy morphisms belongs to Class $P_{ret}$.
	\end{proof}
	
	Let us list some properties of morphisms in Class $P_{ret}$.
	They are stated in~\cite{BaPeSt11} as Remark 5.2 and Proposition 5.3.
	
	\begin{lem}[\cite{BaPeSt11}]
		\label{palindromPreserved}
		Let $\varphi, \sigma: \A^* \mapsto \A^*$ be morphisms in Class $P_{ret}$.
		Denote by $w_\varphi$ and $w_\sigma$ the markers associated to $\varphi$ and $\sigma$, respectively.
		For every $u, v \in \A^*$:
		\begin{enumerate}
\itemsep=0.95pt
			\item $\varphi(u) = \varphi(v)$ implies $u = v$,
			\item $\overline{ \varphi(u) w_\varphi } = \varphi( {\overline{u}} ) w_\varphi$,
			\item $u$ is a palindrome if and only if $\varphi(u) w_\varphi$ is a palindrome,
			\item $\sigma \circ \varphi$ is a morphism in Class $P_{ret}$ and its associated marker is $\sigma(w_\varphi) w_\sigma$.
		\end{enumerate}
	\end{lem}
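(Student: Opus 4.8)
The plan is to base everything on a single \emph{synchronization} observation describing how the marker detects the block structure of an image. Fix $\varphi \in P_{ret}$ with marker $w = w_\varphi$ and write $u = a_1 a_2 \cdots a_n$. First I would record an auxiliary \emph{prefix fact}: for every nonempty $v \in \A^*$ the word $\varphi(v)w$ has $w$ as a prefix. This follows by a short induction on $|v|$: writing $v = a v'$ and using the inductive hypothesis $\varphi(v')w = w\eta$, one gets $\varphi(v)w = \varphi(a)\bigl(\varphi(v')w\bigr) = \bigl(\varphi(a)w\bigr)\eta$, which begins with $w$ because $\varphi(a)w$ does, being a complete return word to $w$. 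From this I would deduce the \emph{synchronization lemma}: the occurrences of $w$ in $\varphi(u)w$ are exactly the block boundaries $p_i = |\varphi(a_1 \cdots a_i)|$, $i = 0, \ldots, n$. Each boundary carries an occurrence by the prefix fact applied to the tail $\varphi(a_{i+1}\cdots a_n)w$; conversely there are no others, since an occurrence starting strictly between $p_i$ and $p_{i+1}$ would lie entirely inside the factor $\varphi(a_{i+1})w$ sitting at position $p_i$, contradicting that $\varphi(a_{i+1})w$ is a complete return word to $w$, whose only occurrences of $w$ are its prefix and its suffix.

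For part (1), suppose $\varphi(u) = \varphi(v)$, hence $\varphi(u)w = \varphi(v)w$. Synchronization pins down the second occurrence of $w$ in this common word, so the first complete return word to $w$ read from the left equals both $\varphi(a_1)w$ and $\varphi(v_1)w$; thus $\varphi(a_1) = \varphi(v_1)$ and, by injectivity on letters, $a_1 = v_1$. Stripping $\varphi(a_1)$ and iterating yields $u = v$. Part (2) is independent of synchronization: since $\varphi(a)w$ is a palindrome and $w = \overline{w}$, taking mirror images gives the identity $w\,\overline{\varphi(a)} = \varphi(a)w$ for each letter $a$. Applying it repeatedly (telescoping) to $\overline{\varphi(u)w} = w\,\overline{\varphi(a_n)}\cdots\overline{\varphi(a_1)}$ pushes the marker rightward and leaves $\varphi(a_n)\cdots\varphi(a_1)w = \varphi(\overline{u})w$. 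Part (3) then combines (1) and (2): $\varphi(u)w$ is a palindrome iff $\varphi(\overline{u})w = \varphi(u)w$, iff $\varphi(\overline{u}) = \varphi(u)$ (cancel $w$ on the right), iff $\overline{u} = u$ by the injectivity of part (1).

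For part (4), set $w' = \sigma(w_\varphi)w_\sigma$. It is a palindrome by (3) applied to $\sigma$, because $w_\varphi$ is. For each letter $a$ one has $(\sigma \circ \varphi)(a)w' = \sigma(z)w_\sigma$ with $z = \varphi(a)w_\varphi$, which is again a palindrome by (3) for $\sigma$, since $z$ is. The delicate point is that $\sigma(z)w_\sigma$ is a complete return word to $w'$, and I would prove this by matching occurrences of $w'$ in $\sigma(z)w_\sigma$ with occurrences of $w_\varphi$ in $z$. Applying $\sigma$ to the two occurrences (prefix and suffix) of $w_\varphi$ in the return word $z$ produces occurrences of $w'$ at positions $0$ and $|\sigma(\varphi(a))|$. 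Conversely, given any occurrence of $w'$, its leading and trailing copies of $w_\sigma$ (here one uses the prefix fact for $\sigma$ to see $w'$ starts with $w_\sigma$) must sit at $\sigma$-block boundaries of $\sigma(z)w_\sigma$ by synchronization for $\sigma$; hence the intervening factor is $\sigma$ of a factor of $z$ and equals $\sigma(w_\varphi)$, so that factor of $z$ is $w_\varphi$ by injectivity (1) for $\sigma$. As $z$ contains exactly two occurrences of $w_\varphi$, there are exactly two of $w'$. Distinctness of $(\sigma\circ\varphi)(a)$ follows from injectivity of $\sigma$ and of $\varphi$; the degenerate case $w_\varphi = \varepsilon$ (i.e. $\varphi$ a permutation) is handled directly, with $w' = w_\sigma$.

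The main obstacle is the synchronization lemma and, above all, its reuse in part (4): one must guarantee that marker occurrences never straddle block boundaries, which is exactly what the complete-return-word hypothesis provides. Once synchronization is established, statements (1)--(4) reduce to careful bookkeeping.
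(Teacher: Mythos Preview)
The paper does not supply its own proof of this lemma: it is quoted verbatim from~\cite{BaPeSt11} (``stated in~\cite{BaPeSt11} as Remark~5.2 and Proposition~5.3''), so there is no in-paper argument to compare against. Your proof is correct and self-contained. The synchronization lemma you isolate---that the occurrences of $w$ in $\varphi(u)w$ sit exactly at the block boundaries---is essentially Item~(1) of the paper's Lemma~\ref{lem:wprefix} (which counts $|u|+1$ occurrences and records the prefix fact), proved there by the same one-letter induction you use; you go slightly further by pinning down the \emph{positions}, which is what makes your argument for part~(4) work. Your treatment of part~(4), reducing occurrences of $w'$ in $\sigma(z)w_\sigma$ to occurrences of $w_\varphi$ in $z$ via synchronization for $\sigma$ and then injectivity, is the natural route and is carried out cleanly; the separate handling of $w_\varphi=\varepsilon$ is also appropriate.
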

	
	\begin{example}
		\label{ex:psi2}
		Let us consider the morphism $\varphi_1$ and $\varphi_2$ of Example~\ref{ex:psi1}.\\
		One has $\overline{\varphi_2({\tt 01}) w_2} = \overline{\tt 1110 1 111} = {\tt 111 1 0111} = \varphi_2({\tt 10}) {\tt 111}$.
		Moreover one has that both ${\tt 010}$ and $\varphi_2({\tt 010}) w_2 = {\tt 1110 1 1110 111}$ are palindromes.
		Finally one can check that $\varphi_1 \circ \varphi_2$ is in Class $P_{ret}$ with associated marker $\varphi_1({\tt 111}) {\tt 010010}$.
	\end{example}
	
	Using Lemma~\ref{palindromPreserved}, we deduce several auxiliary useful properties.
	
	\begin{lem}
		\label{lem:wprefix}
		Let $\varphi: \A^* \mapsto \A^*$ be a morphism in Class $P_{ret}$ with marker $w$ and let $\uu \in \A^\N$.
		\begin{enumerate}
   \itemsep=0.95pt
			\item If $v \in \A^*$, then $w$ occurs in $\varphi(v) w$ exactly $|v| + 1$ times and $w$ is a prefix of $\varphi(v)w$.
			\item The set of return words to $w$ in $\varphi(\uu)$ is $\RR{\varphi(\uu)}{w} = \{ \varphi(a) : a \in \A \cap \L(\uu) \}$.
			\item If $u \in \L(\uu) $, then $\varphi(u) w \in \L(\varphi(\uu))$.
		\end{enumerate}
\end{lem}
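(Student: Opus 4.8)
The plan is to prove the three claims in sequence, since each one builds naturally on the previous. Throughout, let $\varphi \in P_{ret}$ with marker $w$, so by Definition~\ref{classPret} the word $\varphi(a)w$ is a palindromic complete return word to $w$ for each letter $a$. The governing fact I would extract first is that, by the definition of complete return word, $\varphi(a)w$ contains exactly two occurrences of $w$ (one as proper prefix, one as proper suffix) and no others. In particular $w$ is a prefix of $\varphi(a)w$ for every $a$, which already sets up the prefix assertions below.

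\textbf{Proof of item 1.}
I would argue by induction on $|v|$. For $v = \varepsilon$ the word $\varphi(\varepsilon)w = w$ contains $w$ exactly once $= |v|+1$ times, and $w$ is trivially a prefix. For the inductive step write $v = av'$ with $a \in \A$. Then $\varphi(v)w = \varphi(a)\varphi(v')w$. By the induction hypothesis $\varphi(v')w$ begins with $w$ and contains $w$ exactly $|v'|+1$ times. The key point is that prepending $\varphi(a)$ contributes exactly one additional occurrence of $w$, namely the prefix $w$ of $\varphi(a)w$: since $\varphi(a)w$ is a complete return word to $w$, the only occurrences of $w$ inside $\varphi(a)w$ are its prefix and its suffix, and the suffix $w$ of $\varphi(a)w$ is precisely the prefix $w$ of $\varphi(v')w$ — so it must not be double-counted. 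This boundary bookkeeping is the one delicate step: I must check that no occurrence of $w$ straddles the boundary between $\varphi(a)$ and $\varphi(v')w$ other than the intended one. This follows because any occurrence of $w$ overlapping the junction would, together with the surrounding structure, produce an occurrence of $w$ strictly inside the complete return word $\varphi(a)w$ beyond its prefix and suffix, contradicting the return-word property. Thus the total is $1 + (|v'|+1) = |v|+1$, and $w$ remains a prefix because $\varphi(a)w$ (hence $\varphi(v)w$) begins with $w$.

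\textbf{Proof of items 2 and 3.}
For item 2, I would first show each $\varphi(a)$ is a return word to $w$ in $\varphi(\uu)$: since $\varphi(a)w$ is a complete return word to $w$, the word $\varphi(a)$ is by definition a return word. Conversely, to see these are the only return words, I would invoke item 1 together with the factorization of $\varphi(\uu)$: writing $\uu = u_0 u_1 \cdots$, every occurrence of $w$ in $\varphi(\uu)$ corresponds, via the structure established in item 1, to the prefix-$w$ of some block $\varphi(u_i)$, so consecutive occurrences of $w$ are separated exactly by blocks $\varphi(u_i)$; hence the return words are precisely $\{\varphi(a) : a \in \A \cap \L(\uu)\}$. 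The subtlety here is to rule out occurrences of $w$ falling across block boundaries, which is again handled by the complete-return-word property of each $\varphi(a)w$. Finally, for item 3, let $u \in \L(\uu)$. Then $u$ is a factor of some prefix of $\uu$, so there are letters and a finite word realizing $u$ inside $\uu$; applying $\varphi$ and appending $w$, item 1 guarantees $\varphi(u)w$ begins and is properly structured with respect to $w$, and since $u$ occurs in $\uu$ I can find a factor of $\uu$ of the form $u b$ with $b \in \A$ (using recurrence, available because $\L(\uu)$ relevant here is closed under reversal hence recurrent), whence $\varphi(ub) = \varphi(u)\varphi(b)$ occurs in $\varphi(\uu)$ and begins with $\varphi(u)w$ because $\varphi(b)$ begins with $w$ — no, more directly, $\varphi(u)w$ is a prefix of $\varphi(ub)$ since $\varphi(b)$ starts with $w$. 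Therefore $\varphi(u)w \in \L(\varphi(\uu))$. The main obstacle across all three parts is the same combinatorial core: controlling occurrences of $w$ at the junctions between consecutive $\varphi$-images, which is exactly what the complete-return-word hypothesis in Definition~\ref{classPret} is designed to provide.
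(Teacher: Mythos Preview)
Your overall approach mirrors the paper's: induction on $|v|$ for item (1), the block decomposition of $\varphi(\uu)$ for item (2), and right-extension of $u$ for item (3). Item (1) is fine (you strip the first letter while the paper strips the last, but the argument is symmetric), and your treatment of the boundary occurrences is actually more explicit than the paper's.

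There is, however, a genuine gap in item (3). You extend $u$ by a \emph{single} letter $b$ and then assert that ``$\varphi(b)$ starts with $w$''. This is not true in general: the hypothesis only tells you that $\varphi(b)w$ starts with $w$, and if $|\varphi(b)| < |w|$ then $\varphi(b)$ itself need not contain $w$ at all. A concrete instance from the paper is $\varphi_2\colon {\tt 0}\mapsto {\tt 1110},\ {\tt 1}\mapsto {\tt 1}$ with marker $w = {\tt 111}$; here $\varphi_2({\tt 1}) = {\tt 1}$ certainly does not start with $w$. The paper's fix is exactly what is needed: extend $u$ by \emph{several} letters $c_1 c_2 \cdots c_k$ with $uc_1\cdots c_k \in \L(\uu)$ until $|\varphi(c_1\cdots c_k)| \ge |w|$; then item (1) says $w$ is a prefix of $\varphi(c_1\cdots c_k)w$, and the length inequality lets you drop the trailing $w$ to conclude that $\varphi(c_1\cdots c_k)$ itself starts with $w$. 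Hence $\varphi(u)w$ is a prefix of $\varphi(uc_1\cdots c_k) \in \L(\varphi(\uu))$.

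A smaller point: you invoke recurrence (via closure under reversal) to right-extend $u$, but the lemma carries no such hypothesis --- it only assumes $\uu \in \A^\N$. All you need is that every factor of an infinite word has a right extension in its language, which is automatic; recurrence plays no role here.
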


\eject
	\begin{proof}

\vspace*{-4mm}
		\textbf{Item (1).}
		We proceed by induction on $|v|$.
		The claim is clearly valid when $v = \varepsilon$.
		Assume now that $v = v'a$, with $a \in \A$.
		Then $\varphi(v) w = \varphi(v') \varphi(a) w$.
		By definition of $P_{ret}$, the factor $w$ is a prefix of $\varphi(a)w$ and it has 2 occurrences in $\varphi(a) w$.
		Therefore we can write $\varphi(v) w = \varphi(v') w w^{-1} \varphi(a) w$.
		By induction hypothesis, $w$ is a prefix of $\varphi(v') w$ and occurs $|v'| + 1 = |v|$ times in $\varphi(v') w$.
		Consequently, $w$ is a prefix of $\varphi(v) w = \varphi(v') w w^{-1} \varphi(a) w$ and has $|v| + 1$ occurrences in $\varphi(v) w$.

\medskip		
		\textbf{Item (2).}
		Let $\uu = a_0 a_1 a_2 \cdots$.
		First we show that for every $i \in \N$, the word $\varphi(a_i a_{i+1} a_{i+2} \cdots)$ has $w$ as a prefix.
		Indeed, the word $\varphi(a_i a_{i+1} \cdots a_{i+k})$ is longer than $|w|$ for some $k \in \N$.
		By Item $(1)$, the word $\varphi(a_i a_{i+1} \cdots a_{i+k})w$ has $w$ as a prefix.
		Therefore, $w$ is a prefix of $\varphi(a_i a_{i+1} \cdots a_{i+k})$ as well.
		
		Since $w$ is a prefix both of $\varphi(a_i a_{i+1} a_{i+2} \cdots)$ and $\varphi(a_{i+1} a_{i+2} a_{i+3} \cdots)$, the word $\varphi(a_i) w$ is a prefix of $\varphi(a_i a_{i+1} a_{i+2} \cdots)$.
		By definition of Class $P_{ret}$, $w$ occurs twice in $\varphi(a_i) w$.
		This implies that $\varphi(a_i)$ is a return word to $w$ in $\varphi(\uu)$.

\medskip		
		\textbf{Item (3).}
		Let $u \in \L(\uu)$.
		Since the language $\L(\uu)$ is extendable and the morphism $\varphi$ is non-erasing, there exist letters $c_1, c_2, \ldots, c_k \in \A$ such that $u  c_1 c_2 \cdots c_k \in \L(\uu)$ and $|\varphi (c_1 c_2 \cdots c_k)| \ge |w|$.
		Thus $\varphi(u  c_1 c_2 \cdots c_k)$ is an element of $\L(\varphi(\uu))$.
		It is enough to show that $w$ is a prefix of $\varphi(c_1 c_2 \cdots c_k)$.
		We know from Item (1) that $w$ is a prefix of $\varphi(c_1 c_2 \cdots c_k)w$.
		Since $|\varphi(c_1 c_2 \cdots c_k)| \ge |w|$, the word $\varphi(c_1 c_2 \cdots c_k)$ starts with $w$ as well.
	\end{proof}
	
	\begin{example}
		Let us consider again the morphism $\varphi_2$ from Examples~\ref{ex:psi1} and~\ref{ex:psi2}.
		According to Lemma~\ref{lem:wprefix}, the word $w_2 = {\tt 111}$ appears exactly three times in $\varphi_2({\tt 01}) {\tt 111}$.
	\end{example}
	
	\begin{prop}
		\label{pro:unicity}
		Each morphism in Class $P_{ret}$ has a unique marker.
	\end{prop}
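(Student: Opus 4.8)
The plan is to argue by contradiction: suppose $\varphi$ admits two markers $w$ and $w'$ with, say, $|w|\le|w'|$, and show they must coincide. First I would pin down the relative position of the two palindromes. By Lemma~\ref{lem:wprefix}(1) (applied once with marker $w$ and once with marker $w'$, both legitimate since each is assumed to be a marker of the same $\varphi$), $w$ is a prefix of $\varphi(b^N)w$ and $w'$ is a prefix of $\varphi(b^N)w'$ for any letter $b$. Because $\varphi$ is non-erasing I can take $N$ large enough that $|\varphi(b^N)|\ge|w'|$, and then both $w$ and $w'$ are prefixes of $\varphi(b^N)$; as $|w|\le|w'|$ this forces $w$ to be a prefix of $w'$. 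I then write $w'=wt$ with $t=w^{-1}w'$, so that $w=w'$ is exactly the statement $t=\varepsilon$.

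The key step is to exploit palindromicity to turn this into a commutation equation. Since $w$ is a marker, each $\varphi(a)w$ is a palindrome, and as $w=\overline{w}$ this gives $\varphi(a)w=\overline{\varphi(a)w}=w\,\overline{\varphi(a)}$; likewise, since $w'$ is a marker, $\varphi(a)w'=w'\,\overline{\varphi(a)}$. Substituting $w'=wt$ into the second identity and then rewriting its left-hand side with the first identity, I get
\[
w\,\overline{\varphi(a)}\,t \;=\; \varphi(a)w\,t \;=\;\varphi(a)w' \;=\; w'\,\overline{\varphi(a)} \;=\; w\,t\,\overline{\varphi(a)},
\]
and cancelling the common prefix $w$ yields $\overline{\varphi(a)}\,t = t\,\overline{\varphi(a)}$ for every letter $a\in\A$. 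Thus $t$ commutes with $\overline{\varphi(a)}$ for all $a$.

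To finish I would assume $t\neq\varepsilon$ and derive a contradiction. In a free monoid, commuting with $t$ means each $\overline{\varphi(a)}$ is a power of the primitive root $z$ of $t$, so every $\varphi(a)$ is a power of the primitive word $\rho:=\overline{z}$. Because $\varphi$ is injective on the (at least two-letter) alphabet, the exponents are distinct, so some letter $b$ satisfies $\varphi(b)=\rho^{k}$ with $k\ge 2$. Taking $\uu=b^{\omega}$ we get $\varphi(\uu)=\rho^{\omega}$, a purely periodic word of period $|\rho|$ having $w$ as a prefix; hence $w$ reoccurs at position $|\rho|$. On the other hand, since $b$ is the only letter occurring in $\uu$, Lemma~\ref{lem:wprefix}(2) gives $\RR{\varphi(\uu)}{w}=\{\varphi(b)\}=\{\rho^{k}\}$, so consecutive occurrences of $w$ in $\varphi(\uu)$ are exactly $k|\rho|\ge 2|\rho|$ apart — contradicting the occurrence at position $|\rho|$. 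Therefore $t=\varepsilon$ and $w=w'$.

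The routine parts are the prefix comparison and the palindrome rewriting; I expect the main obstacle to be the last step, namely ruling out $t\neq\varepsilon$. The decisive ideas there are the commutation identity $\overline{\varphi(a)}\,t=t\,\overline{\varphi(a)}$, which collapses all letter images into powers of a single primitive word, and then the clash between the period $|\rho|$ of $\rho^{\omega}$ and the (necessarily longer) return word to $w$ guaranteed by Lemma~\ref{lem:wprefix}(2). The degenerate single-letter situation, where injectivity gives no second exponent, is excluded by the standing assumption $\#\A\ge 2$.
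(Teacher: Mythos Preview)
Your proof is correct and takes a genuinely different route from the paper's. The paper assumes two markers $u$ and $w$ with $|u|<|w|$, writes $w=uz$, and exploits the word equation $uz=\overline{z}u$ (coming from both $u$ and $w$ being palindromes) to exhibit at least two occurrences of $u$ inside $w$; it then splits into cases according to whether some $|\varphi(a)|\ge|w|$, obtaining in each case either three occurrences of $u$ in $\varphi(a)u$ or that all $\varphi(a)$ coincide. You instead turn palindromicity of $\varphi(a)w$ and $\varphi(a)w'$ directly into the commutation relation $\overline{\varphi(a)}\,t=t\,\overline{\varphi(a)}$, which by the standard free-monoid lemma forces every $\varphi(a)$ to be a power of a single primitive $\rho$; the contradiction then comes from a periodicity clash in $\varphi(b^\omega)=\rho^\omega$ via Lemma~\ref{lem:wprefix}(2). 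Your argument is cleaner in that it avoids the case split, and it essentially absorbs the content of the subsequent Proposition~\ref{cyclic}: once all $\varphi(a)\in\rho^*$, your final step is exactly a proof that such a cyclic morphism cannot lie in $P_{ret}$. The paper's approach, by contrast, stays entirely at the level of finite words and does not need to pass through an auxiliary infinite word $b^\omega$; it also keeps Propositions~\ref{pro:unicity} and~\ref{cyclic} logically independent rather than folding the latter into the former.
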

	\begin{proof}
		Let $\varphi$ be a morphism in Class $P_{ret}$.
		Assume by contradiction that $w$ and $u$ are two distinct markers associated to $\varphi$.
		From Item (1) of Lemma~\ref{lem:wprefix} it follows that both $w$ and $u$ are prefixes of $\varphi(v)$ for a long enough word $v \in \A^*$.
		Therefore $|w| \ne |u|$.
		Without loss of generality, let us assume that $|w| > |u|$.
		In particular, $u$ is a non-empty prefix of $w$, i.e., $w = uz$ for some non-empty word $z \in \A^*$.
		Since both $w$ and $u$ are palindromes, we get $uz = \overline{z} u$.
		It is a well-known fact that this equations on words implies that there exist words $x,y \in \A^*$ and $i \in \N$ such that $z = yx$, $\overline{z} = xy$ and $u = (xy)^i x$.
		Since $w = uz = (xy)^{i+1} x = xyu = uxy$, the marker $u$ has at least two occurrences in $w$.
		Let us discuss two cases.\medskip
		
		1) There exists a letter $a \in \A$ such that $|\varphi(a)| \geq |w|$.
		Then $\varphi(a)$ has a prefix $w$ and thus $\varphi(a)u$ has at least three occurrences of $u$ - a contradiction.\medskip
		
		2) For all letters $a \in \A$ one has $|\varphi(a)| < |w|$.
		Let us fix $a \in \A$ and denote
		$$
		\varphi(a)w = v_0 v_1 \cdots v_{n-1} v_n \cdots v_{n+m-1},
		$$
		where $n = |\varphi(a)|$, $m = |w|$ and $v_i \in \A$ for all $i$.
		Then $0$ and $n$ are the two occurrences of $w$ in $\varphi(a)w$.
		The form of $u$ implies that $0$, $|xy|$, and $n$ are occurrences of $u$ in $\varphi(a)u$.
		As the marker $u$ occurs in $\varphi(a)u$ exactly twice, and $|z| > 0$, we have $|xy| = n$ and hence $\varphi(a) = xy$.
		Since this is true for every letter $a \in \A$, we have a contradiction with the injectivity of $\varphi$.
	\end{proof}
	
	\begin{coro}
		\label{preimage}
		Let $\varphi: \A^* \mapsto \A^*$ be a morphism in Class $P_{ret}$ with marker $w$.
		Let $\uu, \vv \in \A^\N$, with $\uu = a_0 a_1 a_2 \cdots$, $\vv = b_0 b_1 b_2 \cdots$, such that $\vv = \varphi(\uu)$.
		\begin{itemize}
			\item If $v \in \L(\vv)$ has $w$ both as a prefix and as a suffix, then there exists a unique $u \in \A^*$ such that $v = \varphi(u) w$.
			Moreover, if an index $i$ is an occurrence of $v$ in $\vv$, then $b_i b_{i+1} b_{i+2} \cdots = \varphi(a_j a_{j+1} a_{j+2}\cdots)$, where the index $j$ is an occurrence of $u$ in $\uu$.
			\item $\L(\uu)$ is closed under reversal if and only if $\L(\vv)$ is closed under reversal.
		\end{itemize}
	\end{coro}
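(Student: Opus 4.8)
The plan is to establish the first item and then deduce the second from it, with everything resting on the return-word structure of $\varphi(\uu)$ recorded in Lemma~\ref{lem:wprefix}. The heart of the first item is to locate exactly where the marker $w$ occurs in $\vv = \varphi(\uu)$. By Item~(2) of Lemma~\ref{lem:wprefix} the return words to $w$ in $\vv$ are precisely the blocks $\varphi(a)$ with $a \in \A \cap \L(\uu)$, and by the same lemma $w$ is a prefix of every tail $\varphi(a_i a_{i+1} \cdots)$. Writing $p_k = |\varphi(a_0 a_1 \cdots a_{k-1})|$ for the block boundaries (so $p_0 = 0$), this already shows that $w$ occurs at each $p_k$. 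First I would argue these are the \emph{only} occurrences: inside the window $\varphi(a_k)w$, that is the factor of $\vv$ running from position $p_k$ to position $p_{k+1}+|w|-1$, Item~(1) of Lemma~\ref{lem:wprefix} guarantees that $w$ occurs exactly twice, namely at $p_k$ and $p_{k+1}$. Any occurrence of $w$ at a position $q$ with $p_k \le q < p_{k+1}$ fits entirely within this window and hence must equal $p_k$; so every occurrence of $w$ in $\vv$ is a block boundary.

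With this localisation in hand the first item is immediate. If $v \in \L(\vv)$ has $w$ both as a prefix and as a suffix and occurs at position $i$, then $i$ and $i+|v|-|w|$ are both occurrences of $w$, hence both block boundaries, say $i = p_j$ and $i+|v|-|w| = p_{j'}$. Reading $\vv$ between these two boundaries gives $v = \varphi(a_j a_{j+1}\cdots a_{j'-1})\,w$, so $u := a_j a_{j+1} \cdots a_{j'-1}$ is the desired preimage; its uniqueness follows from the injectivity of $\varphi$ in Item~(1) of Lemma~\ref{palindromPreserved}. The ``moreover'' clause is then just the observation that from the boundary $p_j$ onward $\vv$ reads $\varphi(a_j a_{j+1}\cdots)$, and that $j$ is an occurrence of $u$ in $\uu$.

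For the second item I would treat the two implications separately, passing in both directions between $\uu$ and $\vv$ through the palindromicity identity $\overline{\varphi(u)w} = \varphi(\overline u)w$ of Item~(2) of Lemma~\ref{palindromPreserved}. Assume first that $\L(\uu)$ is closed under reversal and take $v \in \L(\vv)$. I would enclose an occurrence of $v$ between the nearest block boundaries on either side and append one copy of $w$, producing a factor $V = \varphi(u')\,w \in \L(\vv)$ that contains $v$, where $u' \in \L(\uu)$. Then $\overline V = \varphi(\overline{u'})\,w$, and since $\overline{u'} \in \L(\uu)$, Item~(3) of Lemma~\ref{lem:wprefix} yields $\overline V \in \L(\vv)$; as $\overline v$ is a factor of $\overline V$, we conclude $\overline v \in \L(\vv)$. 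Conversely, assume $\L(\vv)$ is closed under reversal and take $u \in \L(\uu)$. Then $\varphi(u)\,w \in \L(\vv)$ by Item~(3) of Lemma~\ref{lem:wprefix}, so $\overline{\varphi(u)w} = \varphi(\overline u)\,w \in \L(\vv)$; this word has $w$ as a prefix (by Item~(1) of Lemma~\ref{lem:wprefix}) and as a suffix, so the first item provides a preimage which, by injectivity, must be $\overline u$, together with the fact that $\overline u$ occurs in $\uu$. Hence $\overline u \in \L(\uu)$.

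The only genuinely delicate point is the localisation of the occurrences of $w$ in the first paragraph: one must rule out occurrences of $w$ sitting strictly between, or straddling, consecutive block boundaries, and this is exactly what the sharp count ``$w$ occurs $|v|+1$ times in $\varphi(v)w$'' of Lemma~\ref{lem:wprefix} is designed to deliver. Once the occurrences of $w$ are pinned to the boundaries $p_k$, everything else reduces to bookkeeping with the identities of Lemma~\ref{palindromPreserved} and Lemma~\ref{lem:wprefix} already at our disposal.
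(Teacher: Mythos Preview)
Your proof is correct and follows essentially the same route as the paper: both arguments rest on Lemma~\ref{lem:wprefix} to pin down the return-word structure (hence the occurrences of $w$) in $\vv$, then use the injectivity from Lemma~\ref{palindromPreserved} for uniqueness, and the identity $\overline{\varphi(u)w}=\varphi(\overline u)w$ together with Item~(3) of Lemma~\ref{lem:wprefix} for the reversal equivalence. Your treatment of the localisation of $w$ at the block boundaries $p_k$ is more explicit than the paper's terse ``$v'$ is a concatenation of return words to $w$'', but the underlying argument is the same.
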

	\begin{proof}
		Let $v'$ such that $v = v' w$.
		Then $v'$ is a concatenation of return words to $w$.
		By Item (2) of Lemma~\ref{lem:wprefix}, the factor $v$ can be written as $v = \varphi(u)w$ for a certain word $u$.
		Since $\varphi(a) \ne \varphi(b)$ for every $a \ne b$, such a factor $u$ is given uniquely.
		Moreover, Item (2) of Lemma~\ref{lem:wprefix} implies that $w$ occurs in $\vv$ only as a prefix of $\varphi(a_j a_{j+1} a_{j+2}\cdots)$ for some $j$.
		
		To prove the second part of the corollary, we first assume that $\L(\uu)$ is closed under reversal and let $v \in \L(\varphi(\uu))$.
		Then, there exists a factor $u \in \L(\uu)$ such that $v$ occurs in $\varphi(u)$.
		By Item (3) of Lemma~\ref{lem:wprefix}, $\varphi(u)w \in \L(\varphi(\uu))$.
		Since $\overline{u} \in \L(\uu)$, it follows from Item (2) of Lemma~\ref{palindromPreserved} that the factor $\overline{\varphi(u) w} =\varphi(\overline{u})w$ belongs to $ \L(\varphi(\uu))$.
		Obviously, $\overline{v}$ occurs in $\overline{\varphi(u)}$ and thus belongs to $\L(\varphi(\uu))$ too.
		
		Now, let us assume that $\L(\varphi(\uu))$ is closed under reversal and let $u \in \L(\uu)$.
		Then $\varphi(u)w \in \L(\varphi(\uu)) $ and thus, using Item (2) of Lemma~\ref{palindromPreserved}, we have $\overline{\varphi(u) w} = \varphi(\overline{u}) w \in \L(\vv)$.
		By the first statement of this corollary, $\overline{u}$ belongs to $\L(\uu)$.
	\end{proof}
	
	\begin{example}
		\label{ex:varphi2}
		Let us consider the morphism $\phi$ of Example~\ref{ex:varphi} and its fixed point ${\bf f}$.
		Such a morphism is in Class $P_{ret}$ and its marker is $w = {\tt 0}$.
		The word $v = {\tt 010}$ starts and ends with ${\tt 0}$.
		The letter ${\tt 0}$ is the unique word $u \in \mathcal{L}({\bf f})$ such that $v = \phi(u) w$.
	\end{example}
	
	\begin{remark}
		If $\varphi \in P_{ret}$, then every factor of $\varphi(\uu)$ can be decomposed (up to a short prefix and suffix) uniquely into concatenation of return words to $w$.
		Therefore, the synchronizing delay as defined in~\cite{Ca94} is at most the length of a factor which does not contain $w$.
		In other words, the synchronizing delay of $\varphi$ is at most $\max \left\{ |\varphi(a)w| : a \in \A \right\} - 2$.
	\end{remark}
	
	\begin{proposition}
		\label{cyclic}
		Every morphism in Class $P_{ret}$ is acyclic.
	\end{proposition}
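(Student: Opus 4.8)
The plan is to argue by contradiction. Suppose that some $\varphi \in P_{ret}$ with marker $w$ is cyclic. By the characterization of cyclic morphisms recalled just before the statement, there is a single letter $z \in \A$ with $\varphi(a) \in z^*$ for every $a \in \A$; write $\varphi(a) = z^{n_a}$. Since every morphism in Class $P_{ret}$ is non-erasing, each exponent satisfies $n_a \ge 1$. The whole strategy is to extract enough rigidity from the marker to force all the images $\varphi(a)$ to collapse onto a single word, which will violate injectivity.

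The key step is to show that the marker itself is a power of $z$. By Definition~\ref{classPret} (equivalently, by Item~(1) of Lemma~\ref{lem:wprefix} applied to $v = a$), the palindrome $w$ is a proper prefix of $\varphi(a) w = z^{n_a} w$ for a fixed letter $a$. Comparing letters at each position, for $0 \le i < \min(n_a, |w|)$ the prefix condition forces the $i$-th letter of $w$ to equal the $i$-th letter of $z^{n_a} w$, which is $z$. If $|w| \le n_a$ this already gives $w = z^{|w|}$. If instead $|w| > n_a$, then for $n_a \le i < |w|$ the prefix condition reads $w_i = w_{i-n_a}$, so $w$ has period $n_a$; together with the fact that its first $n_a$ letters are $z$, periodicity propagates $w_i = z$ to every position, so again $w = z^{|w|}$.

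Once $w = z^m$ is a power of $z$, the occurrence count becomes immediate: $\varphi(a) w = z^{n_a + m}$, and the factor $w = z^m$ occurs in it exactly $n_a + 1$ times, namely at positions $0, 1, \ldots, n_a$. But $\varphi(a) w$ is a complete return word to $w$, so $w$ must occur in it exactly twice; hence $n_a + 1 = 2$, that is $n_a = 1$. As the letter $a$ was arbitrary, we obtain $\varphi(a) = z$ for every $a \in \A$. On an alphabet with at least two letters $a \ne b$ this yields $\varphi(a) = z = \varphi(b)$, contradicting the injectivity requirement $\varphi(a) \ne \varphi(b)$ built into the definition of Class $P_{ret}$, which finishes the proof.

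I expect the main obstacle to be the first step, establishing $w \in z^*$; everything after that is a one-line occurrence count. An alternative route for that step would be to pass to an infinite word $\uu$ containing two distinct letters, note that then $\varphi(\uu) = z^\omega$ is purely periodic, and invoke Item~(2) of Lemma~\ref{lem:wprefix}: the set of return words to $w$ in $\varphi(\uu)$ would equal $\{ \varphi(a) : a \in \A \cap \L(\uu) \}$, hence contain several distinct words, whereas any power of $z$ has only the single return word $z$ in $z^\omega$ — again a contradiction. I consider the self-contained computation on the single word $\varphi(a) w$ to be the cleaner one to write up, since it avoids introducing an auxiliary infinite word.
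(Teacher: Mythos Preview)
Your proof is correct and takes a genuinely different route from the paper's. You first pin down the marker completely by showing $w \in z^*$ from the prefix condition, then force every exponent $n_a = 1$ by an explicit occurrence count, and reach a contradiction with injectivity. The paper never determines $w$: it picks two letters $a, b$ with $\varphi(a) = z^m$, $\varphi(b) = z^n$, $m < n$ (such a pair exists precisely by injectivity), writes $z^m w = wu$ using the prefix property once, and then observes that $z^n w = z^{n-m} w u$ exhibits $w$ at the three distinct positions $0$, $(n-m)|z|$, $n|z|$ in $\varphi(b)w$, contradicting the complete-return-word condition directly. One practical difference worth noting: your step ``the $i$-th letter of $z^{n_a} w$ is $z$'' uses that $z$ is a single letter, exactly as the preliminaries state it; the paper's own proof, however, is written with $z \in \A^*$, and its argument goes through verbatim when $z$ is only assumed to be a primitive word. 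Your alternative route via return words in $z^\omega$ (using Item~(2) of Lemma~\ref{lem:wprefix}) would also cover that more general reading, so it is a little more robust than the argument you chose to write up.
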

	\begin{proof}
		Assume that $\varphi$ is a cyclic morphism in Class $P_{ret}$ with marker $w$.
		Then, there exist $z \in \A^*$, distinct letters $a, b \in \A$ and positive integers $m, n$, with $m < n$ such that $\varphi(a) = z^m$ and $\varphi(b) = z^n$.
		Moreover, $\varphi(a) w = z^m w$ and $\varphi(b) w = z^n w$ are complete return words to $w$.
		In particular, there exists a non-empty word $u \in \A^*$ such that $z^m w = wu$.
		Hence we have $z^n w = z^{n-m} z^m w = z^{n-m} w u$, which implies that $w$ appears at least three times as a factor of $\varphi(b) w$, a contradiction.
	\end{proof}
	
	Let ${\rm Fst}(u)$ and ${\rm Lst}(u)$ denote respectively the first and the last letter of a non-empty word $u$.
	Given a non-erasing morphism $\varphi$ we define the two maps on $\A$
	$$
	\rho_{\varphi}: a \mapsto {\rm Lst} \bigl( \varphi_R(a) \bigr)
	\quad \mbox{and} \quad
	\lambda_{\varphi}: a \mapsto {\rm Fst} \bigl( \varphi_L(a) \bigr)
	$$
	
	\begin{prop}
		\label{conjugacy}
		Let $\varphi$ be a morphism in Class $P_{ret}$ and $w$ be its marker.
		\begin{enumerate}
			\item If $\psi$ is a right conjugate of $\varphi$, then $\psi$ belongs to Class $P_{ret}$.
			In particular, $\varphi_R \in P_{ret}$.
			\item If $\varphi = \varphi_R$, then the marker $w$ is the conjugate word between $\varphi_L$ and $\varphi_R$, i.e.,
			$$
			\varphi_R(a) w = w \varphi_L(a) \quad \text{for all } a \in \A.
			$$
			Moreover, $\rho_\varphi(a) = \lambda_\varphi(a)$.
		\end{enumerate}
	\end{prop}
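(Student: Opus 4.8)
The plan is to drive both parts with the single algebraic consequence of palindromicity. Since $\varphi\in P_{ret}$ with marker $w$, the word $\varphi(a)w$ is a palindrome for every letter $a$, and as $\overline w=w$ this is exactly the identity $\varphi(a)w=w\,\overline{\varphi(a)}$. Everything below rests on this relation together with the fact that $\varphi(a)w$ is a complete return word to $w$ (so $w$ occurs in $\varphi(a)w$ only as its proper prefix and its proper suffix).

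For Item (1), write the conjugacy as $\psi(a)x=x\varphi(a)$ for all $a$, where $x$ is the conjugate word. Injectivity of $\psi$ is immediate: $\psi(a)=\psi(b)$ forces $x\varphi(a)=x\varphi(b)$, hence $\varphi(a)=\varphi(b)$ and $a=b$. I claim the marker of $\psi$ is $w'=xw\overline x$, which is a palindrome because $\overline{xw\overline x}=x\,\overline w\,\overline x=xw\overline x$. Using the conjugacy, $\psi(a)w'=\psi(a)\,x\,w\,\overline x=x\,\varphi(a)\,w\,\overline x$; its mirror image equals $x\,w\,\overline{\varphi(a)}\,\overline x$, and $w\,\overline{\varphi(a)}=\varphi(a)w$ turns this back into $\psi(a)w'$, so $\psi(a)w'$ is a palindrome. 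The crux is to show that $w'$ occurs in $\psi(a)w'$ exactly twice. Here I argue by position: since $|\psi(a)w'|=|\varphi(a)|+|w'|$, any occurrence of $w'$ starts at some $p$ with $0\le p\le|\varphi(a)|$, and then $w$ occurs at position $p+|x|\in[\,|x|,\,|x|+|\varphi(a)|\,]$, hence entirely inside the central block $\varphi(a)w$ of $\psi(a)w'=x\,\varphi(a)w\,\overline x$. As $\varphi(a)w$ is itself a complete return word to $w$, the only occurrences of $w$ inside it are its prefix and its suffix, forcing $p\in\{0,|\varphi(a)|\}$. Thus $w'$ appears only as a proper prefix and a proper suffix, so $\psi(a)w'$ is a palindromic complete return word to $w'$ and $\psi\in P_{ret}$. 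Applying this to $\psi=\varphi_R$, which by definition is a right conjugate of $\varphi$, gives $\varphi_R\in P_{ret}$.

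For Item (2), the identity $\varphi(a)w=w\,\overline{\varphi(a)}$ says precisely that $\varphi$ is a right conjugate, with conjugate word $w$, of the mirror morphism $\varphi'$ defined on letters by $\varphi'(a)=\overline{\varphi(a)}$. It remains to identify $\varphi'$ with $\varphi_L$. Since $\varphi=\varphi_R$, the only morphism right conjugated to $\varphi$ is $\varphi$ itself; combined with acyclicity (Proposition~\ref{cyclic}) this forces the images $\varphi(a)$ to share no common nonempty suffix: otherwise moving such a suffix $s$ to the front would give a morphism $\theta$ with $\theta(a)s=s\varphi(a)$, and either $\theta\neq\varphi$, contradicting that $\varphi$ is its own rightmost conjugate, or $\theta=\varphi$, giving $\varphi(a)s=s\varphi(a)$ for all $a$ and hence a common period making $\varphi$ cyclic. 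Consequently the images $\varphi'(a)=\overline{\varphi(a)}$ share no common nonempty prefix, so $\varphi'$ is a right conjugate only of itself, i.e. it is a leftmost conjugate; being conjugate to $\varphi$, uniqueness of the leftmost conjugate gives $\varphi'=\varphi_L$. Then $\varphi_R(a)w=\varphi(a)w=w\,\varphi'(a)=w\,\varphi_L(a)$, and comparing last letters, $\rho_\varphi(a)={\rm Lst}\bigl(\varphi_R(a)\bigr)={\rm Lst}\bigl(\varphi(a)\bigr)={\rm Fst}\bigl(\overline{\varphi(a)}\bigr)={\rm Fst}\bigl(\varphi_L(a)\bigr)=\lambda_\varphi(a)$.

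The one genuinely delicate point is the exact-count step in Item (1): ruling out any spurious occurrence of $w'$ beyond the forced prefix and suffix. The position-bounding observation is what keeps this clean, since it traps the central $w$ of every candidate occurrence inside the block $\varphi(a)w$, where the complete-return property of $\varphi$ already pins the count to two.
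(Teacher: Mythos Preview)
Your proof is correct. For Item~(2) it is essentially identical to the paper's: the paper defines $\xi(a)=w^{-1}\varphi(a)w$ and shows $\xi=\varphi_L$ by observing that $\varphi=\varphi_R$ forces two images with distinct last letters, hence two $\xi$-images with distinct first letters; your mirror morphism $\varphi'(a)=\overline{\varphi(a)}$ is the very same morphism (since $\varphi(a)w=w\overline{\varphi(a)}$), and your acyclicity/common-suffix discussion unpacks the same ``two distinct last letters'' fact.

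For Item~(1) there is a genuine, if modest, difference in strategy. The paper proceeds inductively: it conjugates by a single letter $x$, shows the new marker is $xwx$, and then appeals to the fact that any right conjugation factors into one-letter steps. You instead handle an arbitrary conjugate word $x$ in one shot, identifying the new marker as $xw\overline{x}$ and using the clean position-bounding argument to pin down the two occurrences. Your route is more direct and yields the explicit marker formula for general $x$ immediately; the paper's route trades that generality for a slightly simpler occurrence-count verification (only a single extra letter on each side), at the cost of the extra induction. Both are short and neither relies on anything the other does not.
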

	\begin{proof}
		First note that, by Proposition~\ref{cyclic}, the morphism $\varphi$ is acyclic and thus $\varphi_R$ and $\varphi_L$ are well defined.
	
\medskip	
		\textbf{Item (1).}
		Let us assume that there exists a letter $x \in \A$ such that $\psi(a) x = x \varphi(a)$ for each $a \in \A$.
		Then $x$ is a suffix of $\varphi(a)$ for each $a \in \A$.
		Since $\varphi(a)w$ is a palindrome, then $wx$ is a prefix of $\varphi(a)w$.
		Thus $\psi(a) x w x = x \varphi(a) w x$ is a palindrome as well and the palindrome $p = xwx$ occurs in $\psi(a) xwx$ exactly twice (otherwise we would have a third occurrence of $w$ as a factor of $\varphi(a) w$).
		Therefore, $\psi \in P_{ret}$ and $p = xwx$ is the associated marker of $\psi$.
		We have shown that $\psi$ obtained from $\varphi$ by right conjugation using one letter $x$ belongs to Class $P_{ret}$.
		As each right conjugation can be obtained step by step by conjugations using one letter only, the assertion is proven.

\medskip		
		\textbf{Item (2).}
		Since $w$ is a prefix of $\varphi(a)w$, we can define a new morphism $\xi$ by setting $\xi(a) = w^{-1} \varphi (a)w$ for each $a \in \A$.
		Obviously, $\xi$ is left conjugated to $\varphi$ and $w$ is the conjugate word.
		Since $\varphi(a)w$ is a palindrome, ${\rm Fst} \left( \xi(a) \right) = {\rm Lst} \left( \varphi(a) \right)$ for each $a \in \A$.
		By our assumption, $\varphi$ is rightmost conjugate and thus there exist two letters, say $b_1$ and $b_2$, such that ${\rm Lst} \left( \varphi(b_1) \right) \ne {\rm Lst} \left( \varphi(b_2) \right)$.
		Therefore, ${\rm Fst} \left( \xi(b_1) \right) \ne {\rm Fst} \left( \xi(b_2) \right)$ and thus $\xi$ is the leftmost conjugate of $\varphi$.
	\end{proof}
	
	\begin{example}
		\label{ex:sigma2}
		Let us consider the morphism $\sigma$ from Example~\ref{ex:sigma} with $\ell = p = 1$ and $q = 3$.
		It can be checked that this morphism is not Sturmian.
		Its rightmost conjugate and leftmost conjugate are, respectively, the morphisms
		$$
		\sigma_R:
		\left\{
		\begin{array}{l}
			{\tt 0} \mapsto {\tt 10} \\
			{\tt 1} \mapsto {\tt 1011}
		\end{array}
		\right.
		\quad \mbox{and} \quad
		\sigma_L:
		\left\{
		\begin{array}{l}
			{\tt 0} \mapsto {\tt 01} \\
			{\tt 1} \mapsto {\tt 1101}
		\end{array}.
		\right.
		$$
		The morphisms $\sigma_R$ belongs to Class $P_{ret}$ and its marker is ${\tt 101}$.
		Moreover, for all $a \in \{ {\tt 0, 1} \}$ one has $\sigma_R(a) {\tt 101} = {\tt 101} \sigma_L(a)$ and $\rho_\sigma(a) = \lambda_\sigma(a)$.
	\end{example}
	
	The following class was introduced by Hof, Knill and Simon in~\cite{HoKnSi95} in order to construct words containing infinitely many palindromic factors.
	Such words play an important role in the study of the spectrum of the Schrödinger operator associated with an aperiodic sequence.
	
	\begin{defi}
		\label{def:classP}
		A morphism $\eta : \A^* \to \A^*$ belongs to Class $P$, if there exists a palindrome $p\in \A^*$ such that $\eta(a) = pq_a$ for each $a \in \A$, where $q_a$ is a palindrome.
	\end{defi}
	
	It is easy to see that every fixed point of a substitution in Class $P$ contains infinitely many palindromes.
	Let us explain the relation between Class $P$ and Class $P_{ret}$.
	
	As demonstrated in~\cite[Lemma 20]{LaPe16}, an acyclic morphism $\varphi$ is conjugated to a morphism in Class $P$ if and only if $\varphi_R(a) = \overline{\varphi_L(a)}$ for each $a \in \A$.
	If it is the case, the conjugate word $w$ is a palindrome.
	Using this result, Proposition~\ref{cyclic} and Item $(2)$ of Proposition~\ref{conjugacy}, we deduce that every morphism in Class $P_{ret}$ is conjugated to an acyclic morphism in Class $P$.
	
\medskip
	The converse is not true.
	For example, the morphism given by
	$$
	\varphi:
	\left\{
	\begin{array}{l}
		{\tt 0} \mapsto {\tt 010} \, {\tt 101} \\
		{\tt 1} \mapsto {\tt 010} \, {\tt 1001}
	\end{array}
	\right.
	$$
	belongs to Class $P$ and  is acyclic.
	In particular,
	$$
	\varphi_R:
	\left\{
	\begin{array}{l}
		{\tt 0} \mapsto {\tt 010101} \\
		{\tt 1} \mapsto {\tt 0101010}
	\end{array}
	\right.
	, \quad \quad
	\varphi_L:
	\left\{
	\begin{array}{l}
		{\tt 0} \mapsto {\tt 101010} \\
		{\tt 1} \mapsto {\tt 0101010}
	\end{array}
	\right.
	$$
	and the conjugate word is $w = {\tt 0101010}.$
	Nevertheless, in $\varphi_R({\tt 0})w$ the word $w$ occurs four times.
	Therefore, no conjugate of $\varphi$ belongs to Class $P_{ret}$.
	
		\section{Marked morphisms}
	\label{sec:marked}
	
	In our attempt to find morphisms preserving richness we will restrict to morphisms for which "desubstitution" is easy.
	We use marked morphisms.
	Our definition is slightly more general than the definition used by Frid in~\cite{Fr99}.
	
	\begin{defi}
		\label{marked}
		An acyclic morphism $\varphi$ is \emph{right marked} (resp. \emph{left marked}) if the mappings $\rho_\varphi$ and $\lambda_\varphi$ are injective on $\A$.
		
		A morphism is \emph{marked} if it is both right marked and left marked.
		
		A marked morphism is called \emph{well-marked} if both $\rho_\varphi$ and $\lambda_\varphi$ are the identity map on $\A$.
	\end{defi}
	
	The following proposition is a direct consequence of Item (2) in Proposition~\ref{conjugacy}.
	
	\begin{prop}
		\label{pro:rightleft}
		Let $\varphi$ be a morphism in Class $P_{ret}$.
		If $\varphi$ is right marked, then it is left marked too.
	\end{prop}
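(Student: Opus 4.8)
The plan is to reduce the statement to the identity $\rho_\varphi = \lambda_\varphi$ furnished by Item (2) of Proposition~\ref{conjugacy}, exploiting the fact that the two maps $\rho_\varphi$ and $\lambda_\varphi$ depend only on the conjugacy class of $\varphi$. Since $\varphi$ belongs to Class $P_{ret}$, Proposition~\ref{cyclic} ensures that $\varphi$ is acyclic, so both $\varphi_R$ and $\varphi_L$ are well defined and $\rho_\varphi$, $\lambda_\varphi$ make sense. Once I show that $\rho_\varphi$ and $\lambda_\varphi$ coincide as maps on $\A$, injectivity of one transfers immediately to the other, which is exactly the desired implication.

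The core of the argument is to apply Item (2) of Proposition~\ref{conjugacy} not to $\varphi$ itself but to its rightmost conjugate $\varphi_R$. By Item (1) of the same proposition, $\varphi_R$ again belongs to Class $P_{ret}$, and by construction it equals its own rightmost conjugate, i.e.\ $(\varphi_R)_R = \varphi_R$. Hence Item (2) applies to $\varphi_R$ and yields $\rho_{\varphi_R}(a) = \lambda_{\varphi_R}(a)$ for every $a \in \A$. It remains to observe that $\rho$ and $\lambda$ are invariants of the conjugacy class: the morphisms $\varphi$ and $\varphi_R$ lie in the same conjugacy chain, whose two endpoints are precisely $\varphi_L$ and $\varphi_R$, so that $(\varphi_R)_R = \varphi_R$ and $(\varphi_R)_L = \varphi_L$. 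Consequently
\[
\rho_{\varphi_R}(a) = {\rm Lst}\bigl( (\varphi_R)_R(a) \bigr) = {\rm Lst}\bigl( \varphi_R(a) \bigr) = \rho_\varphi(a), \qquad \lambda_{\varphi_R}(a) = {\rm Fst}\bigl( (\varphi_R)_L(a) \bigr) = {\rm Fst}\bigl( \varphi_L(a) \bigr) = \lambda_\varphi(a),
\]
and therefore $\rho_\varphi(a) = \lambda_\varphi(a)$ for all $a \in \A$.

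The conclusion is then immediate: if $\varphi$ is right marked, then by definition $\rho_\varphi$ is injective on $\A$, and since $\lambda_\varphi = \rho_\varphi$ the map $\lambda_\varphi$ is injective too, so $\varphi$ is left marked. The step I expect to require the most care — and the only genuine content beyond quoting Proposition~\ref{conjugacy} — is the justification that $\rho$ and $\lambda$ are genuine conjugacy-class invariants, that is, that passing from $\varphi$ to $\varphi_R$ leaves the two endpoints $\varphi_L$ and $\varphi_R$ of the conjugacy chain unchanged. This rests on the uniqueness of the rightmost and leftmost conjugates of an acyclic morphism and on the fact that right conjugation moves one along a single linearly ordered chain.
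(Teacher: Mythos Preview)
Your proposal is correct and follows the same route as the paper, which records the proposition simply as ``a direct consequence of Item (2) in Proposition~\ref{conjugacy}''. You have merely made explicit what that one line leaves implicit: Item~(2) is stated under the hypothesis $\varphi=\varphi_R$, so one first passes to $\varphi_R$ via Item~(1), and the observation that $\rho_\varphi$ and $\lambda_\varphi$ depend only on the endpoints $\varphi_R,\varphi_L$ of the conjugacy chain (hence $\rho_\varphi=\rho_{\varphi_R}$ and $\lambda_\varphi=\lambda_{\varphi_R}$) is exactly the bridge the paper takes for granted.
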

	
	\begin{lem}
		If $\varphi \in P_{ret}$ is right marked, then there exists a positive integer $k$ such that $\varphi^k$ is well-marked.
	\end{lem}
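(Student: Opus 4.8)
The plan is to reduce everything to the rightmost conjugate and then simply track last letters. Since the maps $\rho$ and $\lambda$ are defined through $\varphi_R$ and $\varphi_L$, they depend only on the conjugacy class of the morphism, and hence so does the property of being well-marked. Because conjugacy is compatible with composition (see~\cite{HaVeZa16}), the power $\varphi^k$ is conjugate to $(\varphi_R)^k$; moreover $\varphi_R \in P_{ret}$ by Item (1) of Proposition~\ref{conjugacy}, and it equals its own rightmost conjugate. It therefore suffices to prove the statement under the extra assumption $\varphi = \varphi_R$, which I would adopt from the outset. I would also record that $\varphi^k \in P_{ret}$ by Item (4) of Lemma~\ref{palindromPreserved}, so that it is acyclic by Proposition~\ref{cyclic} and the maps $\rho_{\varphi^k}, \lambda_{\varphi^k}$ are well defined.

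Next I would write $R := \rho_\varphi$. Since $\varphi$ is right marked, $R$ is injective on the finite alphabet $\A$, hence a permutation, and I would take $k$ to be its order, so that $R^k = \mathrm{id}$. The key observation is that, under the assumption $\varphi = \varphi_R$, one has $R(a) = {\rm Lst}(\varphi(a))$ for every letter $a$. As $\varphi$ is non-erasing, the last letter of $\varphi(u)$ equals the last letter of $\varphi\bigl({\rm Lst}(u)\bigr)$ for every non-empty word $u$, so an immediate induction on $j$ gives ${\rm Lst}\bigl(\varphi^{\,j}(a)\bigr) = R^{\,j}(a)$ for all $j \ge 1$ and all $a \in \A$.

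It then remains to identify $\rho_{\varphi^k}$. Since $R$ is a bijection, so is $R^k$, and therefore the last letters ${\rm Lst}\bigl(\varphi^k(a)\bigr) = R^k(a)$ are pairwise distinct as $a$ ranges over $\A$. In particular (the case $\#\A = 1$ being trivial) the images $\varphi^k(a)$ do not all share a common last letter, which means that $\varphi^k$ admits no further right conjugation and equals its own rightmost conjugate, $(\varphi^k)_R = \varphi^k$. Consequently $\rho_{\varphi^k}(a) = {\rm Lst}\bigl(\varphi^k(a)\bigr) = R^k(a) = a$, that is, $\rho_{\varphi^k} = \mathrm{id}$. Applying Item (2) of Proposition~\ref{conjugacy} to $\varphi^k$, which is its own rightmost conjugate, then yields $\lambda_{\varphi^k} = \rho_{\varphi^k} = \mathrm{id}$ as well (alternatively one invokes Proposition~\ref{pro:rightleft} to get that $\varphi^k$ is left marked). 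Hence $\varphi^k$ is well-marked.

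The step I expect to require the most care is the identification $(\varphi^k)_R = \varphi^k$: it is precisely here that right-markedness enters, through the fact that the permutation $R^k$ forces the terminal letters of the blocks $\varphi^k(a)$ to be pairwise distinct, so that the rightmost-conjugation process cannot push the cut any further. The reduction to $\varphi = \varphi_R$ must also be justified carefully, via the conjugacy-invariance of $\rho$ and $\lambda$ and the correct convention that ${\rm Lst}\circ\varphi_R = \rho_\varphi$; once these two points are secured, the last-letter recursion is entirely routine.
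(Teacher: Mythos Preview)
Your proof is correct and follows essentially the same approach as the paper: both rely on the fact that $\rho_\varphi$ is a permutation on $\A$, take $k$ so that the $k$-th power of this permutation is the identity (you use the order of $\rho_\varphi$, the paper simply takes $k=d!$), and invoke closure of $P_{ret}$ under composition. Your argument is considerably more fleshed out than the paper's three-line sketch---in particular, the reduction to $\varphi=\varphi_R$, the recursion ${\rm Lst}(\varphi^j(a))=R^j(a)$, and the verification that $(\varphi^k)_R=\varphi^k$ are details the paper omits entirely---but the core idea is the same.
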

	\begin{proof}
		The statement follows simply from the following three facts:
		\begin{itemize}
			\item[1)] The mappings $\lambda_\varphi$ and $\rho_\varphi$ are permutations on $\A$.
			\item[2)] If $\pi$ is a permutation on $d$ elements, then for $k = d!$, the power $\pi^k$ is the identity.
			\item[3)] Class $P_{ret}$ is closed under composition (see~\cite{BaPeSt11}).
		\end{itemize}
	\end{proof}

\eject
	
	\begin{example}
		The morphism $\sigma$ in Example~\ref{ex:sigma2} is well marked.
		On the other hand, the Tribonacci morphism $\tau$ seen in Example~\ref{ex:varphi}, which is in Class $P_{ret}$ with marker ${\tt 0}$, is marked but not well marked.
		Indeed, one has $\tau = \tau_R$ and
		$$
		\tau_L:
		\left\{
		\begin{array}{l}
			{\tt 0} \mapsto {\tt 10} \\
			{\tt 1} \mapsto {\tt 20} \\
			{\tt 2} \mapsto {\tt 0}
		\end{array}
		\right. .
		$$
		On the other hand one can check that the morphism $\tau^3$ is well-marked.
		Indeed, one has
		$$
		\tau^3_R:
		\left\{
		\begin{array}{l}
			{\tt 0} \mapsto {\tt 0102010} \\
			{\tt 1} \mapsto {\tt 010201} \\
			{\tt 2} \mapsto {\tt 0102}
		\end{array}
		\right.
		\quad \mbox{and} \quad
		\tau^3_L:
		\left\{
		\begin{array}{l}
			{\tt 0} \mapsto {\tt 0102010} \\
			{\tt 1} \mapsto {\tt 102010} \\
			{\tt 2} \mapsto {\tt 2010}
		\end{array}.
		\right.
		$$
	\end{example}
	
	\begin{lem}
		\label{occurrence}
		Let $\uu \in \A^\N$ and $\varphi \in P_{ret}$ be a marked morphism over $\A$ with marker $w$, such that $\varphi = \varphi_R$.
		Let $a, b \in \A$ such that $b = \rho_\varphi(a)$.
		\begin{itemize}
			\item The factor $wb$ occurs in $\varphi(\uu)$ only as a prefix of $\varphi(a) w = w \varphi_L(a)$.
			\item The factor $bw$ occurs in $\varphi(\uu)$ only as a suffix of $\varphi(a)w = w \varphi_L(a)$.
		\end{itemize}
	\end{lem}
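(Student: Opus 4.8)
The plan is to treat each occurrence of the marker $w$ in $\vv := \varphi(\uu)$ as a boundary of the canonical block decomposition $\vv = \varphi(a_0)\varphi(a_1)\varphi(a_2)\cdots$, where $\uu = a_0 a_1 a_2\cdots$ and $\vv = b_0 b_1 b_2\cdots$; write $m=|w|$. The tools I would invoke are all available above. From Corollary~\ref{preimage}, applied with the factor $v=w$ (which trivially has $w$ as both prefix and suffix), the occurrences of $w$ in $\vv$ are exactly the block boundaries $p_j := |\varphi(a_0\cdots a_{j-1})|$, and at such a position $b_{p_j} b_{p_j+1}\cdots = \varphi(a_j a_{j+1}\cdots)$. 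From $\varphi=\varphi_R$ and Item~(2) of Proposition~\ref{conjugacy} I have $\varphi(c)w = w\varphi_L(c)$ for each letter $c$ and $\rho_\varphi = \lambda_\varphi$. Finally, from the proof of Item~(2) of Lemma~\ref{lem:wprefix}, the word $\varphi(a_j)w = w\varphi_L(a_j)$ is a prefix of the tail $\varphi(a_j a_{j+1}\cdots)$.

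For the first item, let $i$ be any occurrence of $wb$. Then $i$ is an occurrence of $w$, hence $i = p_j$ for some $j$ and $w\varphi_L(a_j)$ is a prefix of $b_i b_{i+1}\cdots$. Reading the letter right after this leading $w$ gives $b_{i+m} = \mathrm{Fst}(\varphi_L(a_j)) = \lambda_\varphi(a_j)$, which is legitimate because $|\varphi_L(a_j)| = |\varphi(a_j)| \ge 1$; by $\rho_\varphi = \lambda_\varphi$ this equals $\rho_\varphi(a_j)$. Since $b_{i+m} = b = \rho_\varphi(a)$ and $\varphi$ is marked, so that $\rho_\varphi$ is injective, I conclude $a_j = a$. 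Thus $\varphi(a)w = w\varphi_L(a)$ is a prefix of $b_i b_{i+1}\cdots$, so the occurrence of $wb$ at $i$ is a prefix of an occurrence of $\varphi(a)w$, as required.

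The second item follows by the mirror-symmetric bookkeeping. If $i-1$ is an occurrence of $bw$, then $i$ is an occurrence of $w$, so $i = p_j$; necessarily $j \ge 1$, since $p_0 = 0$ leaves no letter to the left. The letter just before the leading $w$ of the block $\varphi(a_j)$ is the last letter of the previous block, $b_{i-1} = \mathrm{Lst}(\varphi(a_{j-1})) = \rho_\varphi(a_{j-1})$; equating it with $b = \rho_\varphi(a)$ and using injectivity of $\rho_\varphi$ gives $a_{j-1} = a$, so $bw$ occurs as the suffix of the factor $\varphi(a)w$ spanned by the block $\varphi(a_{j-1})$ together with the leading $w$ of the block $\varphi(a_j)$. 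I expect the only delicate points to be the two structural facts that every occurrence of $w$ is a block boundary (Corollary~\ref{preimage}) and that the adjacent letter is correctly identified with $\lambda_\varphi(a_j)$ or $\rho_\varphi(a_{j-1})$ via $\varphi(c)w = w\varphi_L(c)$ and non-erasingness; once these are secured, both items reduce to a single application of the injectivity of $\rho_\varphi$.
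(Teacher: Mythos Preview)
Your proposal is correct and follows essentially the same route as the paper: locate each occurrence of $w$ at a block boundary of the decomposition $\varphi(\uu)=\varphi(a_0)\varphi(a_1)\cdots$ (the paper phrases this via return words and Item~(2) of Lemma~\ref{lem:wprefix}, you via Corollary~\ref{preimage} applied to $v=w$), then read the adjacent letter as $\lambda_\varphi(a_j)=\rho_\varphi(a_j)$ or $\rho_\varphi(a_{j-1})$ using $\varphi(c)w=w\varphi_L(c)$, and conclude by injectivity of $\rho_\varphi$. The only cosmetic difference is that the paper does not explicitly name the positions $p_j$.
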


	\begin{proof}
		Let us set $\varphi(\uu) = b_0 b_1 b_2 \cdots$ and let us suppose that $i$ is an occurrence of the word $wb$ in $\varphi(\uu)$.
		Let us consider a $j > i$ such that $b_i b_{i+1} \cdots b_{j-1}$ is a return word to $w$ in $\varphi(\uu)$.
		By Item (2) of Lemma~\ref{lem:wprefix}, the index $i$ is an occurrence of $\varphi(c) w$ for some $c \in \A$.
		Since $\varphi(c) w = w \varphi_L(c)$ is a palindrome, then $b = \lambda_\varphi(c)$.
		By Item (2) of Proposition~\ref{conjugacy}, we have $b = \rho_\varphi(c)$.
		Since $\rho_\varphi$ is injective, we have $c = a$.
		
		Now, let us assume that $i-1$ is an occurrence of the word $bw$ in $\varphi(\uu)$.
		Since $\varphi(\uu)$ has a prefix $w$, there exists an index $j < i$ such that $b_j b_{j+1} \cdots b_{i-1}$ is a return word to $w$ in $\varphi(\uu)$.
		By the same argument as above, $i - |\varphi(c)|$ is an occurrence of $\varphi(c) w$ in $\varphi(\uu)$ for some $c \in \A$.
		Obviously, $b = \rho_\varphi(c)$
		and thus $c = a$.
	\end{proof}
	
	\begin{example}
		\label{ex:occurrences}
		Let $\tau = \tau_R$ and ${\bf t}$ be as in Example~\ref{ex:varphi} and let $w_\tau = {\tt 0}$ be the marker of $\tau$.
		Since $\rho_\tau({\tt 1}) = {\tt 2}$, it follows from Lemma~\ref{occurrence} that ${\tt 0 2}$ occurs in ${\bf t}$ only as a prefix of ${\tt 02 0}$ and that ${\tt 20}$ occurs in ${\bf t}$ only as a suffix of ${\tt 02 0}$.
	\end{example}

	\begin{proposition}
		\label{edges}
		Let $\varphi: \A^* \to \A^*$ be a marked morphism in Class $P_{ret}$ such that $\varphi = \varphi_R$ and $w$ be its marker.
		Let $\uu \in \A^\N$ and $a, b \in \A$.
		Then
		\begin{equation}
		a u b \in \L(\uu)
			\; \Longleftrightarrow \;
			\rho_\varphi(a) \, \varphi(u) w \, \rho_\varphi(b) \in \L(\varphi(\uu)). 	\label{edges1}
		\end{equation}
	\end{proposition}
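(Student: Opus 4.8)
The plan is to prove the two implications separately, using in both directions the conjugacy relation $\varphi_R(a)w = w\varphi_L(a)$ from Item (2) of Proposition~\ref{conjugacy} (valid since $\varphi=\varphi_R$), together with the fact that, as $\varphi=\varphi_R$, the last letter of $\varphi(a)$ is exactly $\rho_\varphi(a)$, while the first letter of $\varphi_L(a)$ is $\lambda_\varphi(a)=\rho_\varphi(a)$. Write $\alpha=\rho_\varphi(a)$ and $\beta=\rho_\varphi(b)$ throughout.

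For the forward implication I would start from $aub\in\L(\uu)$ and apply Item (3) of Lemma~\ref{lem:wprefix} to get $\varphi(aub)w=\varphi(a)\varphi(u)\varphi(b)w\in\L(\varphi(\uu))$. The key move is to push the trailing $w$ through the last block: since $\varphi(b)w=w\varphi_L(b)$, one has $\varphi(aub)w=\varphi(a)\,\varphi(u)\,w\,\varphi_L(b)$. Now $\varphi(a)$ ends in the letter $\alpha$ and $\varphi_L(b)$ begins with $\lambda_\varphi(b)=\beta$, so $\alpha\,\varphi(u)\,w\,\beta$ sits as a factor in the middle of $\varphi(aub)w$, which is exactly $\rho_\varphi(a)\varphi(u)w\rho_\varphi(b)\in\L(\varphi(\uu))$. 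Note no length comparison between $|\varphi(a)|$ and $|w|$ is needed here.

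The converse is the delicate direction, and the main obstacle is synchronization: an occurrence of $\rho_\varphi(a)\varphi(u)w\rho_\varphi(b)$ in $\varphi(\uu)$ must be shown to align with the decomposition of $\varphi(\uu)$ into $\varphi$-images of single letters, so that the two letters flanking $\varphi(u)w$ genuinely arise from single preimage letters. I would invoke the synchronization built into Corollary~\ref{preimage}. The factor $\varphi(u)w$ has $w$ both as a prefix (Item (1) of Lemma~\ref{lem:wprefix}) and as a suffix; hence, writing $\varphi(\uu)=b_0b_1b_2\cdots$, its occurrence at position $i+1$ (just after the leading $\alpha=b_i$) forces, by Corollary~\ref{preimage}, that $b_{i+1}b_{i+2}\cdots=\varphi(a_ja_{j+1}\cdots)$ where $j$ is an occurrence of $u$ in $\uu$ (the preimage being $u$ by injectivity, Item (1) of Lemma~\ref{palindromPreserved}). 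In particular $u\in\L(\uu)$ and $b_0\cdots b_i=\varphi(a_0\cdots a_{j-1})$, which is non-empty since the letter $\alpha=b_i$ occurs; thus $j\ge 1$.

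It then remains to identify the two flanking letters. The letter $\alpha=b_i$ is the last letter of $\varphi(a_0\cdots a_{j-1})$, hence of $\varphi(a_{j-1})$, so $\alpha=\rho_\varphi(a_{j-1})$; as $\alpha=\rho_\varphi(a)$ and $\rho_\varphi$ is injective (markedness), this gives $a_{j-1}=a$. On the right, after $\varphi(u)$ the tail $\varphi(a_{j+|u|}a_{j+|u|+1}\cdots)$ has prefix $w$, and since $\varphi(a_{j+|u|})w=w\varphi_L(a_{j+|u|})$ the letter immediately following this prefix $w$ is the first letter of $\varphi_L(a_{j+|u|})$, namely $\lambda_\varphi(a_{j+|u|})=\rho_\varphi(a_{j+|u|})$; this equals $\beta=\rho_\varphi(b)$, so injectivity yields $a_{j+|u|}=b$. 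Therefore $a_{j-1}\cdots a_{j+|u|}=aub$ occurs in $\uu$, i.e. $aub\in\L(\uu)$. (Alternatively, the alignment of these two flanking letters can be read off directly from Lemma~\ref{occurrence} applied to the factors $\alpha w$ and $w\beta$ at the two ends, which pin them down as the suffix of an occurrence of $\varphi(a)w$ and the prefix of an occurrence of $\varphi(b)w$, respectively.)
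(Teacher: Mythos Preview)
Your proof is correct and essentially the same as the paper's. The forward direction is identical. For the converse, the paper applies Lemma~\ref{occurrence} twice (to $\rho_\varphi(a)w$ on the left and to $w\rho_\varphi(b)$ on the right) to extend the given occurrence to an occurrence of $\varphi(aub)w$, and then invokes Corollary~\ref{preimage}; you instead synchronize the central block $\varphi(u)w$ via Corollary~\ref{preimage} first and then recover the flanking letters by injectivity of $\rho_\varphi$, which is a harmless reordering---and you even note the paper's route via Lemma~\ref{occurrence} as an alternative.
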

	\begin{proof}
		Let us first suppose that $a u b \in \L(\uu)$.
		By Item (3) of Lemma~\ref{lem:wprefix}, $\varphi(a) \varphi(u) \varphi(b) w \in \L(\varphi(\uu))$.
		By definition $\varphi(a)$ has $\rho_\varphi(a)$ as a suffix.
		Since $\varphi = \varphi_R$, we can apply Item (2) of Proposition~\ref{conjugacy} to get $\varphi(b) w = w \varphi_L(b)$.
		Therefore, $\varphi(b)w$ starts with $w \lambda_\varphi(b) = w \rho_\varphi(b)$.
		Hence, $\varphi(a) \varphi(u) \varphi(b) w$ has a factor $\rho_\varphi(a) \varphi(u) w \rho_\varphi(b)$.
		
		To show the opposite implication, let us assume that $i - 1$ is an occurrence in $\varphi(\uu)$ of the factor $\rho_\varphi(a) \varphi(u) w \rho_\varphi(b) \in \L(\varphi(\uu))$.
		By the second assertion of Lemma~\ref{occurrence}, $i - |\varphi(a)|$ is an occurrence of the factor $\varphi(a) \varphi(u) w \rho_\varphi(b)$.
		By the first assertion of the same lemma, $i - |\varphi(a)|$ is an occurrence of the factor $\varphi(a) \varphi(u) \varphi(b) w = \varphi(aub) w$.
		Since $w$ is a prefix and a suffix of $\varphi(aub) w$, Corollary~\ref{preimage} implies $aub \in \L(\uu)$.
	\end{proof}
	
	The next result describes the relation between the set $\B(\uu)$ of bispecial factors in $\uu$ and the set $\B(\varphi(\uu))$ of bispecial factors in $\varphi(\uu)$, when $\varphi \in P_{ret}$.
	Given a permutation $\pi$ on $\A$ and a set $S \subset \A$ we denote  $\pi(S) = \left\{ \pi(s) \; : \; s \in S \right\}$.
	
	\begin{coro}
		\label{ImagesOfBSfactor}
		Let $\varphi: \A^* \to \A^*$ be a marked morphism in Class $P_{ret}$ such that $\varphi = \varphi_R$ and $w$ be its marker.
		Let $\uu \in \A^\N$.
		\begin{enumerate}
			\item If $v \in \B(\varphi(\uu))$ and $w$ is a factor of $v$, then there exists $u \in \B(\uu)$ such that $v = \varphi(u) w$.
			\item For every $u \in \L(\uu)$ we have
			$\rho_\varphi \left(L_{\uu}(u) \right) = L_{\varphi(\uu)}(\varphi(u)w)$,
			$\rho_\varphi \left( R_{\uu}(u) \right) = R_{\varphi(\uu)}(\varphi(u) w)$, and
			$\rho_\varphi \left (B_{\uu} (u) \right) = B_{\varphi(\uu)} (\varphi(u) w)$.
			\item $u \in \B(\uu)$ if and only if $\varphi(u)w \in \B(\varphi(\uu))$.
			\item $b_{\uu}(u) = b_{\varphi(\uu)}( \varphi(u)w )$ for every $u \in \L(\uu)$.
			\item $\pext{\uu}{u} =\pext{\varphi(\uu)}{\varphi(u)w}$ for every palindromic $u \in \L(\uu)$.
		\end{enumerate}
	\end{coro}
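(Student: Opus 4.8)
The plan is to treat Item~(2) as the workhorse, deduce Items~(3)--(5) from it by bookkeeping, and reserve a separate synchronisation argument for Item~(1). Throughout I would use that, since $\varphi$ is marked, $\rho_\varphi$ is a permutation of $\A$ (as recorded after Proposition~\ref{pro:rightleft}), and that by Proposition~\ref{conjugacy}(2) one has $\lambda_\varphi=\rho_\varphi$ and $\varphi(a)w=w\varphi_L(a)$ for each $a$. I would establish the items in the order (2), (3), (4), (5), (1), so that the cardinality identities coming from (2) are available when I need the bispeciality of the $\varphi$-preimage in (1).

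For Item~(2), the bi-extension identity $\rho_\varphi\!\left(B_{\uu}(u)\right)=B_{\varphi(\uu)}(\varphi(u)w)$ is immediate from Proposition~\ref{edges}: $(a,b)\in B_{\uu}(u)$ iff $aub\in\L(\uu)$ iff $\rho_\varphi(a)\varphi(u)w\rho_\varphi(b)\in\L(\varphi(\uu))$, and since $\rho_\varphi$ is a bijection of $\A$, every bi-extension $(c,d)$ of $\varphi(u)w$ is of the form $(\rho_\varphi(a),\rho_\varphi(b))$. For the one-sided sets I would argue directly. For $R$: if $ub\in\L(\uu)$ then $\varphi(ub)w=\varphi(u)w\varphi_L(b)\in\L(\varphi(\uu))$ by Lemma~\ref{lem:wprefix}(3) and Proposition~\ref{conjugacy}(2), and $\varphi_L(b)$ begins with $\lambda_\varphi(b)=\rho_\varphi(b)$, so $\rho_\varphi(b)\in R_{\varphi(\uu)}(\varphi(u)w)$; conversely, given $d\in R_{\varphi(\uu)}(\varphi(u)w)$ I put $b=\rho_\varphi^{-1}(d)$, use that $\varphi(u)w$ ends in $w$ together with Lemma~\ref{occurrence} to resynchronise $\varphi(u)w\rho_\varphi(b)$ into $\varphi(ub)w$, and conclude $ub\in\L(\uu)$ by Corollary~\ref{preimage}. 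The identity for $L$ is the mirror argument using the suffix assertion of Lemma~\ref{occurrence}.

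Items~(3) and~(4) then reduce to counting. Because $\rho_\varphi$ is injective on $\A$ (hence $\rho_\varphi\times\rho_\varphi$ is injective on pairs), Item~(2) yields $\#L_{\uu}(u)=\#L_{\varphi(\uu)}(\varphi(u)w)$, $\#R_{\uu}(u)=\#R_{\varphi(\uu)}(\varphi(u)w)$ and $\#B_{\uu}(u)=\#B_{\varphi(\uu)}(\varphi(u)w)$. The first two give Item~(3), since left/right specialness asserts exactly that these cardinalities are at least $2$; substituting all three into $b_{\uu}(u)=\#B_{\uu}(u)-\#L_{\uu}(u)-\#R_{\uu}(u)+1$ gives Item~(4). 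For Item~(5) I would first note that $\varphi(u)w$ is a palindrome whenever $u$ is, by Lemma~\ref{palindromPreserved}(3), so that $\pext{\varphi(\uu)}{\varphi(u)w}$ is meaningful; then, specialising Proposition~\ref{edges} to $b=a$, the letter $\rho_\varphi(a)$ is a palindromic extension of $\varphi(u)w$ exactly when $a$ is a palindromic extension of $u$, and the bijectivity of $\rho_\varphi$ turns this into the equality of the two counts.

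The genuinely new work is Item~(1), which is where I expect the main obstacle. Given a bispecial $v\in\B(\varphi(\uu))$ having $w$ as a factor, the goal is to show $w$ is both a prefix and a suffix of $v$; then Corollary~\ref{preimage} provides the unique $u$ with $v=\varphi(u)w$, and Item~(3) makes $u$ bispecial. To show $w$ is a suffix, I would take the rightmost occurrence of $w$ in $v$, writing $v=xwy$ with $y$ free of $w$, and assume $y\neq\varepsilon$ for contradiction. By Lemma~\ref{lem:wprefix}(2) every occurrence of $w$ in $\varphi(\uu)$ starts a block $\varphi(a_j)$, and by Lemma~\ref{occurrence} the letter following that $w$ is $\rho_\varphi(a_j)$, which (by injectivity of $\rho_\varphi$) determines $a_j$ and hence the whole block up to the next occurrence of $w$. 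Since $y$ is $w$-free it is a prefix of this determined block content, so the letter of $\varphi(\uu)$ immediately after $y$ — that is, every right extension $d$ of $v$ — is forced to a single value (either the next letter of the block, or ${\rm Fst}(w)$ if $y$ exhausts the block), contradicting right-specialness of $v$; hence $y=\varepsilon$. The symmetric argument, using left-specialness and the suffix assertion of Lemma~\ref{occurrence}, shows $w$ is a prefix of $v$, completing the proof. The delicate points to get right here are the bookkeeping of the boundary case where $y$ fills an entire block, and isolating the trivial case $w=\varepsilon$, where $\varphi$ is a relabelling and the statement is immediate.
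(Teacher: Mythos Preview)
Your proof is correct and follows essentially the same approach as the paper's. The only differences are organisational: the paper proves the items in the stated order, derives the $L$ and $R$ identities in Item~(2) simply by projecting from the bi-extension identity (since $a\in L_{\uu}(u)$ iff $(a,b)\in B_{\uu}(u)$ for some $b$) rather than arguing directly via Lemma~\ref{occurrence}, and phrases the Item~(1) contradiction as ``then $\varphi(a)w$ is a prefix of $wx$, contradicting that $w$ occurs once in $wx$'' instead of your forced-extension case split.
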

	
\begin{proof}

\vspace*{-4mm}
		\textbf{Item (1).}
		Let $v$ be a bispecial factor in $\L(\varphi(\uu))$ and $w$ a factor of $v$.
		There exist two letters $b_1$ and $b_2$, $b_1 \ne b_2$ such that $v b_1, vb_2 \in \L(\varphi(\uu))$.
		We show that $w$ is a suffix of $v$.
		Assume the contrary, i.e., that $v = ywx$, with $x, y \in \A^*$, such that the factor $w$ occurs in $wx$ only once and $x \ne \varepsilon$.
		Let $b$ be the first letter of $x$.
		By Lemma~\ref{occurrence} and the fact that both $w x b_1, w x b_2 \in\ \L(\varphi(\uu))$, we have that $\varphi(a)w$ is a prefix of $w x$, where $b = \rho_\varphi(a)$, a contradiction with the definition of $x$.
		Therefore $w$ is a suffix of $v$.
		By the same argument, $w$ is also a prefix of $v$.
		Corollary~\ref{preimage} implies that $v = \varphi(u)w$.
		
\medskip
		\textbf{Item (2).}
		Proposition~\ref{edges} says that $(a,b)$ is a bi-extension of $u$ in $\L({\uu})$ if and only if $\left( \rho_\varphi(a), \rho_\varphi(b) \right)$ is a bi-extension of $\varphi(u) w$ in $\L \left( \varphi(\uu) \right)$.
		
\medskip
		\textbf{Item (3).}
		A consequence of the previous item.
		
\medskip
		\textbf{Item (4).}
		A consequence of the previous item.
		
\medskip
		\textbf{Item (5).}
		A consequence of Equation~(3) and Item (3) of Lemma~\ref{palindromPreserved}.
	\end{proof}
	
	\begin{example}
		Let $\tau = \tau_R$, ${\bf t}$ and $w_\tau$ be as in Example~\ref{ex:occurrences}.
		One has $\rho_\tau = ({\tt 1, 2, 0})$.
		Let us consider the word $u = {\tt 2010}$.
		One has
		$$
		L_{\bf t}(u) = \left\{ {\tt 0} \right\},
		\quad
		R_{\bf t}(u) = \left\{ {\tt 0, 1, 2} \right\},
		\quad
		B_{\bf t}(u) = \left\{ ({\tt 0}, {\tt 0}), ({\tt 0}, {\tt 1}), ({\tt 0}, {\tt 2}) \right\}
		$$
		and
		$$
		L_{\bf t}(\tau(u) {\tt 0}) = \left\{ {\tt 1} \right\},
		\quad
		R_{\bf t}(\tau(u) {\tt 0}) = \left\{ {\tt 1, 2, 0} \right\},
		\quad
		B_{\bf t}(\tau(u) {\tt 0}) = \left\{ ({\tt 0}, {\tt 1}), ({\tt 0}, {\tt 2}), ({\tt 0}, {\tt 0}) \right\}.
		$$
	\end{example}
	
	\begin{remark}
		\label{UpToConjugate}
		Let $\uu \in \A^\N$ be recurrent and $\varphi$ and $\psi$ be two mutually conjugate morphisms over $\A$.
		Clearly, the languages of $\varphi(\uu)$ and $\psi(\uu)$ coincide.
		Since the palindromic richness can be seen as a property of a language and not of an infinite word itself, it is enough to examine richness for one of these languages.
	\end{remark}
	
	\begin{coro}
		\label{EasyDirection}
		Let $\varphi: \A^* \to \A^*$ be a marked morphism in Class $P_{ret}$ and $\uu \in \A^\N$ such that its language is closed under reversal.
		If $\varphi(\uu)$ is rich, then $\uu$ is rich.
	\end{coro}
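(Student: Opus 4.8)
The plan is to reduce richness of both $\uu$ and $\varphi(\uu)$ to the bispecial-factor criterion of Theorem~\ref{OnlyBSfactors}, and then to transport the defining condition~\eqref{condition} from the image back to the preimage using the dictionary provided by Corollary~\ref{ImagesOfBSfactor}. Concretely, assuming $\varphi(\uu)$ rich, I would check that \emph{every} bispecial factor of $\uu$ satisfies~\eqref{condition}, which by Theorem~\ref{OnlyBSfactors} is equivalent to richness of $\uu$.

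First I would normalize the morphism so that the results of Section~\ref{sec:marked} apply. Since $\L(\uu)$ is closed under reversal, $\uu$ is recurrent, so by Remark~\ref{UpToConjugate} the richness of $\varphi(\uu)$ depends only on the language $\L(\varphi(\uu))$, which is left unchanged when $\varphi$ is replaced by a conjugate. By Item~(1) of Proposition~\ref{conjugacy} the rightmost conjugate $\varphi_R$ again belongs to Class $P_{ret}$; moreover the maps $\rho_\varphi$ and $\lambda_\varphi$ are conjugacy invariants (they are read off the rightmost and leftmost conjugates, which are the same throughout a conjugacy class), so $\varphi_R$ is still marked. Hence I may assume $\varphi = \varphi_R$. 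By the second part of Corollary~\ref{preimage}, $\L(\varphi(\uu))$ is then also closed under reversal, so Theorem~\ref{OnlyBSfactors} is available for both $\uu$ and $\varphi(\uu)$.

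Next I would take an arbitrary $u \in \B(\uu)$. By Item~(3) of Corollary~\ref{ImagesOfBSfactor}, its image $\varphi(u)w$ is bispecial in $\varphi(\uu)$, and since $\varphi(\uu)$ is rich, Theorem~\ref{OnlyBSfactors} guarantees that $\varphi(u)w$ satisfies~\eqref{condition} in $\L(\varphi(\uu))$. I would then translate this back to $u$ using three facts: Item~(4) of Corollary~\ref{ImagesOfBSfactor} gives $b_{\uu}(u) = b_{\varphi(\uu)}(\varphi(u)w)$; Item~(3) of Lemma~\ref{palindromPreserved} gives that $u$ is a palindrome if and only if $\varphi(u)w$ is; and, in the palindromic case, Item~(5) of Corollary~\ref{ImagesOfBSfactor} gives $\pext{\uu}{u} = \pext{\varphi(\uu)}{\varphi(u)w}$. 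Splitting into the two cases of~\eqref{condition} and substituting these equalities shows that $u$ itself satisfies~\eqref{condition}. As $u \in \B(\uu)$ was arbitrary, Theorem~\ref{OnlyBSfactors} yields that $\uu$ is rich.

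I expect the only genuinely delicate point to be the normalization step: one must verify that passing to $\varphi_R$ simultaneously preserves membership in Class $P_{ret}$ and the marked property, and that it does not alter the language whose richness is in question, so that Corollary~\ref{ImagesOfBSfactor} is legitimately applicable. Once $\varphi = \varphi_R$ is in force, the remainder is a mechanical translation through Corollary~\ref{ImagesOfBSfactor}, requiring no computation.
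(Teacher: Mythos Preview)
Your proposal is correct and follows essentially the same route as the paper: reduce to $\varphi=\varphi_R$ via Remark~\ref{UpToConjugate}, use Corollary~\ref{preimage} to get closure under reversal for $\L(\varphi(\uu))$, and then transport condition~\eqref{condition} through Corollary~\ref{ImagesOfBSfactor} and Lemma~\ref{palindromPreserved}. Your justification of the normalization step is in fact slightly more careful than the paper's, and your citation of Item~(3) of Corollary~\ref{ImagesOfBSfactor} for $u\in\B(\uu)\Rightarrow\varphi(u)w\in\B(\varphi(\uu))$ is the appropriate one.
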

	\begin{proof}
		Let us recall that if a language is closed under reversal, then it is recurrent.
		By Remark~\ref{UpToConjugate}, we may assume without loss of generality that $\varphi = \varphi_R$.
		Let $w$ denote the marker associated to $\varphi$.
		By Corollary~\ref{preimage}, the language of $\varphi(\uu)$ is closed under reversal as well.
		Therefore, to demonstrate the richness of $\uu$ we may use Theorem~\ref{OnlyBSfactors}.
		If $u\in \B(\uu)$, then by Item (1) of Corollary~\ref{ImagesOfBSfactor}, $\varphi(u)w \in \B(\varphi(\uu))$.
		Since $\varphi(\uu)$ is rich, we have that $b_{\varphi(\uu)}( \varphi(u)w ) = \pext{\varphi(\uu)}{\varphi(u)w} - 1$ if $\varphi(u)w$ is a palindrome, and $b_{\varphi(\uu)}( \varphi(u)w ) = 0$ otherwise.
		Recall that by Lemma~\ref{palindromPreserved}, $u$ is a palindrome if and only if $\varphi(u)w$ is a palindrome.
		Therefore, Items (4) and (5) of Corollary~\ref{ImagesOfBSfactor} imply that $b_{\uu}( u) = \pext{\uu}{u} - 1$ if $u$ is a palindrome, and $b_{\uu}( u) = 0$ otherwise.
		Consequently, $\uu$ is rich.
	\end{proof}
	
	\begin{remark}
		\label{rem:elementaryAR}
		Let us point out that the elementary Arnoux-Rauzy morphism  $\psi_a$, with $a \in \A$, belongs to Class $P_{ret}$ and $w = a$ is its associated marker.
		Moreover, $\psi_a$ is well marked.
		Hence Lemma 5.3. in \cite{GlJuWiZa09} is a special case of the previous corollary for $\varphi = \psi_a$.
	\end{remark}

	\section{Morphisms preserving richness}
	\label{sec:main}
	
	Our aim is to study the implication opposite to Corollary~\ref{EasyDirection}, i.e., for which morphism $\varphi$ in Class $P_{ret}$, richness of $\uu$ forces richness of $\varphi(\uu)$.
	Behaviour of morphisms from this class may differ according to the infinite word we apply the morphism to.
	This is testified by the following example (see~\cite{BaPeSt11}).
	
	\begin{example}
		Let us consider the morphism $\zeta: \{ {\tt 0,1,2} \}^* \mapsto \{ {\tt 0,1} \}^*$ defined as
		$$
		\zeta:
		\left\{
		\begin{array}{l}
			{\tt 0} \mapsto {\tt 0100} \\
			{\tt 1} \mapsto {\tt 01011} \\
			{\tt 2} \mapsto {\tt 010111}
		\end{array}.
		\right.$$
		This morphism is in Class $P_{ret}$.
		It is proved in~\cite[Lemma 5.6]{BaPeSt11} that $\zeta$ maps a rich word (over the ternary alphabet $\{ {\tt 0,1,2} \}$) to a word with infinite palindromic defect (over the binary alphabet $\{ {\tt 0,1} \}$).
		Moreover, the same morphism also maps the Tribonacci word (which is rich) to a rich word.
	\end{example}
	
	Starosta studied in~\cite {St16} morphic images of episturmian words by morphisms of Class $P_{ret}$.
	It is known that episturmian words are rich (see, for instance,~\cite{GlJu09}).
	
	\begin{thm}(\cite{St16})
		Let $\uu$ be an episturmian word and $\varphi: \A^* \to \B^*$ be a morphism of Class $P_{ret}$.
		The word $\varphi(\uu)$ is almost rich.
	\end{thm}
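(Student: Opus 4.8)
The plan is to control the palindromic defect of $\varphi(\uu)$ through its bispecial factors. Since an episturmian word is uniformly recurrent, hence recurrent, Remark~\ref{UpToConjugate} together with Item~(1) of Proposition~\ref{conjugacy} lets me replace $\varphi$ by its rightmost conjugate and assume $\varphi=\varphi_R\in P_{ret}$ without altering $\L(\varphi(\uu))$; write $w$ for the marker. As $\L(\uu)$ is closed under reversal, Corollary~\ref{preimage} gives that $\L(\varphi(\uu))$ is closed under reversal too, and $\varphi(\uu)$ is uniformly recurrent because $\varphi$ is non-erasing. Under these hypotheses the defect is localised at the bispecial factors: $\varphi(\uu)$ is \emph{almost} rich exactly when all but finitely many $v\in\B(\varphi(\uu))$ satisfy condition~\eqref{condition} --- the finite-defect analogue of Theorem~\ref{OnlyBSfactors} for words with language closed under reversal (cf.\ the study of almost rich words in~\cite{GlJuWiZa09}). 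Thus the goal reduces to showing that only finitely many bispecial factors of $\varphi(\uu)$ violate~\eqref{condition}.

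Two structural facts drive the argument. First --- and this is the crucial use of episturmianity rather than of mere richness --- all but finitely many bispecial factors $u$ of $\uu$ are palindromes whose bi-extension set is a \emph{star}: there is a letter $a_u$ with $(a,b)\in B_{\uu}(u)$ if and only if $a=a_u$ or $b=a_u$. In particular $b_{\uu}(u)=0$ and $\pext{\uu}{u}=1$, so $u$ satisfies~\eqref{condition}. Second, a transfer of extension sets that needs no marking: if $u$ is a palindrome and $v=\varphi(u)w$, then since by Corollary~\ref{preimage} every occurrence of $v$ in $\varphi(\uu)$ is block-aligned and $\varphi(a)w=w\varphi_L(a)$ with $\lambda_\varphi=\rho_\varphi$ (Item~(2) of Proposition~\ref{conjugacy}), one obtains $L_{\varphi(\uu)}(v)=\rho_\varphi\!\left(L_{\uu}(u)\right)$, $R_{\varphi(\uu)}(v)=\rho_\varphi\!\left(R_{\uu}(u)\right)$ and $B_{\varphi(\uu)}(v)=(\rho_\varphi\times\rho_\varphi)\!\left(B_{\uu}(u)\right)$; these refine Item~(2) of Corollary~\ref{ImagesOfBSfactor}, which had assumed $\varphi$ marked.

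The heart of the proof is then an elementary combinatorial observation: the image of a star under $\rho_\varphi$ is again a star, because all edges of a star share the centre $a_u$, so no identification of letters can create a cycle. Hence $B_{\varphi(\uu)}(v)$ is a star centred at $\rho_\varphi(a_u)$, giving $b_{\varphi(\uu)}(v)=0$ and $\pext{\varphi(\uu)}{v}=1$, so $v$ satisfies~\eqref{condition}; and if the identification collapses the right extensions to a single letter, then $v$ is not bispecial and contributes nothing. This is exactly the step that breaks for a general rich word, whose palindromic bispecial factors are arbitrary trees rather than stars: contracting a non-star tree may create a cycle, i.e.\ a bispecial factor with $b>0$, and this can recur infinitely often. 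That is the phenomenon behind the morphism $\zeta$ sending some rich words to words of infinite defect, and the reason the conclusion here is only \emph{almost} richness.

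It remains to bound the exceptional bispecial factors, and this is where I expect the main obstacle to lie. Three finite sources must be handled by a synchronisation argument. The bispecial factors of $\varphi(\uu)$ that avoid $w$ have length at most $\max_a|\varphi(a)w|-2$ (the synchronising-delay bound following Corollary~\ref{preimage}), so there are finitely many; the images of the finitely many non-star bispecial factors of $\uu$ give finitely many more. The delicate point is the non-marked analogue of Item~(1) of Corollary~\ref{ImagesOfBSfactor}: one must show that every sufficiently long bispecial factor of $\varphi(\uu)$ containing $w$ is genuinely of the form $\varphi(u)w$, with no non-trivial fringe on either side. Since the synchronising delay of $\varphi$ is finite, a long factor desubstitutes uniquely up to a bounded fringe, and the work is to prove that for a \emph{bispecial} factor these fringes must vanish beyond a computable length, using that $\uu$ has at most one right special factor of each length. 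Once this is done, all but finitely many bispecial factors of $\varphi(\uu)$ satisfy~\eqref{condition}, and the finite-defect criterion yields that $\varphi(\uu)$ is almost rich. (If $\uu$, hence $\varphi(\uu)$, is eventually periodic the whole argument is unnecessary: a periodic word has only finitely many bispecial factors, so~\eqref{condition} can fail only finitely often, and one may alternatively invoke Theorem~\ref{periodicRich}.)
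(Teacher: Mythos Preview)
The paper does not prove this theorem at all: it is quoted verbatim from Starosta~\cite{St16} and immediately followed by the next result. There is therefore no ``paper's own proof'' to compare against; what you have written is an independent attempt at Starosta's theorem, not at anything argued in the present paper.

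As for the attempt itself, it is a reasonable strategic outline but not a proof, and you correctly flag the point where it stalls. Two concrete gaps remain. First, the statement is for $\varphi:\A^*\to\B^*$ with possibly distinct alphabets and with no markedness hypothesis; several of the tools you invoke (Proposition~\ref{conjugacy}, Corollary~\ref{ImagesOfBSfactor}) are stated in the paper only for $\varphi:\A^*\to\A^*$ and, more importantly, only for \emph{marked} $\varphi$. Your transfer of extension sets $L,R,B$ through $\rho_\varphi$ does survive without injectivity once block-alignment is granted, but the step you call ``the delicate point'' --- that every sufficiently long bispecial factor of $\varphi(\uu)$ containing $w$ is exactly $\varphi(u)w$ with no fringe --- is precisely where the paper's argument uses markedness (Lemma~\ref{occurrence}), and you do not supply a replacement. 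Invoking ``at most one right special factor per length'' is a plausible lever, but the actual bound and the argument are missing. Second, your structural claim that all but finitely many bispecial factors of an episturmian word have a \emph{star} bi-extension set deserves a reference or a short proof; it is true, but episturmian words that are not strict can have degenerate extension behaviour and the claim is not entirely obvious from the definition alone.

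In short: the paper gives you nothing to compare to, and your sketch identifies the right decomposition (short bispecials versus images of bispecials) and the right episturmian feature (star-shaped extensions), but the synchronisation step in the unmarked case is asserted rather than proved.
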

	
	In \cite[Corollary 6.3]{GlJuWiZa09} it is proved that the image of each rich word by an Arnoux-Rauzy morphism is rich as well.
	Here we provide an alternative proof of this statement.
	
	\begin{thm}(\cite{GlJuWiZa09})
		\label{richAR}
		Let $\varphi$ be an Arnoux-Rauzy morphism over the alphabet $\A$ and $\uu \in \A^\N $ such that its language is closed under reversal.
		Then $\uu$ is rich if and only if $\varphi(\uu)$ is rich.
	\end{thm}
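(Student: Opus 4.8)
The plan is to reduce the statement to the elementary generators of the Arnoux-Rauzy monoid and to treat the two implications separately, the backward one being already at hand. Write $\varphi = g_1 \circ \cdots \circ g_k$ as a composition of elementary Arnoux-Rauzy morphisms. By Corollary~\ref{preimage}, together with Remark~\ref{UpToConjugate} for those generators that lie in $P_{ret}$ only up to conjugacy, the property that a language be closed under reversal is preserved by each $g_i$; hence it suffices to prove the equivalence for a single elementary morphism applied to a word $\vv$ with $\L(\vv)$ closed under reversal, and then iterate over $g_k, g_{k-1}, \ldots, g_1$. A permutation of the alphabet merely relabels letters and so preserves both richness and closure under reversal. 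The morphism $\overline{\psi}_a$ is conjugated to $\psi_a$ (Example~\ref{Ex:AR}), so by Remark~\ref{UpToConjugate} the words $\overline{\psi}_a(\vv)$ and $\psi_a(\vv)$ have the same language; it is therefore enough to settle the case $g = \psi_a$.

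For $\psi_a$ the backward implication is immediate: by Remark~\ref{rem:elementaryAR} the morphism $\psi_a$ is a marked morphism of Class $P_{ret}$ with marker $w = a$, so Corollary~\ref{EasyDirection} gives that $\psi_a(\vv)$ rich implies $\vv$ rich. The forward implication, that $\vv$ rich implies $\psi_a(\vv)$ rich, is the substance of the proof. Since the images $\psi_a(c)$ end in pairwise distinct letters, $\psi_a$ equals its own rightmost conjugate $(\psi_a)_R$, so Corollary~\ref{ImagesOfBSfactor} applies. I would apply Theorem~\ref{OnlyBSfactors} to $\psi_a(\vv)$, whose language is closed under reversal by Corollary~\ref{preimage}, and verify condition~\eqref{condition} for every bispecial factor $v$ of $\psi_a(\vv)$. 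If $v$ contains the marker $a$, then Item (1) of Corollary~\ref{ImagesOfBSfactor} yields $v = \psi_a(u) a$ for some $u \in \B(\vv)$; since $\vv$ is rich, $u$ satisfies \eqref{condition}, and Items (4) and (5) of that corollary together with Item (3) of Lemma~\ref{palindromPreserved} (which makes $v$ a palindrome exactly when $u$ is) transport this to the required condition for $v$.

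It remains to locate the bispecial factors of $\psi_a(\vv)$ that avoid the marker $a$. Every image block $\psi_a(c)$ begins with $a$, so each letter $b \neq a$ is always immediately followed by $a$ and is thus not right special; consequently the empty word $\varepsilon$ is the only bispecial factor not containing $a$, and it escapes the correspondence above. Verifying \eqref{condition} for $v = \varepsilon$ is the crux, and I would do it by a direct count. Writing $d$ for the number of distinct letters of $\psi_a(\vv)$, one has $\# L_{\psi_a(\vv)}(\varepsilon) = \# R_{\psi_a(\vv)}(\varepsilon) = d$, while every length-two factor of $\psi_a(\vv)$ is of the form $aa$, $ab$ or $ba$ with $b \neq a$, no factor $bc$ with $b, c \neq a$ being possible. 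When $a$ occurs in $\vv$ this gives $\#B_{\psi_a(\vv)}(\varepsilon) = 2d - 1$ and $\pext{\psi_a(\vv)}{\varepsilon} = 1$, so $b_{\psi_a(\vv)}(\varepsilon) = (2d-1) - d - d + 1 = 0 = \pext{\psi_a(\vv)}{\varepsilon} - 1$; when $a$ does not occur in $\vv$ one finds $\#B_{\psi_a(\vv)}(\varepsilon) = 2d - 2$ and $\pext{\psi_a(\vv)}{\varepsilon} = 0$, so $b_{\psi_a(\vv)}(\varepsilon) = (2d-2) - d - d + 1 = -1 = \pext{\psi_a(\vv)}{\varepsilon} - 1$. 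In either case the palindrome $\varepsilon$ satisfies \eqref{condition}, so by Theorem~\ref{OnlyBSfactors} the word $\psi_a(\vv)$ is rich; composing the elementary steps then yields the theorem. I expect this $\varepsilon$-computation to be the main obstacle, since $\varepsilon$ is the unique bispecial factor lying outside the general machinery of Corollary~\ref{ImagesOfBSfactor}.
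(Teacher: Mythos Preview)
Your proposal is correct and follows essentially the same route as the paper: reduction to the elementary generators, Corollary~\ref{EasyDirection} for the backward implication, and Theorem~\ref{OnlyBSfactors} together with Corollary~\ref{ImagesOfBSfactor} for the forward implication on $\psi_a$. Your case split (bispecials containing the marker $a$ versus $v=\varepsilon$) is equivalent to the paper's split by length, and your explicit bi-extension count for $\varepsilon$, including the subcase $a\notin\vv$, is in fact more careful than the paper's treatment; you also make explicit the propagation of closure under reversal through the composition, which the paper leaves implicit.
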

	\begin{proof}
		Since the Arnoux-Rauzy monoid is generated by elementary morphisms, let us prove the result for these morphisms.
		
		Let us first consider the case $\varphi =\psi_a$ for a certain letter $a \in \A$.

\medskip		
		By Remark~\ref{rem:elementaryAR}, if $\psi_a(\uu)$ is rich, then $\uu$ is rich as well.
		To prove the opposite implication let us assume that $\uu$ is rich and let us prove that for all bispecial words in $\mathcal{L}(\psi_a({\uu}))$ Equation~(1) is satisfied.
		The empty word $\varepsilon$ is clearly a palindrome.
		The only palindromic extension of the empty word in $\L(\psi_a({\uu}))$ is $aa$, and $b_{\psi_a({\uu})}(\varepsilon) = 1 - 1 = 0$.
		The only bispecial factor of length 1 is $a$, that is clearly a palindrome.
		Using Items (4) and (5) of Corollary~\ref{ImagesOfBSfactor} we have:
		$$
		\pext{\psi_a({\uu})}{a}
		\; = \;
		\pext{\bf u}{\varepsilon}
		\; = \;
		b_{\bf u}(\varepsilon) + 1
		\; = \;
		b_{\psi_a({\uu})} (a) + 1.
		$$
		Let us now consider a bispecial word $v \in \L(\psi({\uu}))$ of length at least $2$.
		Since the only left special (resp. right special) letter is $a$, we can write $v = a v' a$ for a certain word $v'$.
		Moreover $av' = \psi_a(u)$ for a certain word $u \in \L({\uu})$, hence $u = \psi_a(w) a$ and, by Item (3) of Corollary~\ref{ImagesOfBSfactor}, $u \in \B({\uu})$.
		Using Item (3) of Lemma~\ref{palindromPreserved}, we have that $v$ is a palindrome if and only if $u$ is a palindrome.
		Using the same reasoning as before we find that whenever $v$ is a palindrome, then
		$$
		\pext{\psi_a({\uu})}{v}
		\; = \;
		\pext{\uu}{u}
		\; = \;
		b_{\uu}(u) + 1
		\; = \;
		b_{\psi_a({\uu})}(u) + 1,
		$$
		while whenever $v$ is not a palindrome, then
		$$
		b_{\psi_a({\uu})}(v)
		\; = \;
		b_{\uu}(u)
		\; = \;
		0.
		$$
		Thus, according to Theorem~\ref{OnlyBSfactors}, $\psi_a(\uu)$ is rich.

\medskip		
		Let us now consider the case $\varphi = \overline{\psi}_a$ for a certain letter $a \in \A$.
		We note that $\overline{\psi}_a$ is conjugated to $\psi_a$.
		Due to Remark~\ref{UpToConjugate} we have that
		$$
		\overline{\psi}_a(\uu) \; \mbox{ is rich}
		\quad \Longleftrightarrow \quad
		\psi_a(\uu) \; \mbox{ is rich }
		\quad \Longleftrightarrow \quad
		\uu \; \mbox{ is rich }
		$$
		
		Finally, let us consider a morphism permuting letters.
		Since the definition of richness is based on counting palindromes occurring in a factor of $\uu$, permutations of letters do not change the number of palindromes and the statement of the theorem is trivially verified.
	\end{proof}
	
	In the case of the binary alphabet the set of morphisms in Class $P_{ret}$ which preserve richness can be extended even more.
	Indeed, in a binary alphabet, every acyclic morphism $\psi$ is marked and moreover a morphism in Class $P_{ret}$ is acyclic by Proposition~\ref{cyclic}.
	
	\begin{lem}
		\label{lem:short}
		Let $w$ be the marker of a binary morphism $\varphi = \varphi_R$ in Class $P_{ret}$ and $\uu \in \{ {\tt 0,1} \}^\N$ be a recurrent word.
		Assume that $\uu$ and $\varphi({\tt 01})w$ are rich.
		Each bispecial factor of $\varphi(\uu)$ which does not contain the marker $w$ satisfies Condition~(1) of Theorem~\ref{OnlyBSfactors}.
	\end{lem}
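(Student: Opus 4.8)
The plan is to show that a bispecial factor avoiding $w$, together with all of its one-letter extensions, is ``visible'' inside the single rich word $\varphi({\tt 01})w$, and then to import condition~\eqref{condition} from a fixed rich reference word. Since $\varphi=\varphi_R$ is given and $\varphi(\uu)$ is recurrent (because $\uu$ is and $\varphi$ is non-erasing), I may assume both letters occur in $\uu$: otherwise $\uu$ is constant, $\varphi(\uu)$ is periodic, and the claim follows from Theorem~\ref{periodicRich}. Recurrence then gives ${\tt 01},{\tt 10}\in\L(\uu)$.

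First I would localise. By Item~(2) of Lemma~\ref{lem:wprefix} the word $\varphi(\uu)$ is a concatenation of the return words $\varphi({\tt 0}),\varphi({\tt 1})$, and by Item~(1) the marker $w$ occurs exactly at the start of each such block. Writing $w=w'c$ with $c$ its last letter, a factor avoiding $w$ must begin strictly inside one block and reach at most the prefix $w'$ of the next block; hence it is a factor of $\varphi(a)w$ for a single letter $a$, and since $\varphi({\tt 0})w$ is a prefix and $\varphi({\tt 1})w$ a suffix of $\varphi({\tt 01})w$, it is a factor of $\varphi({\tt 01})w$. The same estimate shows that a one-letter bi-extension $avb$ of a bispecial $v$ not containing $w$ crosses at most one block boundary, so $avb$ is again a factor of some $\varphi(a')w\subseteq\varphi({\tt 01})w$ even when it happens to contain $w$. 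As ${\tt 01},{\tt 10}\in\L(\uu)$, Item~(3) of Lemma~\ref{lem:wprefix} and Item~(2) of Lemma~\ref{palindromPreserved} put both $\varphi({\tt 01})w$ and its mirror $\overline{\varphi({\tt 01})w}=\varphi({\tt 10})w$ inside $\L(\varphi(\uu))$. Consequently the factors of $\varphi(\uu)$ avoiding $w$ form a reversal-closed set equal to the set of factors of $\varphi({\tt 01})w$, and for a bispecial $v$ not containing $w$ the data $L_{\varphi(\uu)}(v)$, $R_{\varphi(\uu)}(v)$, $B_{\varphi(\uu)}(v)$ and $\pext{\varphi(\uu)}{v}$ coincide with the same quantities computed inside $\varphi({\tt 01})w$.

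Then I would transfer condition~\eqref{condition} from a rich reference word. Consider $\vv=\varphi\bigl(({\tt 01})^\omega\bigr)=\bigl(\varphi({\tt 01})\bigr)^\omega$. Its language is closed under reversal by Corollary~\ref{preimage}, and exactly as above its factors avoiding $w$, and their extensions, coincide with those of $\varphi(\uu)$; in particular a short $v$ is bispecial in $\varphi(\uu)$ if and only if it is bispecial in $\vv$, with the same bilateral order and the same number of palindromic extensions. The word $\vv$ is periodic and contains infinitely many palindromes (images under $\varphi$ of the palindromic factors of $({\tt 01})^\omega$), so by Theorem~\ref{periodicRich} its richness reduces to that of a bounded prefix, which is $\varphi({\tt 01})w$ itself (the period $\varphi({\tt 01})$ is followed again by $w$), rich by hypothesis. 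Hence $\vv$ is rich with reversal-closed language, and Theorem~\ref{OnlyBSfactors} shows that every bispecial factor of $\vv$ satisfies condition~\eqref{condition}. Applying this to the short bispecial factors and carrying the conclusion back through the equality of local data yields condition~\eqref{condition} for every bispecial factor of $\varphi(\uu)$ not containing $w$.

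I expect the heart of the matter to be the localisation and the reversal bookkeeping: the precise claim that a bispecial factor avoiding $w$ and each of its one-letter extensions stay inside a single block $\varphi(a)w$, so that all the relevant extension data is faithfully recorded by $\varphi({\tt 01})w$ and its mirror. The second delicate point is extracting richness of the reference word $\vv$ from the single finite hypothesis through Theorem~\ref{periodicRich}, i.e. checking that the prefix required by that test does not exceed $\varphi({\tt 01})w$. Should the reference-word route be inconvenient, an alternative is a direct finite case analysis on $B_{\varphi(\uu)}(v)$: over $\{{\tt 0,1}\}$ one has $b_{\varphi(\uu)}(v)=\# B_{\varphi(\uu)}(v)-3$, and the non-neutral configurations are excluded by feeding the patterns ${\tt 0}v{\tt 0},{\tt 1}v{\tt 1},{\tt 0}\overline{v}{\tt 1},{\tt 1}\overline{v}{\tt 0}$ (available since the short factors are reversal-closed) into Lemma~\ref{Montreal} for $\varphi({\tt 01})w$.
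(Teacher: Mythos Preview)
Your localisation step and the reversal bookkeeping (your equations corresponding to the paper's (4)--(5)) are correct and are exactly what the paper does first.

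The main route you propose, via the periodic reference word $\vv=\bigl(\varphi({\tt 01})\bigr)^\omega$ and Theorem~\ref{periodicRich}, has a genuine gap at the point you yourself flag as ``delicate''. Theorem~\ref{periodicRich} asks for richness of a prefix of length $|pq|+\bigl\lfloor\bigl||p|-|q|\bigr|/3\bigr\rfloor$ for \emph{some} palindromic decomposition $pq$ of the period. You assert that this prefix is $\varphi({\tt 01})w$, i.e.\ that the extra term is at most $|w|$, but this fails in general. Take for instance $\sigma$ of Example~\ref{ex:sigma} with $\ell=1$ and $p,q$ large and far apart: the marker is the single letter ${\tt 0}$, while $|\sigma({\tt 01})|$ is arbitrarily large, so no palindromic decomposition of the period can make $\bigl\lfloor\bigl||p|-|q|\bigr|/3\bigr\rfloor\le 1$. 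Even with the natural decomposition $\varphi({\tt 01})=(\varphi({\tt 0})w)\cdot(w^{-1}\varphi({\tt 1}))$ (both factors palindromes when $|\varphi({\tt 1})|\ge|w|$), the length gap $\bigl||\varphi({\tt 0})|-|\varphi({\tt 1})|+2|w|\bigr|$ is uncontrolled. So you cannot certify richness of $\vv$ from the single hypothesis that $\varphi({\tt 01})w$ is rich, and the transfer argument collapses.

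Your alternative is the paper's actual proof: a direct case analysis on $b_{\varphi(\uu)}(v)\in\{-1,0,1\}$ using the fact that all bi-extensions of $v$ lie in the rich word $\varphi({\tt 01})w$. One refinement: Lemma~\ref{Montreal} only disposes of the case $b_{\varphi(\uu)}(v)=+1$ with $v$ non-palindromic (all four patterns ${\tt 0}v{\tt 0},{\tt 1}v{\tt 1},{\tt 0}\overline v{\tt 1},{\tt 1}\overline v{\tt 0}$ present). For $b_{\varphi(\uu)}(v)=-1$ you must also exclude $\{{\tt 0}v{\tt 0},{\tt 1}v{\tt 1}\}$ (for any $v$) and $\{{\tt 0}v{\tt 1},{\tt 1}v{\tt 0}\}$ (for non-palindromic $v$); these are handled by the alternation statement of Lemma~\ref{alternate}, not by Lemma~\ref{Montreal}. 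With that addition, the alternative route is complete and coincides with the paper's argument.
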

	\begin{proof}
		Let $v$ be a bispecial factor of $\varphi(\uu)$ which does not contain $w$ as a factor.
		By Lemma~\ref{lem:wprefix}, we deduce that $v$ is a factor of $c^{-1} \varphi({\tt 0}) w c^{-1}$ or $c^{-1} \varphi({\tt 1}) w c^{-1}$, where $c ={\rm  Fst}(w) = {\rm Lst}(w)$.
		Since $\varphi({\tt 0})w$ and $\varphi({\tt 1})w$ are palindromes, we have
		\begin{equation}
			\label{star1}
			xvy \in \L \left( \varphi({\tt 01})w \right)
			\quad \Longleftrightarrow \quad
			y \overline{v} x \in \L \left( \varphi({\tt 01})w \right)
			\qquad
			\text{for } x,y \in \{ {\tt 0,1} \}\,,
		\end{equation}
		and
		\begin{equation}
			\label{star2}
			xvy \in \L \left( \varphi({\tt 01})w \right)
			\quad \Longleftrightarrow \quad
			xvy \in \L \left( \varphi(\uu) \right)
			\qquad
			\text{for } x,y \in \{ {\tt 0,1} \}\,.
		\end{equation}
		Let us consider the set of bi-extensions $\mathfrak{B} = \{xvy \; : \; (x,y) \in B_{\varphi(\uu)}(v) \}$.
		By Equation~(5), we have $\mathfrak{B} \subset \L \left( \varphi({\tt 01})w \right)$.
		Let us show that $v$ satisfies condition~(1) of Theorem~\ref{OnlyBSfactors}, by considering the bilateral order of $v$.
		Note that, since we are in a binary alphabet, the only possible values for $b_{\varphi(\uu)}(v)$ are $0, 1$ and $-1$.
\begin{description}
			\item[Case $b_{\varphi(\uu)}(v) = 0$] \quad
			In this case we have $\# \mathfrak{B} = 3$.
			If $v$ is not a palindrome, then Equation~(1) is trivially fulfilled.
			If $v$ is a palindrome, then Equation~(4)  implies that $\mathfrak{B} = \{ ava, avb, bva\}$ for $a,b \in \{ {\tt 0,1} \}$, and $a \ne b$.
			Therefore, $\pext{\varphi(\uu)}{v} = 1$ and Equation~(1) is satisfied.
			
			\item[Case $b_{\varphi(\uu)}(v) = 1$] \quad
			In this case we have $\mathfrak{B} = \{ {\tt 0}v{\tt 1}, {\tt 1}v{\tt 0}, {\tt 0}v{\tt 0}, {\tt 1}v{\tt 1} \}$.
			If $v$ is a palindrome, then we have $\pext{\varphi(\uu)}{v} = 2$ and Equation~(1)  is fulfilled.
			If $v$ is not a palindrome, then Equation~(4)  implies that ${\tt 0}v{\tt 0}, {\tt 1}v{\tt 1}, {\tt 0}\overline{v}{\tt 1}$ 
            and ${\tt 1}\overline{v}{\tt 0}$ occur in $\varphi({\tt 01})w$.
			By Lemma~\ref{Montreal}, this is a contradiction with richness of $\varphi(01)w$.
			Therefore, $b_{\varphi(\uu)}(v)$ cannot be positive for a non-palindromic $v$.
			
			\item[Case $b_{\varphi(\uu)}(v)(v) = -1$] \quad
			In this case $\#\mathfrak{B} = 2$.
			First assume that $\mathfrak{B} = \{ {\tt 0}v{\tt 1}, {\tt 1}v{\tt 0} \}$.
			If $v$ is a palindrome, then $\pext{\varphi(\uu)}{v} = 0$ and Equation~(1)  is fulfilled.
			If $v$ is not a palindrome, then by Lemma~\ref{alternate} occurrences of $v$ and $\overline{v}$ alternate in $c^{-1}\psi({\tt 01})wc^{-1}$ and occurrences of ${\tt 0}v{\tt 1}$ and ${\tt 1}\overline{v}{\tt 0}$ alternate in $\varphi({\tt 01})w$.
			Item (2) of Lemma~\ref{alternate} implies that ${\tt 0}v{\tt 1}$ and ${\tt 1}v{\tt 0}$ cannot occur simultaneously in $\varphi({\tt 01})w$ which leads to a contradiction.
			Therefore, $\mathfrak{B} \ne \{{\tt 0}v{\tt 1}, {\tt 1}v{\tt 0}\}$ for a non-palindromic $v$.
			Moreover, $\mathfrak{B} \ne \{{\tt 0}v{\tt 0}, {\tt 1}v{\tt 1}\}$.
			Indeed, by Item (2) of Lemma~\ref{alternate} we can not have simultaneous occurrences of ${\tt 0}v{\tt 0}$ and ${\tt 1}v{\tt 1}$ in $\varphi({\tt 01})w$ for both palindromic and non-palindromic $v$.
		\end{description}

\vspace*{-6mm}
	\end{proof}
	
	\begin{example}
		Let ${\bf f}$ be as in Example~\ref{ex:varphi} and $\varphi_1$ and $w_1$ as in Example~\ref{ex:psi1}.
		Let $\vv = \varphi_1({\bf f})$.
		The word $\varphi_1({\tt 01}) w_1 = {\tt 1110 1 111}$ is rich.
		The only bispecial factors of $\vv$ which do not contain $w_1$ as a factor are the palindromes $\varepsilon, {\tt 1}$ and ${\tt 11}$.
		One can check that $b_\vv(\varepsilon) = b_\vv({\tt 1}) = b_\vv({\tt 11}) = 0$ and that $\pext{\vv}{\varepsilon} = \pext{\vv}{\tt 1} = \pext{\vv}{\tt 11} = 1$.
	\end{example}
	
	\begin{lem}
		\label{lem:psiurich}	
		Let $\uu$ be a fixed point of a primitive morphism $\varphi: \{ {\tt 0,1} \}^* \to \{ {\tt 0,1} \}^*$  conjugated to a morphism in Class $P_{ret}$.
		Then $\L(\uu)$ is closed under reversal.
	\end{lem}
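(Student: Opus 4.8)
The plan is to reduce the statement to the assertion that $\L(\uu)$ contains palindromes of unbounded length, and then to finish by a standard uniform-recurrence argument. Throughout, let $\psi \in P_{ret}$ be a morphism to which $\varphi$ is conjugated, and let $w$ be its marker. Since $\varphi$ is primitive, its fixed point $\uu$ is uniformly recurrent (see~\cite[Proposition 13]{PytheasFogg}), in particular recurrent.

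First I would transfer primitivity and the language identity from $\varphi$ to $\psi$. Because conjugation forces $\varphi(a)$ and $\psi(a)$ to be conjugate (cyclically equivalent) words for each $a \in \A = \{ {\tt 0,1} \}$, the morphisms $\varphi$ and $\psi$ share the same incidence matrix; hence $\psi$ is primitive and $|\psi^n(a)| \to \infty$ for every letter $a$. Moreover, conjugacy is preserved under composition (\cite[Lemma 2.5]{HaVeZa16}), so $\varphi^n$ is conjugated to $\psi^n$ for each $n$. As $\uu$ is recurrent and $\uu = \varphi^n(\uu)$, Remark~\ref{UpToConjugate} gives $\L(\uu) = \L(\varphi^n(\uu)) = \L(\psi^n(\uu))$ for all $n$.

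Next I would manufacture long palindromes inside $\L(\uu)$. Since Class $P_{ret}$ is closed under composition (\cite{BaPeSt11}), each $\psi^n$ lies in $P_{ret}$; denote its marker by $w_n$. Fix a letter $a$ occurring in $\uu$. By Definition~\ref{classPret} the word $\psi^n(a)\, w_n$ is a palindrome, and by Item (3) of Lemma~\ref{lem:wprefix} it belongs to $\L(\psi^n(\uu)) = \L(\uu)$. Its length is at least $|\psi^n(a)|$, which tends to infinity, so $\L(\uu)$ indeed contains palindromes of arbitrarily large length.

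Finally I would conclude. Let $u \in \L(\uu)$. By uniform recurrence there is $N$ such that every factor of $\uu$ of length $N$ contains $u$. Choosing a palindrome $p \in \L(\uu)$ with $|p| \ge N$, the factor $u$ occurs in $p$, say $p = \alpha u \beta$; since $p = \overline{p} = \overline{\beta}\,\overline{u}\,\overline{\alpha}$, the word $\overline{u}$ occurs in $p$, whence $\overline{u} \in \L(\uu)$. As $u$ was arbitrary, $\L(\uu)$ is closed under reversal. The main obstacle is the middle step: one must carry primitivity and the identity $\L(\uu) = \L(\psi^n(\uu))$ correctly through the conjugacy, so that the palindromic complete-return-word property of Class $P_{ret}$ can be exploited to produce unboundedly long palindromes; once these are in hand, the uniform recurrence of $\uu$ makes closure under reversal routine.
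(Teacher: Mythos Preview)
Your proof is correct and follows essentially the same route as the paper: produce palindromes of unbounded length in $\L(\uu)$ and then invoke uniform recurrence to conclude closure under reversal. The paper's proof is terser---it simply appeals to Item~(3) of Lemma~\ref{palindromPreserved} and Item~(3) of Lemma~\ref{lem:wprefix} to assert that a fixed point of a $P_{ret}$ morphism contains infinitely many palindromes, without spelling out the conjugacy transfer---whereas you make this step explicit by carrying the language identity through $\psi^n$ via Remark~\ref{UpToConjugate} and the closure of $P_{ret}$ under composition; your version is thus a more detailed rendering of the same argument.
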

	\begin{proof}
		By Item (3) of Lemma~\ref{palindromPreserved} and by Item (3) of Lemma~\ref{lem:wprefix}, a fixed point of a morphism in Class $P_{ret}$ contains infinitely many palindromes.
		Since $\varphi$ is primitive, $\L(\uu)$ is uniformly recurrent, i.e., for each $n$ there exists  an integer number $R(n)$ such that every factor of length at least $R(n)$ contains  all factors of $\L(\uu)$ of length $n$.
		In particular, if $u \in \L(\uu)$ and $|u| = n$, then $u$ is a factor of every palindrome $p \in \L(\uu)$  of length at least $R(n)$.
		Hence, $\overline{u}$ belongs to $\L(\uu)$ as well.
	\end{proof}
	
	\begin{thm}
		\label{theo:psiurich}
		Let $\varphi: \{ {\tt 0,1} \}^* \to \{ {\tt 0,1} \}^*$ be a morphism conjugated to a morphism in Class $P_{ret}$.
		Let $w$ be the marker associated to $\varphi_R$.
		Assume that the finite word $\varphi_R({\tt 01})w$ is rich.
		\begin{enumerate}
			\item If $\uu \in \{ {\tt 0,1} \}^\N$ is recurrent and rich, then $\varphi(\uu)$ is rich.
			\item If $\uu \in \{ {\tt 0,1} \}^\N$ is a fixed point of $\varphi$ and $\varphi$ is primitive, then $\uu$ is rich.
		\end{enumerate}
	\end{thm}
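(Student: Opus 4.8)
The plan is to treat the two items separately, reducing everything to the bispecial criterion of Theorem~\ref{OnlyBSfactors}. For Item (1), I would first invoke Remark~\ref{UpToConjugate} to replace $\varphi$ by its rightmost conjugate: since $\varphi$ and $\varphi_R$ produce the same language on any recurrent word and richness is a property of the language, it is enough to prove that $\varphi_R(\uu)$ is rich. By Item (1) of Proposition~\ref{conjugacy} the morphism $\varphi_R$ lies in Class $P_{ret}$, and being binary and acyclic (Proposition~\ref{cyclic}) it is marked, so all of the machinery of Corollaries~\ref{preimage} and~\ref{ImagesOfBSfactor} and of Lemma~\ref{lem:short} is at my disposal. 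Thus from now on I assume $\varphi=\varphi_R$ with marker $w$. To be allowed to use Theorem~\ref{OnlyBSfactors} I must know that $\L(\varphi(\uu))$ is closed under reversal; I would obtain this from Corollary~\ref{preimage} once I know $\L(\uu)$ itself is closed under reversal, which follows from the hypotheses because a recurrent rich word has reversal-closed language (see~\cite{GlJuWiZa09}) --- this is precisely where the recurrence assumption is used.

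It then remains to verify Condition~\eqref{condition} for every bispecial factor $v$ of $\varphi(\uu)$, and I would split according to whether $v$ contains the marker $w$. If $w$ is not a factor of $v$, then $v$ is one of the finitely many ``short'' bispecial factors, and Lemma~\ref{lem:short} (whose hypotheses --- $\uu$ recurrent and rich, $\varphi_R({\tt 01})w$ rich --- are exactly what we have) guarantees that $v$ satisfies Condition~\eqref{condition}. If $w$ is a factor of $v$, then Item (1) of Corollary~\ref{ImagesOfBSfactor} gives $v=\varphi(u)w$ for some $u\in\B(\uu)$. Since $\uu$ is rich, $u$ satisfies Condition~\eqref{condition}; Items (4) and (5) of Corollary~\ref{ImagesOfBSfactor} transport the bilateral order and the number of palindromic extensions unchanged to $v$, while Item (3) of Lemma~\ref{palindromPreserved} shows that $v$ is a palindrome exactly when $u$ is. Hence $v$ also satisfies Condition~\eqref{condition}, and Theorem~\ref{OnlyBSfactors} yields that $\varphi(\uu)$ is rich. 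This part is essentially the converse of Corollary~\ref{EasyDirection}, the genuinely new input being Lemma~\ref{lem:short}.

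For Item (2) I would avoid arguing directly on the fixed point $\uu$ (which would be circular, as Lemma~\ref{lem:short} needs a rich preimage) and instead approximate $\uu$ by images of a word already known to be rich. Take the Fibonacci word ${\bf f}$: it is uniformly recurrent, rich, and its language is closed under reversal. Applying Item (1) repeatedly --- with recurrence propagated at each step via Corollary~\ref{preimage}, since reversal-closed languages are recurrent --- I get that $\varphi^n({\bf f})$ is rich for every $n$. Now, as $\varphi$ is primitive, the fixed point equals $\uu=\varphi^\omega(a_0)$ for its initial letter $a_0$, so every factor of $\uu$ occurs in some $\varphi^N(a_0)$; since $a_0$ occurs in ${\bf f}$, that factor occurs in $\varphi^N({\bf f})$, which is rich, and factors of rich words are rich. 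Thus every prefix of $\uu$ is rich, i.e.\ $\uu$ is rich. (Lemma~\ref{lem:psiurich} may be quoted to record that $\L(\uu)$ is closed under reversal, although this bootstrap does not strictly require it.)

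The technical heart of the whole argument is Lemma~\ref{lem:short}, which is already dispatched in the excerpt; granting it, the only genuine care points are the reduction to $\varphi_R$ so that the marked-$P_{ret}$ corollaries apply, the reversal-closure step that legitimizes Theorem~\ref{OnlyBSfactors}, and --- for Item (2) --- recognizing that the right move is to renormalize through a known rich word rather than to argue on the fixed point directly. I expect the reversal-closure bookkeeping and the correct matching of conjugates to be where an overhasty proof would slip.
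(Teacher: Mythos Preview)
Your treatment of Item~(1) matches the paper's proof essentially line for line: reduce to $\varphi=\varphi_R$, establish reversal-closure to enable Theorem~\ref{OnlyBSfactors}, then split bispecials according to whether they contain $w$, invoking Lemma~\ref{lem:short} for the short ones and Corollary~\ref{ImagesOfBSfactor} plus Lemma~\ref{palindromPreserved} for the long ones.

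Your Item~(2), however, takes a genuinely different route. The paper argues \emph{directly} on the fixed point: since $\uu=\varphi(\uu)$, it introduces the map $\Phi:\B(\uu)\to\B(\uu)$, $\Phi(u)=\varphi(u)w$, observes that every non-initial bispecial factor is of the form $\Phi^n(v)$ for some initial bispecial $v$, verifies Condition~\eqref{condition} for the (finitely many) initial ones via the argument of Lemma~\ref{lem:short}, and then propagates it along $\Phi$ using Items~(4)--(5) of Corollary~\ref{ImagesOfBSfactor}. Your bootstrap through iterates $\varphi^n({\bf f})$ of the Fibonacci word is correct and has the virtue of being entirely self-contained once Item~(1) is proved: it never applies Lemma~\ref{lem:short} to a word not already known to be rich, so the apparent circularity you flagged (Lemma~\ref{lem:short} is stated with $\uu$ rich as a hypothesis) simply does not arise. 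The paper's approach, on the other hand, yields a structural description of $\B(\uu)$ that is of independent interest --- every bispecial factor is reached from an initial one by iterating $\Phi$ --- and shows more transparently why the single finite check on $\varphi_R({\tt 01})w$ governs \emph{all} bispecial factors of the fixed point. Either argument is fine; yours is shorter, the paper's is more informative about the bispecial hierarchy.
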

	\begin{proof}
		As we have already mentioned, for a recurrent word $\uu$ the language of $\varphi_R(\uu)$ and of $\varphi(\uu)$ coincide.
		In the proof of the second assertion we use the fact that fixed points of mutually conjugate primitive morphisms have the same language, which is moreover uniformly recurrent.
		Therefore, we may assume without loss of generality that $\varphi = \varphi_R$.
		To prove both items we need to use Theorem~\ref{OnlyBSfactors}.
		Note that if $\uu \in \{ {\tt 0,1} \}^\N$ is recurrent and rich, then $\L(\uu)$ is closed under reversal.
		The same is true when $\uu \in \{ {\tt 0,1} \}^\N$ is a fixed point of a primitive morphism by Lemma~\ref{lem:psiurich}.
		
		To prove Item $(1)$ let us assume richness of $\uu$.
		Let $v \in \B(\varphi(\uu))$.
		If $v$ does not contain $w$ as its factor, then by Lemma \ref{lem:short} $v$ satisfies Condition~(1) of Theorem~\ref{OnlyBSfactors}.
		If $w$ occurs in $v$, then by Item (1) of Proposition~\ref{ImagesOfBSfactor} there exists $u \in \B(\uu)$ such that $v = \varphi(u)w$.
		Since $\uu$ is rich, $u$ satisfies Equation~(1).
		By Items (4) and (5) of the Proposition~\ref{ImagesOfBSfactor} and by Item (3) of Lemma~\ref{palindromPreserved}, $v$ satisfies the same condition.
		Consequently, $\varphi(\uu)$ is rich.
		
		To prove Item (2) let us consider $\uu$ such that $\varphi(\uu) = \uu$.
		Letting $c ={\rm Fst}(w)$, we have $\uu = \lim \varphi^n(c)$.
		In accordance with~\cite{Kl12}, a bispecial factor of $\uu$ which does not contain $w$ will be called \emph{initial}.
		Note that there are only a finite number of initial bispecial factors since all long enough words in $\L(\uu)$ contain $w$ as a factor.
		We have already showed that every initial factor satisfies Equation~(1).
		Item (1) of Proposition~\ref{ImagesOfBSfactor} allows us to define a mapping $\Phi: \B(\uu) \to \B(\uu)$ defined as $\Phi(u) := \varphi(u) w$.
		Item (2) of the same proposition says that for each non-initial bispecial factor $v' \in \B(\uu)$ there exists $n \in \N$ and an initial bispecial factor $v$ of $\uu$ such that $v' = \Phi^n(v)$.
		By Items (3) and (4) of the same proposition and by Lemma~\ref{palindromPreserved}, each bispecial factor of $\uu$ fulfills the condition of Theorem~\ref{OnlyBSfactors} and thus $\uu$ is rich.
	\end{proof}
	
	\begin{example}
		\label{ex:sigma0}
		Let $\sigma, \sigma_R$ and $w$ be as defined in Example~\ref{ex:sigma2} and ${\bf f}$ be the Fibonacci word seen in Example~\ref{ex:varphi}.
		One can easily check that the words $\sigma_R({\tt 01}) w = {\tt 10 1011 101}$ is rich and that $\sigma$ is primitive.
		Moreover, it is well known that ${\bf f}$ is rich and recurrent.
		Thus, according to point (1) of Theorem~\ref{theo:psiurich}, the infinite word $\sigma({\bf f})$ is rich.
		Moreover, according to point (2) of the same theorem the fixed point
		$$
		\sigma({\tt 0})^\omega = {\tt 0 1 0111 01 0111 0111 0111 01 0111 01 0111 0111 0111 01} \cdots,
		$$
		is rich (as well as $\sigma_R({\tt 1})^\omega$).
	\end{example}
	
	\begin{coro}
		Let $\varphi$ be a binary morphism in Class $P_{ret}$ and $\uu \in\{ {\tt 0,1} \}^\N$ be a non-unary recurrent word.
		If $\varphi(\uu)$ is rich, then $\varphi(\vv)$ is rich for every recurrent rich word $\vv \in \{ {\tt 0,1} \}^\N$.
	\end{coro}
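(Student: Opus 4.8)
The plan is to reduce the statement to Item~(1) of Theorem~\ref{theo:psiurich}, whose only nontrivial hypothesis is that the single finite word $\varphi_R({\tt 01})w$ (with $w$ the marker of $\varphi_R$) be rich. Thus it suffices to extract this finite condition from the single witness $\uu$; once $\varphi_R({\tt 01})w$ is known to be rich, the theorem immediately delivers richness of $\varphi(\vv)$ for every recurrent rich $\vv$.

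First I would pass to the rightmost conjugate. Since $\varphi \in P_{ret}$ is acyclic by Proposition~\ref{cyclic}, the conjugate $\varphi_R$ is well defined, and by Remark~\ref{UpToConjugate} the recurrence of $\uu$ gives $\L(\varphi(\uu)) = \L(\varphi_R(\uu))$; hence $\varphi_R(\uu)$ is rich. Next, because $\uu$ is non-unary and recurrent, both letters occur infinitely often, so the language cannot consist only of blocks of a single letter and the factor ${\tt 01}$ must occur, i.e.\ ${\tt 01} \in \L(\uu)$. Applying Item~(3) of Lemma~\ref{lem:wprefix} then yields $\varphi_R({\tt 01})w \in \L(\varphi_R(\uu))$.

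Finally, every factor of a rich infinite word is itself rich, since its defect is bounded by the (zero) defect of the word; therefore $\varphi_R({\tt 01})w$ is rich. This is exactly the hypothesis required, so Item~(1) of Theorem~\ref{theo:psiurich} applies to any recurrent rich $\vv$ and gives that $\varphi(\vv)$ is rich. I expect no genuine obstacle here: the essential insight is that the global property ``$\varphi$ preserves richness'' is controlled by the single finite word $\varphi_R({\tt 01})w$, and a non-unary recurrent witness automatically displays this word as a factor of its rich image. The only points requiring care are that the factor ${\tt 01}$ (and not merely some isolated transition) genuinely occurs---guaranteed by recurrence together with non-unarity---and that replacing $\varphi$ by $\varphi_R$ leaves the language unchanged, which is precisely where recurrence of $\uu$ is used.
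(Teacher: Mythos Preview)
Your proposal is correct and follows essentially the same route as the paper: show ${\tt 01}\in\L(\uu)$ from non-unarity and recurrence, deduce via Lemma~\ref{lem:wprefix} that $\varphi_R({\tt 01})w$ is a factor of the rich word $\varphi_R(\uu)$ (hence rich), and then invoke Item~(1) of Theorem~\ref{theo:psiurich}. If anything, your write-up is slightly more careful in justifying the passage to $\varphi_R$ via Remark~\ref{UpToConjugate} and in citing the correct item of Lemma~\ref{lem:wprefix}.
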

	\begin{proof}
		Let $w$ be the marker of $\varphi_R$.
		Since $\uu$ is non-unary recurrent word, then ${\tt 01} \in \L(\uu)$.
		Remark~\ref{UpToConjugate} and Item (4) of Lemma~\ref{lem:wprefix} imply that $\varphi_R({\tt 01})w$ is a factor of $\uu$ and thus $\varphi_R({\tt 01})w$ is rich.
		The statement follows by Item (1) of the previous theorem.
	\end{proof}
	
	\begin{example}
		Let $\phi$ be as in Example~\ref{ex:varphi} and $\sigma$ as in Example~\ref{ex:sigma0}.
		Since ${\bf f = \phi({\bf f})}$ is recurrent and rich, we have that
		$$
		\phi(\sigma({\tt 0)}^\omega) = {\tt 01 0 01 0 0 0 01 0 01 0 0 0 01 0 0 0 01 0 0 0 01 0 01 0 0 0 01 0 01 0 0 0 01 0 0 } \cdots
		$$
		is rich.
		Moreover, images of rich periodic words by $\phi$ (or of any other Sturmian morphism) are rich as well.
	\end{example}

	\section{Construction of new rich words }
	\label{sec:derived}
	
	Theorems~\ref{richAR} and~\ref{theo:psiurich} enable us to construct new rich words from known ones.
	We give several examples.

\medskip	
	On alphabets of cardinality $d$, two classes of infinite words are known to be rich:
	\begin{itemize}
		\item episturmian words and
		\item the so-called $d$-\emph{iet words}, that is words coding $d$-interval exchange transformation under the symmetric permutation $(d, d-1, \ldots, 2, 1)$.
	\end{itemize}
	Applying an arbitrary Arnoux-Rauzy morphism to a $d$-iet word gives a new rich word which does not belong to any of the two classes mentioned above.
	
	Although both our theorems consist in applying a richness-preserving morphism to infinite words, we can apply them to finite words as well.
	The construction of infinite words J. Vesti provided in the proof of Proposition~\ref{ExtensionToPeriodic} is based on repetitive application of palindromic closure.
	Therefore, the languages of infinite words he constructed are closed under reversal and thus we can apply our theorems to factors of these words to get new finite rich words.
	
	In~\cite{GuShSh16}, Guo, Shallit and Shur gave a lower bound on the number of binary rich words of length $n$.
	They proved that every word of the form
	$
	{\tt 0}^{a_1} {\tt 1}^{b_1} {\tt 0}^{a_2} {\tt 1}^{b_2} \cdots {\tt 0}^{a_k} {\tt 1}^{b_k}
	$,
	where $a_1 \leq a_2 \leq \cdots \leq a_k$ and $b_1 \leq b_2 \leq \cdots \leq b_k$, is rich and thus they counted words of this form of length $n$.
	Mapping any of these rich words through a Sturmian morphism or a morphism from Theorem~\ref{theo:psiurich} gives us a new rich word.
	
	To give another type of examples of morphisms in Class $P_{ret}$ which have rich fixed points let us recall the definition of derived words as introduced by Durand in~\cite{Du98}.
	
	\begin{defi}
		Let $\RR{\uu}{x} = \{r_1,r_2, \ldots, r_k\}$ be the set of return words to a factor $x$ of a uniformly recurrent word $\uu$.
		Let $z$ be the shortest prefix of $\uu$ such that $\uu$ can be written in the form
		$
		\uu = z r_{d_0} r_{d_1} r_{d_2} \cdots
		$,
		where $r_{d_j} \in \RR{\uu}{x}$ for each $j \in \N$.
		The \emph{derived word} of $\uu$ to the factor $x$ is the infinite word $\dd_\uu(x) = d_0 d_1 d_2 \cdots$ over the alphabet of cardinality $\# \mathcal{R}_\uu(x) = k$.
	\end{defi}
	
	\begin{example}
		\label{Ex:PeriodDoubling}
		Let us consider the binary morphism
		$$
		\xi:
		\left\{
		\begin{array}{l}
			{\tt 0} \mapsto {\tt 11} \\
			{\tt 1} \mapsto {\tt 10}
		\end{array}
		\right.
		$$
		fixing the period doubling sequence
		$$
		\xx = {\tt 1011101010111011101} \cdots
		$$
		which is known to be rich (see~\cite{BlBrLaVu110}).
		The morphism $\xi$ does not belong to $P_{ret}$, nevertheless it is in Class $P$.
		Let us find the derived word of $\xx$ to the prefix $p = {\tt 1}$.
		This prefix has two return words, namely ${\tt r} = {\tt 10}$ and ${\tt s} = {\tt 1}$.
		We thus have
		$$
		\xx = \underbracket[1pt][2pt]{{\tt 10}}_{\tt r} \underbracket[1pt][2pt]{{\tt 1}}_{\tt s} \underbracket[1pt][2pt]{{\tt 1}}_{\tt s} \underbracket[1pt][2pt]{{\tt 10}}_{\tt r} \underbracket[1pt][2pt]{{\tt 10}}_{\tt r} \underbracket[1pt][2pt]{{\tt 10}}_{\tt r} \underbracket[1pt][2pt]{{\tt 1}}_{\tt s} \underbracket[1pt][2pt]{{\tt 1}}_{\tt s} \underbracket[1pt][2pt]{{\tt 10}}_{\tt r} \underbracket[1pt][2pt]{{\tt 1}}_{\tt s} \underbracket[1pt][2pt]{{\tt 1}}_{\tt s} \underbracket[1pt][2pt]{{\tt 10}}_{\tt r} \underbracket[1pt][2pt]{{\tt 1}}_{\tt s} \cdots
		$$
		and the derived word over the alphabet $\{ {\tt r,s} \}$ is
		$$
		{\bf d}_\xx(p) = {\tt rssrrrssrssrs} \cdots.
		$$
		By a result of Durand (\cite{Du98}), the derived word to a prefix is fixed by a substitution.
		Since $\xi({\tt r}) = {\tt 1011} = {\tt rss}$ and $\xi({\tt s}) = {\tt 10} = {\tt r}$, the substitution fixing ${\bf d}_\xx(p)$ is
		$$
		\eta:
		\left\{
		\begin{array}{l}
			{\tt r} \mapsto {\tt rss} \\
			{\tt s} \mapsto {\tt r}
		\end{array}.
		\right.
		$$
		One can easily check that $\eta : \{ {\tt r,s} \}^* \to \{ {\tt r,s} \}^*$ is in Class $P_{ret}$, with the marker $w = {\tt r}$. It is well-marked and the word $\eta({\tt rs})w = {\tt rssrr}$ is rich.
		Theorem~\ref{theo:psiurich} implies that the derived word to the prefix $p = {\tt 1}$ is rich as well.
	\end{example}
	
	What seen in the previous example is not surprising.
	Indeed in~\cite[Lemma 3.3]{HaVeZa16} it is stated that the derived word to the first letter of a rich word is rich as well.
	The same result for a palindromic factor of a rich word is contained in the proof of~\cite[Theorem 5.5.]{BaPeSt11}.
	
	\begin{lem}[\cite{HaVeZa16,BaPeSt11}]
		\label{againRich}
		Let $\uu$ be a rich infinite word.
		The derived word ${\bf d}_\uu(x) $ to any palindromic factor $x$ of $\uu$ is rich.
	\end{lem}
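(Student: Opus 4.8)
The plan is to avoid the bispecial/marked machinery entirely and instead route the proof through the characterization: \emph{a word is rich if and only if every complete return word to every palindromic factor is itself a palindrome}. I would prove both directions directly from Property Ju. Throughout I assume $\uu$ uniformly recurrent, as is needed for $\dd_\uu(x)$ to be defined. Writing $\RR{\uu}{x}=\{r_1,\dots,r_k\}$, each $r_i$ begins with $x$ and $C_i:=r_ix$ is the associated complete return word to $x$. I would introduce the \emph{return-word morphism} $\varphi\colon\{1,\dots,k\}^*\to\A^*$ by $\varphi(i)=r_i$ with candidate marker $w=x$; then $\varphi$ is injective and $\varphi\bigl(\dd_\uu(x)\bigr)=z^{-1}\uu$. (The structural results Lemma~\ref{palindromPreserved}, Lemma~\ref{lem:wprefix} and Corollary~\ref{preimage} do not use $\A=\B$, so they apply here.)

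First I would prove the forward direction: in a rich word every complete return word $C$ to a palindrome $p$ is a palindrome. Being a factor of a rich word, $C$ is rich, so by Property Ju its longest palindromic suffix $s$ is unioccurrent in $C$. Since $p$ is a palindromic suffix of $C$ we have $|s|\ge|p|$, and as $s,p$ are palindromes, $p$ is a suffix and hence also a prefix of $s$; thus $s$ begins at an occurrence of $p$ in $C$. A complete return word contains exactly two occurrences of $p$ (prefix and suffix), so $s=p$ or $s=C$. The case $s=p$ is impossible, since then $p$ occurs twice, contradicting unioccurrence; hence $s=C$ and $C$ is a palindrome. Applied to $p=x$, this makes each $C_i=\varphi(i)w$ a palindromic complete return word to $w$, so $\varphi\in P_{ret}$ with marker $w=x$.

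Next I would transfer this property to $\dd:=\dd_\uu(x)$. Let $q\in\L(\dd)$ be a palindrome and $D$ a complete return word to $q$ in $\dd$. By Item (3) of Lemma~\ref{palindromPreserved}, $\varphi(q)w$ is a palindrome; by Item (1) of Lemma~\ref{lem:wprefix} it is a prefix of $\varphi(D)w\in\L(\varphi(\dd))\subseteq\L(\uu)$, and Corollary~\ref{preimage} shows that the occurrences of $\varphi(q)w$ inside $\varphi(D)w$ correspond exactly to the two occurrences of $q$ in $D$. Hence $\varphi(D)w$ is a complete return word to the palindrome $\varphi(q)w$ in $\uu$. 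Richness of $\uu$ and the forward direction give that $\varphi(D)w$ is a palindrome, whence $D$ is a palindrome by the converse part of Item (3) of Lemma~\ref{palindromPreserved}. So every complete return word to every palindromic factor of $\dd$ is a palindrome.

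Finally I would prove the backward direction for $\dd$, again via Property Ju: take any factor $h$ of $\dd$ with longest palindromic suffix $s$ and suppose $s$ occurs at least twice in $h$; cutting $h$ at the occurrence of $s$ immediately preceding its suffix occurrence yields a complete return word $D'$ to $s$ that is a suffix of $h$ with $|D'|>|s|$. By the previous paragraph $D'$ is a palindrome, contradicting maximality of $s$. Thus $s$ is unioccurrent, every prefix of every factor of $\dd$ satisfies Property Ju, and $\dd$ is rich. The main obstacle is exactly the reason I sidestep Corollary~\ref{EasyDirection}: although $\varphi\in P_{ret}$, it need not be marked, because distinct return words can share both extremal letters (for instance a rich binary word whose return words to ${\tt 0}$ are ${\tt 01}$ and ${\tt 011}$, both starting with ${\tt 0}$ and ending with ${\tt 1}$), so neither $\rho_\varphi$ nor $\lambda_\varphi$ is injective and the bispecial correspondence of Corollary~\ref{ImagesOfBSfactor} fails on a general alphabet. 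The complete-return-word characterization, proved directly from Property Ju, is what makes the argument go through regardless.
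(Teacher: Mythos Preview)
The paper does not give its own proof of this lemma: it is quoted from \cite{HaVeZa16} and \cite{BaPeSt11}, with only a sentence pointing to where the argument can be found there. So there is nothing in the paper to compare against line by line; your proof has to be judged on its own.

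Your argument is correct and is in fact very close to the one underlying \cite[Theorem~5.5]{BaPeSt11}. The key move is the same: the return-word morphism $\varphi(i)=r_i$ with marker $w=x$ lies in Class~$P_{ret}$ \emph{because} $\uu$ is rich (each $r_ix$ is a palindromic complete return word to $x$), and then the $P_{ret}$ structural lemmas transport palindromicity of complete return words from $\uu$ down to $\dd_\uu(x)$. Your decision to route through the characterization ``rich $\Leftrightarrow$ every complete return word to every palindrome is a palindrome'' (which you prove in both directions from Property~Ju) is exactly the right workaround, and your diagnosis of the obstacle is accurate: Corollary~\ref{EasyDirection} cannot be invoked here because the return-word morphism is generally not marked, so the bispecial correspondence of Corollary~\ref{ImagesOfBSfactor} is unavailable.

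Two minor remarks on presentation rather than correctness. First, in Step~3 the bijection between occurrences of $q$ in $D$ and occurrences of $\varphi(q)w$ in $\varphi(D)w$ really rests on Lemma~\ref{lem:wprefix}(1) (the occurrences of $w$ in $\varphi(D)w$ sit precisely at the block boundaries) together with injectivity of $\varphi$; Corollary~\ref{preimage} alone does not quite say this. Second, it is worth stating once that $\L(\varphi(\dd))\subseteq\L(\uu)$ because $\varphi(\dd)=z^{-1}\uu$ is a suffix of $\uu$; you use this to place $\varphi(D)w$ inside $\L(\uu)$ before invoking the forward direction. Both points are implicit in what you wrote.
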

	
	The assumption that $x$ is palindromic in the previous lemma is necessary, as shown in the following remark.
	
	\begin{remark}
		In~\cite{MePeVu19}, the derived words to bispecial factors of complementary symmetric Rote sequences are described.
		The derived word to each bispecial factor is the coding of a three interval exchange transformation.
		If a bispecial factor $x$ is palindromic, then the interval exchange transformation is given by the symmetric permutation $(3,2,1)$.
		If a bispecial factor $x$ is non-palindromic, then the corresponding permutation is $(2,3,1)$.
		In the latter case the language of the derived word ${\bf d}_\uu(x)$ is not closed under reversal and contains only finitely many palindromes.
		Therefore, ${\bf d}_\uu(x)$ cannot be rich.
		In fact, the defect of ${\bf d}_\uu(x)$ is infinite.
	\end{remark}
	
	The matrix of a morphism $\varphi$ over an alphabet $\A$ of cardinality  $d =\#\A$ is defined as the matrix $M_{\varphi} \in \N^{d\times d}$ with entries
	$$
	\bigl(M_{\varphi}\bigr)_{ab} =
	\text{number of occurrences of the letter $a$ in $\varphi(b)$ for each $a,b \in \A$. }
	$$
	If the morphism $\varphi$ is primitive, then the Perron-Frobenius theorem (see for instance \cite{Meyer}) implies that  the spectral radius of $M_\varphi$, denoted by $\Lambda$, is a simple eigenvalue of $M_\varphi$.
	The eigenvalue $\Lambda$ is dominant, i.e., all other eigenvalues are in modulus strictly smaller than $\Lambda$.
	\medskip

	The dominant eigenvalue has an eigenvector of the form $(\rho_1, \rho_2, \ldots, \rho_d)$ with all positive entries $\rho_a>0$ and $\displaystyle \sum_{a \in \A} \rho_a = 1$.
	If $\uu = u_0 u_1 u_2 \cdots$ is a fixed point of the morphism $\varphi$, then $\rho_a$ equals the density of the letter $a$ in $\uu$, i.e.,
	$$
	\rho_a = \lim_{n \to \infty}\frac{\#\{i < n\, : \,  u_i  = a\}}{n}.
	$$
	
	Combining Lemma~\ref{againRich} with~\cite[Proposition 9]{Du2}, we get the following.
	
	\begin{lem}
		\label{lem:dxfixed}
		Let $\uu$ be a rich word fixed by a primitive substitution $\varphi$, $p$ be a palindromic prefix of $\uu$ and $\Lambda$ be the dominant eigenvalue of $M_\varphi$.
		The derived word ${\bf d}_\uu(p)$ is fixed by a primitive substitution $\eta$ in Class $P_{ret}$ such that the dominant eigenvalue of $M_\eta$ is $\Lambda$.
		In particular, the densities of letters in $\uu$ and ${\bf d}_\uu(p)$ (possibly over alphabets of different sizes) belong both to the algebraic field $\mathbb{Q}(\Lambda)$.
	\end{lem}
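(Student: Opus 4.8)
The plan is to split the statement into its three assertions --- richness of $\dd_\uu(p)$, the existence of a primitive $\eta\in P_{ret}$ fixing it with the prescribed dominant eigenvalue, and the claim on densities --- and to treat them in that order. Richness is immediate: since $p$ is a palindromic factor of the rich word $\uu$, Lemma~\ref{againRich} gives that $\dd_\uu(p)$ is rich. Because $\uu$ is a fixed point of the primitive substitution $\varphi$ and $p$ is a prefix of $\uu$, the theorem of Durand~\cite{Du98} supplies a substitution $\eta$ fixing $\dd_\uu(p)$, and \cite[Proposition~9]{Du2} lets us take $\eta$ primitive with $M_\eta$ sharing the dominant eigenvalue $\Lambda$ of $M_\varphi$ (the derivation multiplies lengths asymptotically by the same factor $\Lambda$). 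It then remains to place $\eta$ in Class $P_{ret}$.

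For the membership $\eta\in P_{ret}$ I would exploit that richness of $\uu$ forces every complete return word to the palindrome $p$ to be a palindrome (this is precisely the mechanism behind Lemma~\ref{againRich}, and is close to Lemma~\ref{alternate}). Writing $C_1,\dots,C_k$ for these complete return words and $r_i$ for the associated return words, the \emph{reconstruction} morphism $\Theta$ from the derived alphabet to $\A$ sending each derived letter $i$ to $r_i$ satisfies $\Theta(\dd_\uu(p))=\uu$ (here $z=\varepsilon$, as $p$ is a prefix) and, since each $\Theta(i)p=C_i$ is a palindromic complete return word to $p$ and distinct return words have distinct images, $\Theta$ itself lies in Class $P_{ret}$ with marker $p$; moreover one has the commutation $\varphi\circ\Theta=\Theta\circ\eta$. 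I would then claim that the first letter $c_0$ of $\dd_\uu(p)$, the label of the return word forming the prefix of $\uu$, is the marker of $\eta$: the point is to check that $c_0$ occurs in each block $\eta(i)$ exactly once and as a prefix, so that $\eta(i)c_0$ is a complete return word to $c_0$ in $\dd_\uu(p)$; since $\dd_\uu(p)$ is rich and $c_0$ is a (trivially palindromic) single letter, this complete return word is automatically a palindrome. Combined with the injectivity of $\eta$ inherited from that of the derivation, this is exactly Definition~\ref{classPret}.

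For the density statement I would argue purely with the Perron--Frobenius data. The vector of letter densities of $\uu$ is the normalised right eigenvector of $M_\varphi$ for $\Lambda$; as $M_\varphi$ has integer entries, the matrix $M_\varphi-\Lambda I$ has entries in $\Q(\Lambda)$, and, $\Lambda$ being a simple eigenvalue, its kernel is a line defined over $\Q(\Lambda)$. Hence one may choose a positive eigenvector in $\Q(\Lambda)^{d}$ and, dividing by the (nonzero, $\Q(\Lambda)$-valued) sum of its entries, obtain the density vector in $\Q(\Lambda)^{d}$. The identical computation for $M_\eta$, whose dominant eigenvalue is again $\Lambda$, places the densities of $\dd_\uu(p)$ in $\Q(\Lambda)$ as well.

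The hard part will be the verification in the second paragraph that $c_0$ occurs exactly once, and as a prefix, in every block $\eta(i)$ --- equivalently, that $\varphi(r_i)$ begins with, and contains only once, the prefix return word $r_{c_0}$. This is where one must use the structure rather than merely manipulate it: I expect to deduce it from the commutation $\varphi\circ\Theta=\Theta\circ\eta$ together with the recognizability recorded in Corollary~\ref{preimage}, which reads off the decomposition of $\varphi(\Theta(i))$ uniquely in terms of the return words and thereby pins down its leading letter $c_0$. Everything else (richness, the eigenvalue, the field membership) is then a direct application of the quoted results.
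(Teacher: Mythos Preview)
Your overall architecture matches the paper's: the paper literally says ``Combining Lemma~\ref{againRich} with~\cite[Proposition~9]{Du2}, we get the following'', so richness of $\dd_\uu(p)$, the existence of a primitive $\eta$ with dominant eigenvalue $\Lambda$, and the Perron--Frobenius argument for densities are exactly the intended ingredients, and you handle them correctly. Where you go beyond the paper is in trying to spell out why Durand's $\eta$ lies in Class~$P_{ret}$; the paper does not give any detail here.

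The gap is in your proposed marker. Your claim that $c_0$---the first letter of $\dd_\uu(p)$---is the marker of $\eta$, i.e.\ that $c_0$ occurs exactly once (as a prefix) in every $\eta(i)$, is false in general. Take $\uu=\mathbf{f}$ the Fibonacci word, $\varphi=\phi^{2}$ (so ${\tt 0}\mapsto{\tt 010}$, ${\tt 1}\mapsto{\tt 01}$), and $p={\tt 0}$. The return words to ${\tt 0}$ are $r_a={\tt 01}$ and $r_b={\tt 0}$, the derived word is Fibonacci over $\{a,b\}$ with $c_0=a$, and Durand's $\eta$ is $a\mapsto aba$, $b\mapsto ab$. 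Here $a$ occurs \emph{twice} in $\eta(a)$, so $\eta(a)a={\tt a\,b\,a\,a}$ has three occurrences of $a$ and is not a complete return word to $a$. The morphism $\eta$ is nonetheless in $P_{ret}$, but with marker $aba$, not $c_0$. The tool you invoke, Corollary~\ref{preimage}, only gives \emph{uniqueness} of the decomposition of $\varphi(r_i)p$ into return words (hence determines $\eta(i)$), not any bound on how often $c_0$ appears in that decomposition; the commutation $\varphi\circ\Theta=\Theta\circ\eta$ pins down the leading letter of $\eta(i)$ as $c_0$ (via the common prefix $\varphi(p)$), but says nothing about later occurrences.

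So the $P_{ret}$ membership cannot be obtained with the single-letter marker you propose. If you want to make this part rigorous you must identify the correct marker---in general a longer palindromic prefix of $\dd_\uu(p)$ determined by how far the common prefix $\varphi(p)$ reaches into the return-word decomposition---or argue the $P_{ret}$ claim by another route. The paper itself leaves this verification to the cited references and does not supply it explicitly.
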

	
	Lemma~\ref{lem:dxfixed} says that the rich fixed points of $\varphi$ and $\eta$ are, in a way, not far from each other.
	In Example~\ref{Ex:PeriodDoubling}, the period-doubling substitution $\xi$ and the substitution $\eta$ have the same dominant eigenvalue $\Lambda = 2$.
	The densities of letters are $\rho({\tt 0}) = \tfrac13$, $\rho({\tt 1}) = \tfrac23$ in the period doubling sequence and $\rho({\tt r}) = \tfrac12$, $\rho({\tt s}) = \tfrac12$ in ${\bf d}_\uu(1)$.
	
	Let us show how Theorem~\ref{theo:psiurich} enables us, starting from a binary rich word $\uu$, to construct new rich words which differs substantially from $\uu$.
	We will use the fact that every aperiodic rich word has infinitely many palindromic bispecial factors (see \cite{LaPe16}).
	
	\subsection*{Construction:}

	Let $\uu$ be a binary rich word and $w$ a palindromic bispecial factor of $\uu$.
	Let $r_0$ and $r_1$ be two return words to $w$ in $\uu$ such that ${\rm Lst}(r_0)\ne {\rm Lst}(r_1)$ and $r_0r_1 \in \L(\uu)$.
	Then the substitution $\psi$ defined by $\psi({\tt 0}) = r_0$ and $\psi({\tt 1}) = r_1$ belongs to $P_{ret}$, its marker is $w$, $\psi = \psi_R$ and $\psi({\tt 0}) \psi({\tt 1}) w$ is rich.
	By Theorem~\ref{theo:psiurich}, every fixed point of $\psi$ is rich.
	
	\begin{example}
		Consider the fixed point $\vv$ of the substitution $\eta$ from Example~\ref{Ex:PeriodDoubling}, i.e.
		$$
		\vv = {\tt rssrrrssrssrssrrrssrrrssrrrs} \cdots.
		$$
		Just considering the short prefix of $\vv$ above, we can see that the palindromic factor ${\tt rr}$ has three return words, namely $r_0 = {\tt r}$, $r_1 = {\tt rrssrssrss}$, and $r_2 = {\tt rrss}$.
		Moreover, ${\rm Lst}(r_0) \ne {\rm Lst}(r_1)$ and ${\rm Lst}(r_0) \ne {\rm Lst}(r_2)$.
		Since $r_1 r_0$ and $r_2 r_0$ are factors of $\vv$, the above described construction gives two possible substitutions:
		$$
		\alpha:
		\left\{
		\begin{array}{l}
			{\tt r} \mapsto {\tt rrssrssrss} \\
			{\tt s} \mapsto {\tt r}
		\end{array}
		\right.
		\qquad \mbox{and} \qquad
		\beta:
		\left\{
		\begin{array}{l}
			{\tt r} \mapsto {\tt rrss} \\
			{\tt s} \mapsto {\tt r}
		\end{array}.
		\right.
		$$
		It is easy to verify that the dominant eigenvalue of $M_\alpha$ is $ \Lambda_\alpha =  2 + \sqrt{10}$, whereas $\Lambda_\beta = 1 + \sqrt{3}$.
		Therefore, fixed points of $\eta$, $\alpha$ and $\beta$ are rich and differ substantially.
		We can continue applying the same construction to the fixed points of the new substitutions.
	\end{example}

	\section{Comments and open questions}
	\label{sec:conclusions}
	
	Theorem~\ref {theo:psiurich} gives a simple criterion to decide whether a fixed point of a binary primitive morphism $\varphi$ in Class $P_{ret}$ is rich: namely, one has to check richness of the finite word $\varphi_R({\tt 01})w$.
	It would be helpful to find a similar criterion for morphisms over larger alphabets starting, for instance, with marked ternary morphisms in Class $P_{ret}$.
	The main technical obstacle is the absence of a ternary analogue of Lemma~\ref{Montreal}.
	
\medskip	
	On ternary alphabet, every word with language closed under reversal with factor complexity ${\mathcal{C}}(n) = 2n+1$, i.e., having exactly $2n+1$ factors of length $n$ for every integer $n \ge 0$, is rich (see~\cite{BaPeSt09}).
	This can be generalized to the class of \emph{dendric sets}, i.e., factorial sets such that the extension graph of every word in the set is acyclic and connected.
	Indeed, all dendric sets closed under reversal are rich (see~\cite{specular}).
	It would be useful to characterize all dendric sets which are closed under reversal.
	
\medskip	
	A morphism over an alphabet $\A$ is called \emph{tame} if it belongs to the submonoid generated by the permutations of letters of $\A$ and the morphisms $\alpha_{a,b}, \widetilde{\alpha}_{a,b}$ defined for $a,b \in \A$ with $a \ne b$ by
	$$
	\alpha_{a,b}(c) =
	\left\{
	\begin{array}{ll}
		ab	&	\mbox{if } c = a, \\
		c	&	\mbox{otherwise}
	\end{array}
	\right.
	\quad \mbox{and} \quad
	\widetilde{\alpha}_{a,b}(c) =
	\left\{
	\begin{array}{ll}
		ba	&	\mbox{if } c=a, \\
		c	&	\mbox{otherwise}
	\end{array}
	\right.
	$$
	(see, for instance~\cite{maximalbifixdecoding}, where the definition is given for morphisms of the free group instead of the free monoid).
	All Arnoux-Rauzy morphisms are tame, but the opposite is not true (the two classes coincide only when $\#\A \le 2$).
	As an example we can consider the morphism
	$$
	\psi = \pi_{({\tt c \, a \, b})} \circ \alpha_{{\tt a,b}} \circ \widetilde{\alpha}_{{\tt a,b}}
	\left\{
	\begin{array}{l}
		{\tt a} \mapsto {\tt aca} \\
		{\tt b} \mapsto {\tt a} \\
		{\tt c} \mapsto {\tt b}
	\end{array}
	\right.
	$$
	on the ternary alphabet $\{ {\tt a,b,c} \}$.
	This morphism is tame but not Arnoux-Rauzy.
	It also belongs to Class $P$ and therefore sends a language closed under reversal to a language closed under reversal.

\medskip	
	However not all tame morphisms preserve richness.
	An interesting question would be to characterize tame morphisms that map rich words to rich words.
	
\medskip	
	Baranwal and Shallit in~\cite{Bar} and~\cite{BaSh19} looked for infinite rich words over $\A$ with the minimal critical exponent.
	They found lower bounds on the minimum value for cardinality of the alphabet $k = 2, 3, 4, 5$.
	In~\cite{CuMoRa20}, Curie, Mol and Rampersad proved that for $k=2$ the suggested bound is the best possible.
	Solving some of the previous questions about richness, may help determining the optimal lower bound for $k \ge 3$.
	
 \subsection*{Acknowledgement}
	The research received funding from the Ministry of Education, Youth and Sports of the Czech Republic through the projects CZ.02.1.01/0.0/0.0/16\_019/0000765 and CZ.02.1.01/0.0/0.0/16\_019/0000778.

\bibliographystyle{fundam}

\begin{thebibliography}{10}
\providecommand{\url}[1]{\texttt{#1}}
\providecommand{\urlprefix}{URL }
\expandafter\ifx\csname urlstyle\endcsname\relax
  \providecommand{\doi}[1]{doi:\discretionary{}{}{}#1}\else
  \providecommand{\doi}{doi:\discretionary{}{}{}\begingroup
  \urlstyle{rm}\Url}\fi
\providecommand{\eprint}[2][]{\url{#2}}
		
		\bibitem{ArRo91}
		 Arnoux P,  Rauzy G.
		\emph{Repr\'esentation g\'eom\'etrique de suites de complexit\'e 2n + 1},
		Bulletin de la Société Mathématique de France,
		1991. 119(2):199--215. doi.org/10.24033/bsmf.2164.
		
		\bibitem{BaPeSt09}
		\v{L}. Balkov\'a, E. Pelantov\'a, \v S. Starosta,
		\emph{Palindromes in infinite ternary words},
		RAIRO - Theoretical Informatics and Applications,  2009. 43(4):687--702.
      doi:10.1051/ita/2009016.
		
		\bibitem{BaPeSt10}
		 Balkov\'a  \v{L},  Pelantov\'a E, Starosta  \v S.
		\emph{Sturmian Jungle (or Garden?) on Multiliteral Alphabets},
		RAIRO - Theoretical Informatics and Applications, 2010. 	44(4):443--470.
     doi:10.1051/ita/2011002.
		
		\bibitem{BaPeSt11}
			 Balkov\'a  \v{L},  Pelantov\'a E, Starosta  \v S.
		\emph{Infinite Words with Finite Defect},
		Advances in Applied Mathematics,  2011. 	47(3):562--574.
      doi:10.1016/j.aam.2010.11.006.
		
		\bibitem{Bar}
		 Baranwal AR.
		\emph{Decision Algorithms for Ostrowski-Automatic Sequences},
		Master Thesis, 	University of Waterloo (2020).   ID: 229124363.
		
		\bibitem{BaSh19}
		 Baranwal AR,  Shallit J.
		\emph{Repetitions in infinite palindrome-rich words},
		in: R. Mercas and D. Reidenbach (eds.),
		Proceedings WORDS 2019, Lecture Notes in Computer Science,
		vol. 11682, Springer, 2019 pp. 93--105.  doi:10.1007/978-3-030-28796-2\_7.
		
		\bibitem{acyclic}
		 Berth{\'e} V,  De Felice C,  Dolce F,  Leroy J,  Perrin D,  Reutenauer C,  Rindone G.
		\emph{Acyclic, connected and tree sets},
		Monatshefte für Mathematik,	2015. 176:521--550.  doi:10.1007/s00605-014-0721-4.
		
		\bibitem{maximalbifixdecoding}
         Berth{\'e} V,  De Felice C,  Dolce F,  Leroy J,  Perrin D,  Reutenauer C,  Rindone G.
		\emph{Maximal bifix decoding},	Discrete Mathematics, 2015.	338(5):725--742.
          doi:10.1016/j.disc.2014.12.010.
		
		\bibitem{specular}
       Berth{\'e} V,  De Felice C, Delecroix V,  Dolce F,  Leroy J,  Perrin D,  Reutenauer C,  Rindone G.
 		\emph{Specular sets},
		Theoretical Computer Science,  2017. 684:3--28.  doi:10.1016/j.tcs.2017.03.001.
		
		\bibitem{BlBrLaVu110}
		 Blondin Mass\'e A,  Brlek S,  Labb\'e S,  Vuillon L.
		\emph{Palindromic complexity of codings of rotations},
		Theoretical Computer Science, 2011.	412(46):6455--6463.
       doi:10.1016/j.tcs.2011.08.007.
		
		\bibitem{BrHaNiRe04}
		 Brlek S,  Hamel S,  Nivat M,  Reutenauer C.
		\emph{On the palindromic complexity of infinite words},
		in: J. Berstel, J. Karhumäki,D. Perrin (Eds.),
		Combinatorics on Words with Applications,
		International Journal of Foundations of Computer Science, 2004. 15(2):293--306.
         doi:10.1142/S012905410400242X.
		
		\bibitem{BuDLuGlZa09}
		 Bucci M,  De Luca A,  Glen A, Zamboni LQ.
		\emph{A connection between palindromic and factor complexity using return words},
		Advances in Applied Mathematics. 2009. 	42(1):60--74.
        doi:10.1016/j.aam.2008.03.005.
		
		\bibitem {Ca94}
		 Cassaigne J.
		\emph{An algorithm to test if a given circular HD0L-language avoids a pattern},
		in: IFIP World Computer Congress’94,  Elsevier, 1994. pp. 	459--464. ID:17440661.
		
		\bibitem {CuMoRa20}
		 Currie JD,  Mol L,  Rampersad N.
		\emph{The repetition threshold for binary rich words},
		Discrete Mathematics \& Theoretical Computer Science, 2020. vol. 22 no 1.
         doi:10.23638/DMTCS-22-1-6.
		
		\bibitem{DrJuPi01}
		 Droubay X,  Justin J,  Pirillo G.
		\emph{Episturmian words and some constructions of de Luca and Rauzy},
		Theoretical Computer Science, 2001. 255(1-2):539--553.
         doi:10.1016/S0304-3975(99)00320-5.
		
		\bibitem{Du2}
		 Durand F.
		\emph{A generalization of Cobham Theorem},
		Theory of Computing Systems, 1998. 31:169--185. Id: hal-00303322.

		
		\bibitem{Du98}
		 Durand F.
		\emph{A characterization of substitutive sequences using return words},
		Discrete Mathematics, 1998. 	179(1-3):89--101. doi:10.1016/S0012-365X(97)00029-0.
		
		\bibitem{Fr99}
		 Frid A.
		\emph{Applying a uniform marked morphism to a word},
		Discrete Mathematics and Theoretical Computer Science,
		1999. 3(3):125--139.  doi:10.46298/dmtcs.255.
		
		\bibitem{FrPuZa13}
		 Frid A,  Puzynina S, Zamboni LQ.
		\emph{On palindromic factorization of words},
		Advances in Applied Mathematics, 2013.	50(5)	737--748. doi:10.1016/j.aam.2013.01.002.

		\bibitem{GlJu09}
		 Glen A,  Justin J.
		\emph{Episturmian words: a survey},
		RAIRO - Theoretical Informatics and Applications,
		2009. 43(3)403--442.  doi:10.1051/ita/2009003.
		
		\bibitem{GlJuWiZa09}
		 Glen A,  Justin J, Widmer S, Zamboni LQ.
		\emph{Palindromic richness}, European Journal of Combinatorics,
		2009. 30(2):510--531.  doi:10.1016/j.ejc.2008.04.006.
		
		\bibitem{GuShSh16}
		 Guo C,  Shallit J, Shur AM.
		\emph{Palindromic rich words and run-length encodings},
		Information Processing Letters, 2016. 116(12):735--738.
       doi:10.1016/j.ipl.2016.07.001.
		
		\bibitem{HaVeZa16}
		 Harju T,  Vesti J,  Zamboni LQ.
		\emph{On a question of Hof, Knill and Simon on palindromic substitutive systems},
		Monatshefte für Mathematik, 2016.	179:379--388. Id: hal-01829318.
		
		\bibitem{HoKnSi95}
		 Hof A,  Knill O,  Simon B.
		\emph{Singular continuous spectrum for palindromic Schrödinger operators},
		Communications in mathematical physics, 1995. 174(1):149--159.
       doi:10.1007/BF02099468.
		
		\bibitem{JuPi02}
		 Justin J,  Pirillo G.
		\emph{Episturmian words and episturmian morphisms},
		Theoretical Computer Science, 2002. 276(1-2):281--313.
     doi:10.1016/S0304-3975(01)00207-9.
		
		\bibitem{Kl12}
		 Klouda K.
		\emph{Bispecial factors in circular non-pushy D0L languages},
		Theoretical Computer Science, 2012. 445(3):63--74.
      doi:10.1016/j.tcs.2012.05.007.
		
		\bibitem{LaPe16}
		 Labb\'e S,  Pelantov\'a E.
		\emph{Palindromic sequences generated from marked morphisms},
		European Journal of Combinatorics, 2016.	51:200--214.
        doi:10.1016/j.ejc.2015.05.006.

		\bibitem{Meyer}
          Meyer CD.
          Matrix Analysis and Applied Linear Algebra, SIAM, 2000.
          ISBN-13:978-0898714548, 10:0898714540.

		\bibitem{Lothaire}
		 Lothaire M.
		\emph{Algebraic Combinatorics on Words}, Cambridge University Press,
		2002. ISBN-13:978-0521180719, 10:0521180716.
		
		\bibitem{MePeVu19}
		 Medkov\'a K,  Pelantov\'a E,  Vuillon L.
		\emph{Derived sequences of complementary symmetric Rote sequences},
		RAIRO - Theoretical Informatics and Applications,
		2019. 53(3-4):125--151. doi:10.1051/ita/2019004.

		
		\bibitem{PeSt17}
		Pelantov\'a E,  Starosta \v S.
		\emph{On Words with the Zero Palindromic Defect},
		In: Brlek S., Dolce F., Reutenauer C., Vandomme É. (eds),
		Combinatorics on Words. WORDS 2017,
		Lecture Notes in Computer Science, 	vol. 10432. Springer, Cham, 2017.
		doi:10.1007/978-3-319-66396-8\_7.

		\bibitem{PytheasFogg}
		 Pytheas Fogg N.
		\emph{Substitutions in dynamics, arithmetics and combinatorics},
		volume 1794 of Lecture Notes in Mathematics. Springer-Verlag (2002).
		Edited by V. Berth{\'e}, S. Ferenczi, C. Mauduit and A. Siegel.
		ISBN-13:978-3540441410, 10:3540441417.

		\bibitem{RuSh}
		 Rubinchik M,  Shur AM.
		\emph{EERTREE: An efficient data structure for processing palindromes in strings},
		European J. Combin., 2018. 	68:249--265.  doi:10.1016/j.ejc.2017.07.021.

		
		\bibitem{Ru17}
		Rukavicka J.
		\emph{On the number of Rich Words},
		In E. Charlier, J. Leroy, M. Rigo (Eds),
		DLT 2017, Lecture Notes in Computer Science,
		vol 10396, Springer (2017),
		345--352.
		
		\bibitem{St16}
		 Starosta \v{S}.
		\emph{Morphic images of episturmian words having finite palindromic defect},
		European Journal of Combinatorics, 2016. 51:359--371.
		
		\bibitem{Ve14}
		 Vesti J.
		\emph{Extensions of rich words}, Theoretical Computer Science,
		2014. 548(4):	14--24. doi:10.1016/ j.tcs.2014.06.033.

	\end{thebibliography}

\end{document}